\documentclass[journal,twocolumn]{IEEEtran}

\IEEEoverridecommandlockouts

\usepackage[cmex10]{amsmath} 
\usepackage{amsfonts,acronym,amssymb,amsthm,dsfont}
\usepackage{graphicx}
\usepackage{url}
\usepackage{cite}
\usepackage{hyperref}
\usepackage{braket}
\usepackage{physics}
\usepackage{float}
\usepackage{dblfloatfix}
\usepackage{enumerate}
\usepackage{lipsum}
\usepackage{diagbox}
\usepackage[dvipsnames]{xcolor}
\usepackage{mathrsfs}
\usepackage{bm}
\usepackage{stackengine}


\makeatletter
\def\blfootnote{\xdef\@thefnmark{}\@footnotetext}
\makeatother

\newcounter{mytempeqcounter}

\allowdisplaybreaks
\allowbreak

\makeatletter
\NewDocumentCommand{\seteqnum}{o}{%
  \IfValueTF{#1}
    {\textup{\tagform@{#1}}}
    {\incr@eqnum \print@eqnum}
}
\NewCommandCopy{\ltxlabel}{\ltx@label}
\makeatother

\newcommand{\calA}{\mathcal{A}}

\newcommand{\calB}{\mathcal{B}}

\newcommand{\calC}{\mathcal{C}}

\newcommand{\calD}{\mathcal{D}}
\newcommand{\bbD}{\mathbb{D}}

\newcommand{\calE}{\mathcal{E}}
\newcommand{\bbE}{\mathbb{E}}

\newcommand{\calH}{\mathcal{H}}

\newcommand{\calI}{\mathcal{I}}
\newcommand{\bbI}{\mathbb{I}}

\newcommand{\calJ}{\mathcal{J}}

\newcommand{\calK}{\mathcal{K}}

\newcommand{\calL}{\mathcal{L}}

\newcommand{\calM}{\mathcal{M}}

\newcommand{\calN}{\mathcal{N}}
\newcommand{\bbN}{\mathbb{N}}

\newcommand{\calP}{\mathcal{P}}
\newcommand{\bbP}{\mathbb{P}}

\newcommand{\calR}{\mathcal{R}}
\newcommand{\bbR}{\mathbb{R}}

\newcommand{\calS}{\mathcal{S}}

\newcommand{\calU}{\mathcal{U}}

\newcommand{\calX}{\mathcal{X}}

\newcommand{\calZ}{\mathcal{Z}}

\newcommand\ubar[1]{\stackunder[1.1pt]{$#1$}{\rule{1.2ex}{.08ex}}}
\newcommand{\pr}[1]{\ensuremath{\left({#1}\right)}}%

\DeclareMathOperator{\tra}{Tr}
\DeclareMathOperator{\D}{D}

\DeclareMathOperator{\C}{C}
\DeclareMathOperator{\F}{F}
\DeclareMathOperator{\Pu}{P}

\newcommand{\den}[2]{\ensuremath{\ket{#1}{\hspace{-1.6mm}}\bra{#2}}}

\newtheorem{theorem}{Theorem}
\newtheorem{corollary}{Corollary}
\newtheorem{definition}{Definition}

\newtheorem{remark}{Remark}
\newtheorem{spc}{Special Case}

\newtheorem{lemma}{Lemma}[]

\newcommand{\indic}[1]{\ensuremath{\mathds{1}}}
\newcommand{\card}[1]{\ensuremath{\left|{#1}\right|}}   

\newcommand{\sbra}[2]{\ensuremath{\left[{#1}{\,:\,}{#2}\right]}}%
\newcommand{\sbr}[1]{\ensuremath{\left[{#1}\right]}}%

\acrodef{ACDIS}[ACDIS]{Adaptive Communication Decision and Information Systems}
\acrodef{AEP}{Asymptotic Equipartition Property}
\acrodef{AoA}{Angle of Arrival}
\acrodef{AWGN}{Additive White Gaussian Noise}
\acrodef{AVC}[AVC]{Arbitrarily Varying Channel}
\acrodef{BER}{Bit-Error-Rate}
\acrodef{BEC}{Binary Erasure Channel}
\acrodef{BPSK}{Binary Phase-Shift Keying}
\acrodef{BSC}{Binary Symmetric Channel}
\acrodef{RV}{Random Variable}
\acrodefplural{RV}{Random Variables}
\acrodef{JTE}{Joint Typical Encoder}
\acrodef{BICM}[BICM]{Bit-Interleaved Coded-Modulation}
\acrodef{CDF}[CDF]{Cumulative Distribution Function}
\acrodef{CGF}[CGF]{Cumulant Generating Function}
\acrodef{CLT}[CLT]{Central Limit Theorem}
\acrodef{CSI}[CSI]{Channel State Information}
\acrodef{DMC}[DMC]{Discrete Memoryless Channel}
\acrodef{DMS}[DMS]{Discrete Memoryless Source}
\acrodef{ERM}[ERM]{Empirical Risk Minimization}
\acrodef{FER}[FER]{Frame Error Rate}
\acrodef{ICA}[ICA]{Independent Component Analysis}
\acrodef{iid}[i.i.d.]{independent and identically distributed}
\acrodef{IoT}[IoT]{Internet of Things}
\acrodef{KKT}[KKT]{Karush-Kuhn Tucker}
\acrodef{LASSO}[LASSO]{Least Absolute Shrinkage and Selection Operator}
\acrodef{LPD}[LPD]{Low Probability of Detection}
\acrodef{LDPC}[LDPC]{Low-Density Parity-Check}
\acrodef{LLMS}[LLMS]{Linear Least Mean Square}
\acrodef{LMS}[LMS]{Least Mean Square}
\acrodef{MAC}[MAC]{multiple-access channel}
\acrodef{MGF}[MGF]{Moment Generating Function}
\acrodef{MLC}[MLC]{Multi-Level Coding}
\acrodef{MLE}[MLE]{Maximum Likelihood Estimate}
\acrodef{MIMO}[MIMO]{Multiple-Input Multiple-Output}
\acrodef{MISO}{Multiple-Input Single-Output}
\acrodef{MSD}[MSD]{Multi-Stage Decoding}
\acrodef{MMSE}[MMSE]{Minimum Mean-Square Error}
\acrodef{PAC}[PAC]{Probably Approximately Correct}
\acrodef{PCA}[PCA]{Principal Component Analysis}
\acrodef{PDF}[PDF]{Probability Density Function}
\acrodef{PMF}[PMF]{Probability Mass Function}
\acrodef{PPM}[PPM]{Pulse Position Modulation}
\acrodef{PSD}{Power Spectral Density}
\acrodef{PSK}{Phase Shift Keying}
\acrodef{QKD}{Quantum Key Distribution}
\acrodef{ROC}{Receiver Operating Characteristic}
\acrodef{CVQKD}{Continuous-Variable \ac{QKD}}
\acrodef{QPSK}{Quadrature Phase-Shift Keying}
\acrodef{RV}{random variable}
\acrodef{SIMO}{Single-Input Multiple-Output}
\acrodef{SNR}{Signal-to-Noise Ratio}
\acrodef{SVM}[SVM]{Support Vector Machine}
\acrodef{POVM}{Positive Operator-Valued Measure}
\acrodefplural{POVM}{Positive Operator-Valued Measures}
\acrodef{wrt}[w.r.t.]{with respect to}
\acrodef{WSS}{Wide Sense Stationary}
\acrodef{RHS}{Right Hand Side}
\acrodef{LHS}{Left Hand Side}
\acrodef{CPTP}{Completely Positive and Trace Preserving}
\acrodef{ADSI}[ADSI]{Action-Dependent State Information}

\sloppy
\begin{document}

\title{Covert Communication and Key Generation Over Quantum State-Dependent Channels}

\author{
\IEEEauthorblockN{Hassan ZivariFard and R\'{e}mi A. Chou and Xiaodong Wang}\\
\thanks{H.~ZivariFard and X.~Wang are with the Department of Electrical Engineering, Columbia University, New York, NY 10027. R.~Chou is with the Department of Computer Science and Engineering, The University of Texas at Arlington, Arlington, TX 76019. The work of H.~ZivariFard and X.~Wang is supported in part by the U.S. Office of Naval Research (ONR) under grant N000142112155, and the work of R.~Chou is supported in part by NSF grant CCF-2047913. E-mails: \{hz2863, xw2008\}@columbia.edu and remi.chou@uta.edu. Part of this work was presented at the 2024 IEEE Information Theory Workshop.
}
}
\maketitle
\date{}
\begin{abstract}
We study covert communication and covert secret key generation with positive rates over quantum state-dependent channels. Specifically, we consider fully quantum state-dependent channels when the transmitter shares an entangled state with the channel. We study this problem setting under two security metrics. For the first security metric, the transmitter aims to communicate covertly with the receiver while simultaneously generating a covert secret key, and for the second security metric, the transmitter aims to transmit a secure message covertly and generate a covert secret key with the receiver simultaneously. Our main results include one-shot and asymptotic achievable positive covert-secret key rate pairs for both security metrics. Our results recover as a special case the best-known results for covert communication over state-dependent classical channels. To the best of our knowledge, our results are the first instance of achieving a positive rate for covert secret key generation and the first instance of achieving a positive covert rate over a quantum channel. Additionally, we show that our results are optimal when the channel is classical and the state is available non-causally at both the transmitter and the receiver. 
\end{abstract}

\section{Introduction}
\label{sec:Intro} 
Covert communication guarantees undetectable  communication~\cite{LPD_on_AWGN,Reliable_Deniable_Comm,LPD_by_Resolvability,LPD_over_DMC}. It is well known that in a
point-to-point classical \ac{DMC}, it is possible to reliably and covertly transmit at most on the order of $O(\sqrt{n})$  bits over $n$ channel uses \cite{LPD_by_Resolvability,LPD_over_DMC}, while the encoder and the decoder need to share a secret key on the order of $O(\sqrt{n})$ bits \cite{LPD_by_Resolvability}. However, positive covert communication rates are achievable for point-to-point classical \acp{DMC} \cite{LPD_over_DMC} when the symbol that the transmitter sends over the channel in the no-communication mode, referred to as $x_0\in\calX$, is redundant, i.e., it is possible to induce the same distribution as the distribution induced by $x_0\in\calX$ on the warden's channel observation with the channel input symbols $\{x\in\calX\}$. 
Motivated by this result, it has been shown that it is also possible to achieve positive covert rates over classical channels when a friendly jammer is present \cite{UninformedJammer,ISIT22,ISIT21,MyDissertation}, when the transmitter has access to \ac{CSI} \cite{Covert_With_State,Keyless22}, when the transmitter has access to  \ac{ADSI} \cite{ISIT23,Action_Covert}, when the warden has uncertainty about the statistical characterization of its channel~\cite{Lee15,Deniable_ITW14}, and when there is a cooperative user that knows the message or the transmitter's codeword \cite{ISIT22,MyDissertation,Action_Covert}. 

Existing works on covert communication and covert secret key generation over quantum channels show that optimal rates obey the square root law. Specifically, covert communication over bosonic channels is studied in \cite{LPD_on_bosonic,Wang23}, where the authors show that the covert capacity follows the square root law. Additionally, \cite{Sheikholeslami16,Wang16,Covert_Quantum16,Bullock_25} show that the covert capacity of classical-quantum point-to-point channels follows the square-root law. We note that even when the transmitter and the receiver are entangled, the covert capacity vanishes when the block length grows \cite{Entangled_Bosonic}. Covert quantum key distribution over quantum channels is studied in \cite{Covert_Secret_Key19,Covert_Secret_Key20,Covert_Secret_Key_Bos}, where the authors show that the capacity follows the square root law.

In this paper, unlike previous works, we show that it is possible to achieve positive covert rates and positive covert secret key generation rates over fully quantum state-dependent channels when the transmitter shares an entangled state with the channel. 
We study this problem under two different security metrics. For the first metric, the transmitter aims to communicate covertly with the receiver and simultaneously generate a covert secret key with the receiver. For the second metric, the transmitter aims to communicate both securely and covertly with the receiver and generate a covert secret key with the receiver, simultaneously. 
We present one-shot achievable covert rate regions and derive their counterparts in the asymptotic regime for both security metrics. 
The special case of our achievable rate regions for the classical asymptotic regime recovers the best achievable rates, derived in \cite{Covert_With_State} and \cite{Keyless22}, for covert communication over classical state-dependent channels. We also show that our results are optimal when the channel is classical and the state is available non-causally at both the transmitter and the receiver. To the best of our knowledge, our results are the first fundamental limits of covert communication with positive rates over quantum channels, and the first fundamental limits for covert secret key generation with positive rates over both classical and quantum channels.

The problem studied in this paper can also be considered as covert communication over a channel with a friendly jammer, where the transmitter has access to a state entangled with the quantum state that the jammer transmits over the channel. Our coding scheme in the one-shot regime is based on pinching \cite{QIT_Hayashi,Hayashi02}, through which we simultaneously perform channel resolvability to ensure covertness, and the Gel'fand-Pinsker type of encoding for aligning the codeword according to the \ac{CSI}. Compared with the achievability schemes in \cite{Dupuis09,DupuisDissertation} the main challenge is exploiting the \ac{CSI} for both message transmission and secret key generation and ensuring that our coding scheme is covert \ac{wrt} the warden, and compared to \cite{AshnuHayashi2020} the main challenge is exploiting the \ac{CSI} for both message transmission and secret key generation and inducing the distribution corresponding to the no communication mode on the warden's channel output observation. 

Some related works that do not consider the covertness constraint include secret key generation from classical \ac{CSI} \cite{SKA_with_State,SKA_with_State11} and secure communication over classical state-dependent channels \cite{WTC_With_State,Chia_WTC_with_State,Ziv_WTC_with_NonCausaul_CSI}. The problem of secure communication and secret key generation over state-dependent classical channels is studied in \cite{Bunin20,Han_WTC_with_CSI21}.   Communication over quantum channels with \ac{CSI} has first been studied in \cite{Dupuis09,DupuisDissertation}, and secure communication over quantum state-dependent channels has been studied in \cite{AshnuHayashi2020}. Some other work considered covert secret key generation over a state-dependent discrete memoryless classical channel with one-way public discussion in which the \ac{CSI} is not known at the transmitter and the warden may arbitrarily choose the channel state \cite{Covert_SKG20}, where the authors show that the covert capacity follows the square root law and provide inner and outer bounds on the covert capacity.

\textit{Notation:}
Let $\bbN^+$ be the set of positive natural numbers, $\bbR$ be the set of real numbers, and define $\bbR^+\triangleq\{x\in\bbR:x\geq 0\}$ and $\bbR^{++}\triangleq\mathbb{R}^+\backslash\{0\}$. We denote the set of positive semi-definite operators on a finite-dimensional Hilbert space $\calH$ by $\calP(\calH)$ and denote the set of quantum states by $\calD(\calH)\triangleq\{\rho\in\calP(\calH):\tra[\rho]=1\}$.  
We also denote the space of the bounded linear operators on $\calH$ with $\calL(\calH)$. For $\rho,\sigma\in\calD(\calH)$, the fidelity distance \cite{Jozsa94,Uhlmann76} between these two quantum states is denoted by $\F(\rho,\sigma)\triangleq\lVert\sqrt{\rho}\sqrt{\sigma}\rVert_1$, where $\lVert\rho\rVert_1\triangleq\tra\left[\sqrt{\rho^\dagger\rho}\right]$, and the purified distance \cite{Rastegin02,Tomamichel10} between these two quantum states is denoted by $\Pu(\rho,\sigma)\triangleq\sqrt{1-\F^2(\rho,\sigma)}$. For $\rho,\sigma\in\calD(\calH)$ and $\alpha\in(-1,0)\cup(0,\infty)$ the sandwiched R\'{e}nyi relative entropy \cite{Wilde14} is defined as $\ubar{\D}_{1+\alpha}\big(\rho\lVert\sigma\big)\triangleq\frac{1}{\alpha}\log\tra\left[\left(\sigma^{-\frac{\alpha}{2(1+\alpha)}}\rho\sigma^{-\frac{\alpha}{2(1+\alpha)}}\right)^{1+\alpha}\right]$ and the quantum relative entropy is defined as $\bbD(\rho\lVert\sigma)=\tra\sbr{\rho\pr{\log\rho-\log\sigma}}$. The identity operator  on some Hilbert space $\calH$ is denoted by $\bbI$. 
We use $\tau$ to denote a quantum density matrix and use $\ket{\tau}$ to denote the associated quantum state. For a finite set $\calX$, the relative entropy between  two distributions $P_X$ and $Q_X$ is denoted $\bbD(P_X\lVert Q_X)=\sum_{x\in\calX}P_X(x)\log\frac{P_X(x)}{Q_X(x)}$. For classical random variables, superscripts indicate the dimension of a vector, e.g., $X^n$, and $X_i^j$ denotes $(X_i,X_{i+1},\dots,X_j)$.

\section{Problem Statement}
\label{sec:Problem_Statement}
\subsection{One-Shot Regime} 
Fig.~\ref{fig:System_Model} illustrates a communication system designed to transmit classical information covertly over a quantum state-dependent channel when the \ac{CSI} is available at the transmitter in the sense that the transmitter shares an entangled state $\ket{\rho_{\bar{S}S}}$ with the channel. We define codes as follows. 
\begin{definition}
\label{defi:code}
A $(2^R,2^{R_K},1)$ code for the quantum state-dependent channel $\calN_{AS\to BE}$ when the \ac{CSI} is available at the encoder, consists of the following:
\begin{itemize}
    \item a message set $\calM\triangleq\sbra{1}{\left\lfloor{2^R}\right\rfloor}$ and a key set $\calK\triangleq\sbra{1}{\left\lfloor{2^{R_K}}\right\rfloor}$;
    \item an encoding map at the transmitter, which is a set of quantum channels $\calE\triangleq\{\calE^{(m)}_{\bar{S}\to AK}\}_{m\in\calM}$ that maps a state $\bar{S}\in\calD(\calH)$ to a channel input $A\in\calD(\calH)$ and a secret key $K\in\calK$;
    \item a decoding \ac{POVM} $\{\calD^{(m)}_{B\to MK}\}_{m\in\calM}$, which maps a channel observation $B\in\calD(\calH)$ to a message $\hat{M}\in\calM$ and a secret key $\hat{K}\in\calK$.
\end{itemize}
\end{definition}
The code is known by all the terminals, and the objective is to design a reliable and covert code. From Definition~\ref{defi:code}, the output of the legitimate receiver's channel is
\begin{subequations}\label{eq:Pe_rho0_CC_CSG}
\begin{align}
    B=\tra_E\calN_{AS\to BE}\tra_{K}\left(\calE^{(m)}_{\bar{S}\to AK}\otimes\bbI\left(\rho_{\bar{S}S}\right)\right).\label{eq:Legi_Output}
\end{align}Therefore, from Definition~\ref{defi:code} and \eqref{eq:Legi_Output} the probability of error is defined as
\begin{align}
    P_e&\triangleq\bbP\left\{(M,K)\ne(\hat{M},\hat{K})\right\}\nonumber\\
    &=\frac{1}{\card{\calM}\card{\calK}}\sum_{(m,k)\in\calM\times\calK}\tra\left[\left(\bbI-\calD^{(m)}_{B\to MK}\right)(B)\right].\label{eq:probaility_error}
\end{align}The code $(2^R,2^{R_K},1)$ is reliable if
\begin{align}
    P_e&\le\epsilon.\label{eq:Pe}
\end{align}
\end{subequations}When communication is not happening, the transmitter transmits the innocent state $\phi_0\in\calD(\calH)$ and therefore the warden receives the quantum state. 
\begin{align}
    \rho_0\triangleq\tra_B\calN_{AS\to BE}\left(\phi_0\otimes\tra_{\bar{S}}[\rho_{\bar{S}S}]\right).\label{eq:No_Comm}
\end{align}

Let $\rho_M$ denote the   state corresponding to the message $M$, which is 
    \begin{align}
        \rho_M\triangleq\frac{1}{2^R}\sum\limits_{m\in\sbra{1}{\left\lfloor{2^R}\right\rfloor}}\den{m}{m}.\label{eq:Message}
    \end{align}Also, since our objective is to generate a uniformly distributed secret key, we define the following   state for the secret key,
\begin{align}
    \sigma_K\triangleq\frac{1}{2^{R_K}}\sum\limits_{k\in\sbra{1}{\left\lfloor{2^{R_K}}\right\rfloor}}\den{k}{k}.\label{eq:Key_Uniformity}
\end{align}The joint state induced at the output of the warden by our code design is denoted by $\hat{\rho}_{MKE}$ and $\hat{\rho}_{KE}=\tra_M\left[\hat{\rho}_{MKE}\right]$. Then, we define the following two security metrics,
\begin{subequations}\label{eq:Security_Metrics}
    \begin{align}
        \left\lVert\hat{\rho}_{KE}-\sigma_K\otimes \rho_0\right\rVert_1&\le\delta,\label{eq:CC_CSK_Metric}\\
        \left\lVert\hat{\rho}_{MKE}-\rho_M\otimes\sigma_K\otimes \rho_0\right\rVert_1&\le\delta.\label{eq:CSC_CSK_Metric}
    \end{align}  \eqref{eq:CC_CSK_Metric} means that the generated secret key is almost uniform and independent of the output of the warden, and the state at the warden's output  approximates the state observed when no communication occurs, ensuring that  communication is covert.  \eqref{eq:CSC_CSK_Metric} means that the message is also independent of the generated secret key and the channel output of the warden, i.e., the message is secure.  
\end{subequations}

\begin{definition}[Covert Communication and Covert Secret Key Generation]
\label{defi:CC_SKG}
    A $(2^R,2^{R_K},1)$ code is $\epsilon$-reliable and $\delta$-covert if \eqref{eq:Pe} and \eqref{eq:CC_CSK_Metric} hold, respectively.
\end{definition}
\begin{definition}[Covert Secure Communication and Covert Secret Key Generation]
\label{defi:CSC_SKG}
A $(2^R,2^{R_K},1)$ code is $\epsilon$-reliable and $\delta$-covert secure, if \eqref{eq:Pe} and \eqref{eq:CSC_CSK_Metric} hold, respectively.
\end{definition}
\begin{figure}[t!]
\centering
\includegraphics[width=8.99cm]{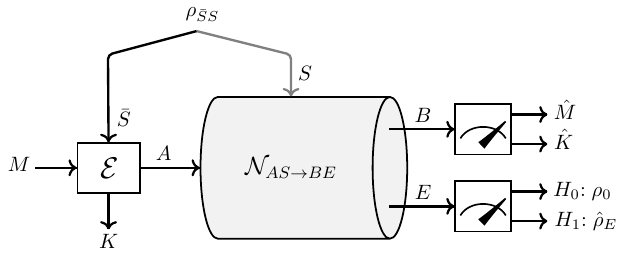}
\caption{Covert communication over a quantum state-dependent channel. The transmitter generates a secret key $K$ from its share of the \ac{CSI} $\bar{S}$, encodes the message $M$ along with the key $K$ into a quantum state $A$ by using the \ac{CSI} $\bar{S}$, and transmits it over a quantum channel $\calN_{AS\to BE}$. The goal is for the receiver to reliably decode the message and the secret key upon observing $B$, while the warden, observing $E$, should not be able to decide whether communication is taking place.}
\label{fig:System_Model}
\end{figure}
\subsection{Asymptotic Regime} 
We now define the problem in the asymptotic regime, where the channel is utilized $n$ times independently as $n$ approaches infinity. The corresponding $\left(2^{nR}, 2^{nR_K}, n\right)$ code for the channel $\calN_{AS\to BE}^{\otimes n}$ is similarly defined as Definition~\ref{defi:code}, with the message set $\calM\triangleq\sbra{1}{\left\lfloor{2^{nR}}\right\rfloor}$ and a key set $\calK\triangleq\sbra{1}{\left\lfloor{2^{nR_K}}\right\rfloor}$, quantum states $A^n, B^n, E^n\in\calD(\calH^n)$ and $\rho_{\bar{S}^nS^n}\in\calD(\calH^n\otimes\calH^n)$. Then, the output of the legitimate receiver's channel is
\begin{subequations}\label{eq:rho0_Pe_Asym}
\begin{align}
    B^n=\tra_{E^n}\calN_{A^nS^n\to B^nE^n}\tra_{K}\left(\calE^{(m)}_{\bar{S}^n\to A^nK}\otimes\bbI\left(\rho_{\bar{S}^nS^n}\right)\right),\label{eq:Legi_Output_Asymp}
\end{align}and the probability of error is
\begin{align} 
    P_e^{(n)}&\triangleq\bbP\left\{(M,K)\ne(\hat{M},\hat{K})\right\}\nonumber\\
    &=\frac{1}{\card{\calM}\card{\calK}}\sum_{(m,k)\in\calM\times\calK}\tra\left[\left(\bbI-\calD^{(m)}_{B^n\to MK}\right)\left(B^n\right)\right].\label{eq:Pe_Asym}
\end{align}A sequence of codes $(2^{nR},2^{nR_K},n)$ is reliable if
\begin{align}
    \lim_{n\to\infty}P_e^{(n)}&=0.\label{eq:Pe_Asymp}
\end{align}
\end{subequations}When communication is not happening, the transmitter transmits the innocent state $\phi_0^{\otimes n}\in\calD(\calH^n)$ and therefore the warden receives the following quantum state, 
\begin{subequations}\label{eq:Security_Metrics_Asymp}
\begin{align}
    \rho_0^{\otimes n}\triangleq\tra_{B^n}\calN_{AS\to BE}^{\otimes n}\left(\phi_0^{\otimes n}\otimes\tra_{\bar{S}}[\rho_{\bar{S}S}^{\otimes n}]\right).\label{eq:rho0_Asym}
\end{align}Note that since the \ac{CSI} is a tensor-product state, i.e., $\rho_{\bar{S}S}^{\otimes n}$, the state induced at the output of the warden, conditioned on the channel input being the innocent state $\phi_0^{\otimes n}$, is also a product state. 
Then, similar to \eqref{eq:Security_Metrics}, we define the following two security metrics,

    \begin{align}
        \lim_{n\to\infty}\left\lVert\hat{\rho}_{KE^n}-\sigma_K\otimes\rho_0^{\otimes n}\right\rVert_1&=0,\label{eq:CC_CSK_Metric_Asymp}\\
        \lim_{n\to\infty}\left\lVert\hat{\rho}_{MKE^n}-\rho_M\otimes\sigma_K\otimes\rho_0^{\otimes n}\right\rVert_1&=0,\label{eq:CSC_CSK_Metric_Asymp}
    \end{align}where $\rho_M$ and $\sigma_K$ are defined in \eqref{eq:Message} and \eqref{eq:Key_Uniformity}, respectively.
    \end{subequations}
\begin{definition}[Covert Communication and Covert Secret Key Generation]
\label{defi:Asymp_CC}
    A rate pair $(R,R_K)$ is said to be achievable for the quantum channel $\calN_{AS\to BE}$ if there exists a sequence of $\left(2^{nR},2^{nR_K},n\right)$ codes  such that \eqref{eq:Pe_Asymp} and \eqref{eq:CC_CSK_Metric_Asymp} hold.
\end{definition}
The covert capacity region, denoted by $\calC_{\textup{CC-CSK}}$,\footnote{CC stands for Covert Communication, and CSK stands for Covert Secret Key.} is defined as the set of all achievable covert message-key rate pairs. The covert capacity, denoted by $\C_{\textup{CC}}$, is defined as the supremum of all achievable covert message rates; and the covert secret key generation capacity, denoted by $\C_{\textup{CSK}}$, is defined as the supremum of all achievable covert secret key rates.
\begin{definition}[Covert Secure Communication and Covert Secret Key Generation]
\label{defi:Asymp_CSC_CSK}
    A rate pair $(R,R_K)$ is said to be achievable for the quantum channel $\calN_{AS\to BE}$ if there exists a sequence of $\left(2^{nR},2^{nR_K},n\right)$ codes  such that \eqref{eq:Pe_Asymp} and \eqref{eq:CSC_CSK_Metric_Asymp} hold.
\end{definition}
The covert secret capacity region, denoted by $\calC_{\textup{CSC-CSK}}$,\footnote{CSC stands for Covert and Secure Communication, and CSK stands for Covert Secret Key.} is defined as the set of all achievable covert secret message-secret key rate pairs. The covert secret capacity, denoted by $\C_{\textup{CSC}}$, is defined as the supremum of all achievable covert secret message rates. 
\subsection{Special Case: Classical Regime} 
A  special case of the problem illustrated in Fig.~\ref{fig:System_Model} corresponds to covert communication and covert secret key generation over classical state-dependent channels. Here the channel is $\big(\calS,\calA,W_{BE|AS},\calB,\calZ\big)$, where $\calS$ is the \ac{CSI} alphabet, $\calA$ is the channel input alphabet, $W_{BE|AS}$ is the channel, and $\calB$ and $\calZ$ are the channel output alphabets for the legitimate receiver and the warden, respectively. We assume that the channel state $S$ is known to the transmitter. Then, the stochastic encoder $\calE$ takes as input the message $M$ and the \ac{CSI} $S$, and outputs the channel input $A$ and the key $K$; and the stochastic decoder $\calD$ takes as input the channel output $B$ and outputs an estimation of the message and the secret key, i.e.,  $(\hat{M}, \hat{K})$.

In the classical case, $\rho_0$ in \eqref{eq:No_Comm} becomes the distribution induced at the warden's channel output when communication is not happening; $\hat{\rho}_{MKE}$ becomes the joint distribution induced by our code design; $\rho_M$ and $\sigma_K$ become uniform distributions. Then, for the security metrics defined in \eqref{eq:Security_Metrics} and \eqref{eq:Security_Metrics_Asymp}, the trace distance is replaced with the total variation distance.

Under the above problem setup, Definitions~\ref{defi:code} to \ref{defi:Asymp_CSC_CSK} have their counterparts for the classical channels. In particular, the classical covert capacity region $\calC_{\textup{CC-CSK}}$, and the classical covert secure capacity region, $\calC_{\textup{CSC-CSK}}$, can be similarly defined.

\section{Covert Communication, and Covert Secret Key Generation}
\label{sec:CC_CSK}
In this section, we first provide our one-shot result for the quantum state-dependent channels and then present our results for the asymptotic setup. Finally, we show that our results are optimal when the channel is classical and the state is available at both the transmitter and the~receiver.
\subsection{One-Shot Results} Our result on the existence of an $\epsilon$-reliable and $\delta$-covert $(2^R, 2^{R_K}, 1)$ code defined in Definition~\ref{defi:CC_SKG} is as follows.
\begin{theorem}[Covert Communication and Covert Secret Key Generation]
\label{thm:One_Shot}
Given a quantum state-dependent channel $\calN_{AS\to BE}$,  $\rho_{UAS}=\sum\limits_{u}p_{U}(u)\den{u}{u}_U\otimes\rho_{AS\lvert u}$, such that $\tra_{UA}[\rho_{UAS}]=\tra_{\bar{S}}[\rho_{\bar{S}S}]$, $\rho_E=\rho_0$, and $\rho_{UBE}\triangleq\bbI_U\otimes\calN_{AS\to BE}(\rho_{UAS})$, there exists a $(2^R,2^{R_K},1)$ code such that
\begin{subequations}\label{eq:PeCov_Lemma}
\begin{align}
    &\bbP\left\{(M,K)\ne(\hat{M},\hat{K})\right\}\nonumber\\
    &\le\frac{2v_S^\alpha}{\alpha}2^{\alpha\pr{-(R_J+R_K)+\ubar{\D}_{1+\alpha}\left(\rho_{US}\lVert\rho_U\otimes\rho_S\right)}}\nonumber\\
    &\qquad+12v_B^\alpha2^{\alpha\pr{R_J+R_K+R-\ubar{\D}_{1-\alpha}\big(\rho_{UB}\lVert\rho_U\otimes\rho_B\big)}},\label{eq:Pe_Lemma}\\
    &\left\lVert\hat{\rho}_{KE}-\sigma_K\otimes \rho_0\right\rVert_1\nonumber\\
    &\le\frac{2\sqrt{2}\pr{v_E}^{\frac{\alpha}{2}}}{\sqrt{\alpha}}2^{\frac{\alpha}{2}\pr{-(R_J+R)+\ubar{\D}_{1+\alpha}\left(\rho_{UE}\lVert\rho_U\otimes\rho_E\right)}},\label{eq:Covertnes_Lemma}
    \end{align}where $R_J>0$ is an arbitrary integer, $\alpha\in\left(0,\frac{1}{2}\right)$, and $v_S$, $v_B$, and $v_E$ are the number of distinct eigenvalues of the states $\rho_S$, $\rho_B$, and $\rho_E$, respectively.
\end{subequations}
\end{theorem}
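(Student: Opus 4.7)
The plan is a one-shot Gel'fand--Pinsker-style superposition code with three rate parameters. I would sample $\lfloor 2^{R+R_K+R_J}\rfloor$ classical codewords $\{u_{m,k,j}\}$ independently from the classical marginal $p_U$ of the fixed state $\rho_{UAS}$, which in turn induces the ensembles $\rho_{A|u_{m,k,j}}$, $\rho_{B|u_{m,k,j}}$, and $\rho_{E|u_{m,k,j}}$ via $\calN_{AS\to BE}$. The auxiliary index $j\in\sbra{1}{2^{R_J}}$ plays a dual role: it supplies Gel'fand--Pinsker confusion for aligning the chosen codeword with the encoder's share $\bar S$ of the state, and simultaneously provides channel-resolvability confusion for covertness at the warden. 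Given $m$ and $\bar S$, the encoder performs a POVM indexed by $(k,j)$ built from pinched projectors onto the joint state $\rho_{US}$ to select an aligned codeword, coherently prepares the channel input from $\rho_{AS|u_{m,k,j}}$, and releases the key $k$. The receiver applies a Hayashi-type pinched decoding POVM on $B$ to recover $(\hat m,\hat k,\hat j)$.

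\textbf{Reliability (bound \eqref{eq:Pe_Lemma}).} I would split the error into (i) the encoder failing to locate any $(k,j)$ matching $(m,\bar S)$ and (ii) decoding failure conditional on successful matching. For (i), a hypothesis-testing argument shows that the probability a single random codeword aligns with $S$ is close to $2^{-\ubar{\D}_{1+\alpha}(\rho_{US}\lVert\rho_U\otimes\rho_S)}$, so having $2^{R_J+R_K}$ candidates drives the miss probability down at the claimed rate, with the pinching inequality applied to $\rho_S$ contributing the $v_S^\alpha$ factor and producing the first summand. For (ii), I would invoke the Hayashi--Nagaoka operator inequality together with a pinched channel-output projection on $B$ (contributing $v_B^\alpha$), then take expectation and union-bound over the $2^{R+R_K+R_J}$ codewords to obtain the second summand governed by $\ubar{\D}_{1-\alpha}(\rho_{UB}\lVert\rho_U\otimes\rho_B)$.

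\textbf{Covertness (bound \eqref{eq:Covertnes_Lemma}).} Conditioned on any key realization $K=k$, the warden's state is the uniform mixture $\frac{1}{2^{R+R_J}}\sum_{m,j}\rho_{E|u_{m,k,j}}$ whose target is $\rho_E=\rho_0$ by the hypothesis $\rho_E=\rho_0$. I would apply the one-shot quantum convex-split lemma in its sandwiched R\'enyi form to bound the divergence between this conditional mixture and $\rho_0$, yielding the exponent $R+R_J-\ubar{\D}_{1+\alpha}(\rho_{UE}\lVert\rho_U\otimes\rho_E)$ with a pinching factor $v_E^\alpha$ from the eigenbasis of $\rho_E$. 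Converting sandwiched R\'enyi divergence into trace distance via a quantum Pinsker-type inequality of the form $\lVert\rho-\sigma\rVert_1\le\sqrt{2\ubar{\D}_{1+\alpha}(\rho\lVert\sigma)/\alpha}$ halves the exponent and supplies the $\sqrt{2/\alpha}$ prefactor; Jensen's inequality on the concave square root preserves the bound after averaging over the random code. A Markov-style selection argument then extracts a single deterministic code satisfying \eqref{eq:Pe_Lemma} and \eqref{eq:Covertnes_Lemma} simultaneously.

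\textbf{Main obstacle.} The principal difficulty is coordinating the three roles of the auxiliary rate $R_J$: it must be large enough to enable both Gel'fand--Pinsker matching (together with $R_K$) and covertness confusion (together with $R$), yet small enough that the decoder can still resolve $(\hat m,\hat k,\hat j)$ from $B$. Ensuring that a single realization of the random code simultaneously satisfies all three requirements, and that the pinching constants $v_S$, $v_B$, $v_E$ propagate correctly through the joint expectation bound without degrading the exponents, is where most of the technical bookkeeping concentrates---especially in establishing the covertness bound conditionally on $K$ while marginalizing only over $M$ and $J$.
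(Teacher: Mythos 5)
Your decoding and covertness machinery (pinched square-root measurement with Hayashi--Nagaoka, a resolvability/convex-split bound at rate $R_J+R$ with the $v_E^{\alpha}$ pinching factor, a divergence-to-trace-distance conversion that halves the exponent, Jensen over the random code, and a selection argument) runs parallel to the paper's argument and would produce bounds of the same shape as \eqref{eq:Pe_Lemma} and \eqref{eq:Covertnes_Lemma}. The genuine gap is in the encoder. You propose a measure-and-prepare Gel'fand--Pinsker encoder: a POVM on $\bar S$ selects an ``aligned'' index $(k,j)$, after which the transmitter ``coherently prepares the channel input from $\rho_{AS\lvert u}$.'' In the fully quantum setting this step is not available. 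The transmitter holds only $\bar S$, which is entangled with the channel's state register $S$; the target correlation $\rho_{AS\lvert u}$ between the channel input $A$ and the \emph{channel's} register $S$ can only be created by a local CPTP map acting on $\bar S$. A destructive POVM on $\bar S$ followed by a fresh preparation of $A$ yields at best classical correlation between $A$ and the post-measurement state of $S$; if $\rho_{AS\lvert u}$ is entangled (which the theorem permits, since only the marginal constraint $\tra_{UA}[\rho_{UAS}]=\tra_{\bar S}[\rho_{\bar S S}]$ is imposed), no measure-and-prepare strategy can induce it, and even for separable targets the measurement back-action is not controlled by your hypothesis-testing ``match'' estimate. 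The paper's key idea, which your proposal is missing, is a \emph{coherent} Gel'fand--Pinsker step: because the sub-codebook average $\frac{1}{2^{R_J+R_K}}\sum_{j,k}\rho_{S\lvert u(j,k,m)}$ is close to $\rho_S$ (the resolvability lemma at rate $R_J+R_K$), Uhlmann's theorem supplies an isometry $W^{(\calC_m)}_{\bar S\to YAJK}$ on $\bar S$ alone whose output approximates the purification of that average, simultaneously producing $A$, the indices $J,K$ (whence the key), and the purifying system $Y$, as in \eqref{eq:Isomery}.

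This changes the mechanism behind the first summand of \eqref{eq:Pe_Lemma}: in the paper it is not the probability that ``no candidate aligns with $S$,'' but the purified-distance error of the Uhlmann simulation, i.e., how far the actual encoder output is from the idealized uniform mixture over the sub-codebook; the exponent $-(R_J+R_K)+\ubar{\D}_{1+\alpha}(\rho_{US}\lVert\rho_U\otimes\rho_S)$ arises from applying the resolvability lemma to the $S$ marginal. Your covertness analysis also implicitly relies on this: conditioning on $K=k$ and treating the warden's state as the exact uniform mixture $\frac{1}{2^{R+R_J}}\sum_{m,j}\rho_{E\lvert u(m,k,j)}$ is only legitimate once the coherent encoding guarantees the induced state is (close to) that mixture; with a selection-based encoder the induced distribution over $(j,k)$ is biased by the matching process and you would need a separate argument to control it. So the overall architecture and the exponents you target are right, but without the Uhlmann-isometry encoding step the reliability term at rate $R_J+R_K$ and the validity of the idealized mixture in the covertness bound are not justified in the quantum case.
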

The proof of Theorem~\ref{thm:One_Shot} is based on Gel'fand-Pinsker encoding for utilizing the \ac{CSI} for both communication and secret key generation, and channel resolvability for the covertness analysis. Compared to the coding scheme in \cite{GelfandPinsker,Dupuis09,AshnuHayashi2020} and inspired by \cite{Bunin20}, besides the bin index corresponding to the message and the bin index corresponding to the redundancy needed to correlate the codeword with the \ac{CSI}, our code introduces an additional redundancy bin index. Based on quantum state approximation arguments, we show that the new bin index is almost uniform and independent of the message and the warden's channel output. Since the legitimate receiver decodes all the bin indices of the transmitted codeword, it establishes the new redundancy bin index as a secret key. 
\begin{proof}
Let the distribution $p_U$ and the  states $\{\rho_{AS\lvert u}\}_{u\in\calU}$ be such that $\tra_{UA}[\rho_{UAS}]=\tra_{\bar{S}}[\rho_{\bar{S}S}]$. We define
\begin{subequations}
\begin{align}
    \rho_{UAS}&\triangleq \sum\limits_{u\in\calU}p_{U}(u)\den{u}{u}_U\otimes\rho_{AS\lvert u},\label{eq:Joint_State}\\
    \rho_{UBE}&\triangleq\bbI_U\otimes\calN_{AS\to BE}(\rho_{UAS}),\label{eq:Output_State}\\
    \rho_B&\triangleq\tra_{UE}[\rho_{UBE}],\label{eq:State_B}\\
    \rho_E&\triangleq\tra_{UB}[\rho_{UBE}].\label{eq:State_E}
\end{align}
\end{subequations}
\subsubsection*{Codebook Generation}Let 
$C\triangleq\big\{U(j,k,m)\big\}_{(j,k,m)\in\calJ\times\calK\times\calM}$, where $\calJ\triangleq\sbra{1}{\left\lfloor{2^{R_J}}\right\rfloor}$, $\calK\triangleq\sbra{1}{\left\lfloor{2^{R_K}}\right\rfloor}$, and $\calM\triangleq\sbra{1}{\left\lfloor{2^R}\right\rfloor}$, be a random codebook generated \ac{iid} according to $p_U$. Also, let $\calC\triangleq\big\{u(j,k,m)\big\}_{(j,k,m)\in\calJ\times\calK\times\calM}$ be a realization of the codebook $C$. The indices $j$, $k$, and $m$ can be seen as a two-layer random binning. For each $m\in\calM$, $C_m$ and $\calC_m$ denote a random sub-codebook and a deterministic sub-codebook of the codebooks $C$ and $\calC$, respectively, associated with the message $m$.

\subsubsection*{Encoding}
Hereafter, the quantum states induced by a fixed codebook $\calC$ are denoted by $\rho_{\cdot\lvert\calC}$, and the quantum states induced by a random codebook $C$ are denoted by $\rho_{\cdot\lvert C}$. 
We first introduce a state $Y$ such that $\ket{\rho_{YAS\lvert u}}$ is a purification of the state $\rho_{AS\lvert u}$, defined above. Therefore, we have the following purification of the state $\frac{1}{2^{R_J+R_K}}\sum_{(j,k)}\rho_{S\lvert u(j,k,m)}$, in which $\rho_{S\lvert u}=\tra_A[\rho_{AS\lvert u}]$,
\begin{align}
    \ket{\tau_{YASJK\lvert\calC_m}}\triangleq\frac{1}{\sqrt{2^{R_J+R_K}}}\sum_{(j,k)}\ket{\rho_{YAS\lvert u(j,k,m)}}\ket{j}_J\ket{k}_K.\nonumber
\end{align}From Uhlmann's theorem \cite{Uhlmann76}, there exists a set of isometries $\{W^{(\calC_m)}_{\bar{S}\to YAJK}\}$ such that
\begin{align}
    &\Pu\left(\tau_{YASJK\lvert\calC_m},W^{(\calC_m)}_{\bar{S}\to YAJK}\otimes\bbI_S\left(\rho_{\bar{S}S}\right)W^{\dagger(\calC_m)}_{\bar{S}\to YAJK}\otimes\bbI_S\right)=\nonumber\\
    &\qquad\Pu\left(\frac{1}{2^{R_J+R_K}}\sum_{(j,k)}\rho_{S\lvert u(j,k,m)},\rho_S\right).\label{eq:Isomery}
\end{align}
Our encoder depends on the codewords in the sub-codebook $\calC_m$ in the following manner. Given the message $m$, for each sub-codebook $\calC_m\triangleq\{u(j,k,m)\}_{(j,k)\in[1:2^{R_J}]\times[1:2^{R_K}]}$ the encoder applies an isometry $W^{(\calC_m)}_{\bar{S}\to YAJK}:\calD(\calH_{\bar{S}})\to\calD(\calH_Y\otimes\calH_J\otimes\calH_K\otimes\calH_A)$ to its share of the \ac{CSI}, i.e., the state $\rho_{\bar{S}}$. This isometry maps $\rho_{\bar{S}}$ to a purification state $Y$, the indices $J$ and $K$, and the channel input $A$. The encoder then transmits $A$ over the channel $\calN_{AS\to BE}$.
\subsubsection*{Pinching}
\label{sec:Pinching}
Our decoding procedure is based on the pinching method \cite{QIT_Hayashi}. For any state $\sigma\in\calD(\calH)$ and $\rho\in\calD(\calH)$, let $\sigma=\sum_{i}\lambda_i\den{z_i}{z_i}$ be the spectral decomposition of the state $\sigma$.  The pinching operation of the state $\rho$ \ac{wrt} the spectral decomposition of the state $\sigma$ is defined as $\calE_\sigma(\rho)\triangleq\sum_i\den{z_i}{z_i}\rho\den{z_i}{z_i}=\sum_i\bra{z_i}\rho\ket{z_i}\den{z_i}{z_i}$. Note that the state $\sigma$ and the state $\calE_\sigma(\rho)$ commute. 

We aim to find pinching operations such that after applying the pinching operation, the state $\rho_{UB}$ commutes with the state $\rho_U\otimes\rho_B$ and the state $\rho_{US}$ commutes with $\rho_U\otimes\rho_S$. 
Let $\calE_{\rho_B}$ be the pinching operation \ac{wrt} the spectral decomposition of the state $\rho_B$. Therefore, $\calE_{\rho_B}(\rho_{UB})$ and the state $\rho_U\otimes\rho_B$ commute. 
Also, let $\calE_{\rho_S}$ be the pinching operation \ac{wrt} the spectral decomposition of the state $\rho_S$. Therefore, $\calE_{\rho_S}(\rho_{US})$ and the state $\rho_S$ commute. We denote the number of distinct eigenvalues of the states $\rho_B$ and $\rho_S$ with $v_B$ and $v_S$, respectively.

\begin{figure*}[t!]
\centering
\includegraphics[width=10.0cm]{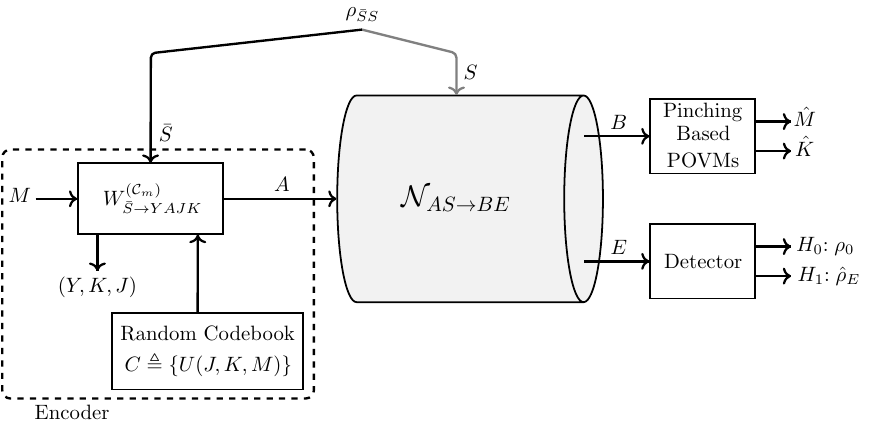}
\caption{The encoder generates a random codebook $C\triangleq\{U(j,k,m)\}_{(j,k,m)\in\calJ\times\calK\times\calM}$, where $m\in\calM\triangleq\sbr{1:2^R}$ is the message, $k\in\calK\triangleq\sbr{1:2^{R_K}}$ is the secret key, and $j\in\calJ\triangleq\sbr{1:2^{R_J}}$ represents the local randomness needed to perform Gel'fand-Pinsker encoding and also represents part of the randomness needed to secure the key $k$. Since the message $m$ is not required to be secure, it also serves as part of the local randomness needed to secure the key. To transmit the message $m$, given the \ac{CSI} $\bar{S}$ and the fixed sub-codebook $\calC_m\triangleq\{U(j,k,m)\}_{(j,k)\in\calJ\times\calK}$, the encoder applies the isometry $W^{(\calC_m)}_{\bar{S}\to YAJK}$ on its share of the \ac{CSI}, i.e., $\bar{S}$, which outputs the purification state $Y$, the index $J$, the secret key $K$, and the channel input $A$. Then it transmits the quantum state $A$ over the channel. The decoder applies a set of pinching-based \ac{POVM} operators on its received state $B$ to recover the message $m$ and the secret key $ K$. Using the channel resolvability techniques and properties of the pinching maps, we bound the probability of detection at the warden.}
\label{fig:Coding_Scheme}
\end{figure*}
\subsubsection*{Decoding}
\label{sec:Decoding}
For any two Hermitian matrices $A$ and $B$, we define the projection $\{A\ge B\}$ as $\sum_{i:\lambda_i\ge0}P_i$, where the spectral decomposition of $A-B$ is $\sum_i\lambda_iP_i$ and $P_i$ is the projection to the eigenspace corresponding to the eigenvalue $\lambda_i$. Now, we define the following projectors,
\begin{align}
    \Pi_{UB}&\triangleq\left\{\calE_{\rho_B}(\rho_{UB})\ge2^{R_J+R_K+R}\rho_U\otimes\rho_B\right\}.\label{eq:Main_Projection}
\end{align}
Also, for every $j\in\sbra{1}{2^{R_J}}$, $k\in\sbra{1}{2^{R_K}}$, and $m\in\sbra{1}{2^R}$ we define the following operator,
\begin{align}
    \Gamma(j,k,m)\triangleq\tra_U\left[\Pi_{UB}\left(\den{U(j,k,m)}{U(j,k,m)}\otimes\bbI_B\right)\right].\label{eq:Operator}
\end{align}To obtain a set of \ac{POVM} operators, we normalize \eqref{eq:Operator} as follows,
\begin{align}
    \Upsilon(j,k,m)&\triangleq\left(\sum_{j',k',m'}\Gamma(j',k',m')\right)^{-\frac{1}{2}}\nonumber\\
    &\quad\times\Gamma(j,k,m)\left(\sum_{j',k',m'}\Gamma(j',k',m')\right)^{-\frac{1}{2}}.\label{eq:POVM_Decoder}
\end{align}The receiver decodes the secret key and the message by using the  \ac{POVM} operator in \eqref{eq:POVM_Decoder}. The coding scheme for Theorem~\ref{thm:One_Shot} is summarized in Fig.~\ref{fig:Coding_Scheme}.  
The following lemmas are essential for the error analysis and the covertness analysis.
\begin{lemma}
    \label{lemma:Resolvability_1}
    Let $\rho_{UE}\triangleq\sum_{u\in\calU}p_U(u)\den{u}{u}\otimes\rho_{E\lvert u}$ be a classical-quantum state. Also, let  $C_U\triangleq\{U(i)\}_{i\in\calI}$, where $\calI\triangleq\left[2^R\right]$, be a set of random variables in which $U(i)$ is generated \ac{iid} according to $p_U$, and define $\tau_{E\lvert C}\triangleq\frac{1}{2^R}\sum_i\rho_{E\lvert U(i)}$. Then, for $\alpha\in\left(0\,,\frac{1}{2}\right)$, 
    \begin{align}
        \bbE_C\left[\ubar{\D}_{1+\alpha}\big(\tau_{E\lvert C}\lVert\rho_E\big)\right]
        &\le\frac{v_E^\alpha}{\alpha\ln2}2^{\alpha\left(-R+\ubar{\D}_{1+\alpha}\left(\rho_{UE}\lVert\rho_U\otimes\rho_E\right)\right)},
    \end{align}where $v_E$ is the number of distinct eigenvalues of the state $\rho_E$.
\end{lemma}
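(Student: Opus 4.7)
The plan is to combine Jensen's inequality with quantum pinching to reduce the proof to a classical random-coding calculation. Applying Jensen's inequality to the outer expectation yields
\begin{align*}
\bbE_C\!\left[\ubar{\D}_{1+\alpha}\pr{\tau_{E\lvert C}\lVert\rho_E}\right]\le \frac{1}{\alpha}\log\bbE_C\!\left[\tra\!\left[\tilde{\tau}_{E\lvert C}^{1+\alpha}\right]\right]\!,
\end{align*}
where $\tilde{\tau}_{E\lvert C}\triangleq \rho_E^{-\alpha/(2(1+\alpha))}\tau_{E\lvert C}\rho_E^{-\alpha/(2(1+\alpha))}$. Combined with the elementary estimate $\log(1+x)\le x/\ln 2$, the target reduces to establishing the intermediate bound
\begin{align*}
\bbE_C\!\left[\tra\!\left[\tilde{\tau}_{E\lvert C}^{1+\alpha}\right]\right]\le 1 + v_E^\alpha\,2^{\alpha\pr{\ubar{\D}_{1+\alpha}\pr{\rho_{UE}\lVert\rho_U\otimes\rho_E}-R}}\!.
\end{align*}

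To bound this expectation, I would invoke the pinching map $\calE_{\rho_E}$ with respect to the spectral decomposition of $\rho_E$. The pinching inequality $A\le v_E\,\calE_{\rho_E}(A)$ together with the operator monotonicity of $x\mapsto x^\alpha$ for $\alpha\in(0,1/2)$, the self-adjointness of the pinching map, and the commutativity of $\calE_{\rho_E}(\tilde{\tau}_{E\lvert C})$ with $\rho_E$ rewrite the trace as the classical sum $\sum_e \lambda_e^{-\alpha}\bar{p}_e^{1+\alpha}$, where $\{\lambda_e\}$ are the eigenvalues of $\rho_E$ and $\bar{p}_e=(1/2^R)\sum_i \langle e\rvert\rho_{E\lvert U(i)}\lvert e\rangle$. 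For the resulting classical expectation, the symmetry of the iid codebook gives $\bbE_C[\bar{p}_e^{1+\alpha}]=\bbE[Y_1^{(e)}\bar{p}_e^\alpha]$ with $Y_i^{(e)}\triangleq \langle e\rvert\rho_{E\lvert U(i)}\lvert e\rangle$; splitting $\bar{p}_e=Y_1^{(e)}/2^R+Z$ with $Z$ independent of $Y_1^{(e)}$, the subadditivity $(a+b)^\alpha\le a^\alpha+b^\alpha$ valid for $\alpha\in(0,1]$ together with Jensen for the concave $x\mapsto x^\alpha$ yields $\bbE[\bar{p}_e^{1+\alpha}]\le \lambda_e^{1+\alpha}+2^{-R\alpha}\bbE[(Y_1^{(e)})^{1+\alpha}]$. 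Summing over $e$, the baseline contributes $\sum_e\lambda_e=1$ and the fluctuation contributes, via the data-processing inequality for the sandwiched R\'{e}nyi divergence applied to the pinching channel, at most $2^{\alpha\pr{\ubar{\D}_{1+\alpha}\pr{\rho_{UE}\lVert\rho_U\otimes\rho_E}-R}}$.

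The main obstacle I anticipate is the precise placement of the factor $v_E^\alpha$: a naive global application of the pinching bound would yield only the weaker intermediate estimate $v_E^\alpha\pr{1+2^{\alpha(\ubar{\D}_{1+\alpha}-R)}}$, leaving an unwanted $\log v_E$ additive term after the outer logarithm that does not appear in the claim. Obtaining the sharp form $1+v_E^\alpha\cdot(\cdots)$ requires isolating the baseline via the identity $\tra[\tilde{\tau}_{E\lvert C}\,\rho_E^{\alpha/(1+\alpha)}]=\tra[\tau_{E\lvert C}]=1$ and applying the pinching inequality only to the zero-mean fluctuation $\tilde{\tau}_{E\lvert C}-\rho_E^{1/(1+\alpha)}$, so that the $v_E^\alpha$ penalty is confined to the variance-type term and not to the deterministic baseline.
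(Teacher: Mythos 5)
Your high-level skeleton (Jensen on the outer expectation, the intermediate bound $\bbE_C\tra\big[\tilde{\tau}_{E\lvert C}^{\,1+\alpha}\big]\le 1+v_E^\alpha 2^{\alpha(\ubar{\D}_{1+\alpha}(\rho_{UE}\lVert\rho_U\otimes\rho_E)-R)}$, then $\log(1+x)\le x/\ln 2$) coincides with the paper's, and your classical splitting of $\bar{p}_e$ mirrors the paper's operator-level steps. However, the step on which the lemma actually hinges --- confining the $v_E^\alpha$ penalty to the $2^{-\alpha R}$ term --- is not established by your argument. Your reduction to the classical sum applies the pinching inequality to $\tilde{\tau}_{E\lvert C}$ as a whole, which is exactly the ``naive global'' route you yourself flag as yielding only $v_E^\alpha\big(1+2^{\alpha(\ubar{\D}_{1+\alpha}-R)}\big)$, i.e.\ an extra additive $\log v_E$ after taking $\tfrac{1}{\alpha}\log$. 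The remedy you propose --- applying $A\preceq v_E\,\calE_{\rho_E}(A)$ to the ``zero-mean fluctuation'' $\tilde{\tau}_{E\lvert C}-\rho_E^{1/(1+\alpha)}$ --- is not available: the pinching inequality requires $A\succeq 0$, and that fluctuation is Hermitian but not positive semidefinite (only its codebook average vanishes), so the operator inequality and the monotonicity step built on it fail. The identity $\tra\big[\tilde{\tau}_{E\lvert C}\,\rho_E^{\alpha/(1+\alpha)}\big]=1$ does not rescue this, because the power $1+\alpha$ does not split additively over a baseline plus a signed fluctuation.

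The paper's fix is different and is the idea you are missing: do not pinch first. Write $\tra\big[\tilde{\tau}^{1+\alpha}\big]=\tra\big[\tilde{\tau}\cdot\tilde{\tau}^{\alpha}\big]$, split the codebook average \emph{inside the $\alpha$-power} into the self term $2^{-R}\rho_{E\lvert U(i)}$ and the $i'\ne i$ terms, use independence together with operator concavity of $x\mapsto x^{\alpha}$ (Jensen) to replace the latter by $\tfrac{2^R-1}{2^R}\rho_E$, and only then apply the pinching inequality to the positive semidefinite self term $\rho_{E\lvert U(i)}$. After the subadditivity $(A+B)^{\alpha}\preceq A^{\alpha}+B^{\alpha}$, the factor $v_E^{\alpha}$ multiplies only the $2^{-\alpha R}$ term, the baseline contributes at most $1$, and the conditional-expectation property of the pinching plus data processing produce the $2^{\alpha\ubar{\D}_{1+\alpha}(\rho_{UE}\lVert\rho_U\otimes\rho_E)}$ factor. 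A secondary gap in your route: when $\rho_E$ has degenerate eigenvalues, $\calE_{\rho_E}$ is a block pinching rather than a dephasing, so rewriting the trace as $\sum_e\lambda_e^{-\alpha}\bar{p}_e^{\,1+\alpha}$ in a fixed eigenbasis is only a \emph{lower} bound (diagonal entries are majorized by eigenvalues and $t\mapsto t^{1+\alpha}$ is convex), while choosing a basis that diagonalizes the blocks makes the basis codebook-dependent and destroys the independence used in your split $\bar{p}_e=Y_1^{(e)}/2^R+Z$. Carrying the argument at the operator level, as the paper does, avoids both problems.
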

\begin{proof}
   See  Appendix~\ref{proof:lemma:Resolvability_1}.
\end{proof}
\begin{lemma}
\label{lemma:Decodability}
    For $\alpha\in\left(0\,,\frac{1}{2}\right)$ we have,
    \begin{subequations}\label{eq:Decoding_Lemma}
    \begin{align}
    &\tra\left[\left(\bbI-\Pi_{UB}\right)\rho_{UB}\right]\nonumber\\
    &\le v_B^\alpha2^{\alpha(R_J+R_K+R)}2^{-\alpha\ubar{\D}_{1-\alpha}\big(\rho_{UB}\lVert\rho_U\otimes\rho_B\big)},\label{eq:Decoding_Lemma_1}\\
    &\tra\left[\Pi_{UB}\left(\rho_U\otimes\rho_B\right)\right]\nonumber\\
    &\le v_B^\alpha2^{-(1-\alpha)(R_J+R_K+R)}2^{-\alpha\ubar{\D}_{1-\alpha}\big(\rho_{UB}\lVert\rho_U\otimes\rho_B\big)}.\label{eq:Decoding_Lemma_2}
    \end{align}
\end{subequations}
\end{lemma}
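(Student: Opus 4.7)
The plan is to reduce both estimates to a commuting Chernoff-type bound via the pinching map $\calE_{\rho_B}$, and then convert the resulting pinched trace to a sandwiched R\'enyi divergence at the cost of an extra factor $v_B^\alpha$. First, $\calE_{\rho_B}(\rho_{UB})$ is block-diagonal in the spectral projectors $\{\bbI_U\otimes P_i^B\}$ of $\bbI_U\otimes\rho_B$, and $\rho_U\otimes\rho_B$ trivially commutes with each $\bbI_U\otimes P_i^B$. Hence $\calE_{\rho_B}(\rho_{UB})-2^L(\rho_U\otimes\rho_B)$, with $L\triangleq R_J+R_K+R$, commutes with every $\bbI_U\otimes P_i^B$, and so does its spectral projector $\Pi_{UB}$. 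Inserting $\sum_i(\bbI_U\otimes P_i^B)=\bbI$ and using trace cyclicity yields the identity $\tra[(\bbI-\Pi_{UB})\rho_{UB}]=\tra[(\bbI-\Pi_{UB})\calE_{\rho_B}(\rho_{UB})]$, reducing the analysis to a setting in which $A\triangleq\calE_{\rho_B}(\rho_{UB})$ and $B\triangleq\rho_U\otimes\rho_B$ commute (the quantity $\tra[\Pi_{UB}B]$ already involves $B$ alone).

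In the common eigenbasis of $A$ and $B$, a classical Chernoff-style argument shows, for $\alpha\in(0,1/2)$ and $c=2^L$, that on $\{a_i<cb_i\}$ one has $a_i\le c^\alpha a_i^{1-\alpha}b_i^\alpha$, while on $\{a_i\ge cb_i\}$ one has $b_i\le c^{-(1-\alpha)}a_i^{1-\alpha}b_i^\alpha$. Summing and recombining yields $\tra[(\bbI-\Pi_{UB})\rho_{UB}]\le 2^{\alpha L}\tra[A^{1-\alpha}B^\alpha]$ and $\tra[\Pi_{UB}B]\le 2^{-(1-\alpha)L}\tra[A^{1-\alpha}B^\alpha]$. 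It therefore suffices to prove the single estimate $\tra[A^{1-\alpha}B^\alpha]\le v_B^\alpha\,2^{-\alpha\ubar{\D}_{1-\alpha}(\rho_{UB}\lVert\rho_U\otimes\rho_B)}$, which then delivers both inequalities of the lemma.

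To establish this estimate, set $\sigma\triangleq\rho_U\otimes\rho_B$ and use the commuting identity $\tra[A^{1-\alpha}\sigma^\alpha]=\tra[(\sigma^{\alpha/(2(1-\alpha))}A\,\sigma^{\alpha/(2(1-\alpha))})^{1-\alpha}]$. Because $\sigma^{\alpha/(2(1-\alpha))}$ commutes with each $\bbI_U\otimes P_i^B$, the pinching slides through the sandwich to give $\sigma^{\alpha/(2(1-\alpha))}A\,\sigma^{\alpha/(2(1-\alpha))}=\calE_{\rho_B}(Y)$, where $Y\triangleq\sigma^{\alpha/(2(1-\alpha))}\rho_{UB}\,\sigma^{\alpha/(2(1-\alpha))}$. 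The main technical step is then $\tra[\calE_{\rho_B}(Y)^{1-\alpha}]\le v_B^\alpha\tra[Y^{1-\alpha}]$, which I would prove by realizing the pinching as the uniform unitary average $\calE_{\rho_B}(Y)=\frac{1}{v_B}\sum_{k=0}^{v_B-1}V_k Y V_k^{\dagger}$ with $V_k\triangleq\sum_\ell e^{2\pi ik\ell/v_B}(\bbI_U\otimes P_\ell^B)$, pulling out $v_B^{-(1-\alpha)}$, and then invoking the Rotfel'd concave trace inequality $\tra[f(\sum_k A_k)]\le\sum_k\tra[f(A_k)]$ for $f(x)=x^{1-\alpha}$ (concave on $[0,\infty)$ with $f(0)=0$) together with the unitary invariance $\tra[(V_k Y V_k^{\dagger})^{1-\alpha}]=\tra[Y^{1-\alpha}]$. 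Finally $\tra[Y^{1-\alpha}]=2^{-\alpha\ubar{\D}_{1-\alpha}(\rho_{UB}\lVert\sigma)}$ by definition of the sandwiched R\'enyi divergence, which completes the argument. The central obstacle is this Rotfel'd step: the naive pinching inequality $\rho_{UB}\le v_B\,\calE_{\rho_B}(\rho_{UB})$ alone would only deliver the looser factor $v_B$, rather than the sharper $v_B^\alpha$ appearing in the statement.
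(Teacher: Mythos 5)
Your proof is correct and reaches both bounds with the right constant $v_B^\alpha$, and it is worth comparing with the paper's argument since the key step is handled differently. Both you and the paper first transfer the pinching onto $\rho_{UB}$: your observation that $\Pi_{UB}$ commutes with the projectors $\bbI_U\otimes P_i^B$ is the same device as the paper's steps, which use $\calE_{\rho_B}\big(\bbI-\Pi_{UB}\big)=\bbI-\Pi_{UB}$ together with the exchange identity $\tra\left[\calE_\sigma(\rho_1)\rho_2\right]=\tra\left[\rho_1\calE_\sigma(\rho_2)\right]$; and both arguments then reduce the two inequalities to the single quantity $\tra\left[\left(\calE_{\rho_B}(\rho_{UB})\right)^{1-\alpha}\left(\rho_U\otimes\rho_B\right)^{\alpha}\right]$ with prefactors $2^{\alpha(R_J+R_K+R)}$ and $2^{-(1-\alpha)(R_J+R_K+R)}$ — you via a classical Chernoff split in a common eigenbasis, the paper via operator monotonicity of $x^{\alpha}$ and dropping the projector, which is equivalent in this commuting situation. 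The genuinely different part is how that common quantity is bounded by $v_B^\alpha 2^{-\alpha\ubar{\D}_{1-\alpha}\left(\rho_{UB}\lVert\rho_U\otimes\rho_B\right)}$: the paper splits the power as $(\cdot)^{1-\alpha}=(\cdot)(\cdot)^{-\alpha}$, uses the exchange identity to replace the first factor by the unpinched $\rho_{UB}$, and then applies the pinching inequality $\rho_{UB}\preceq v_B\,\calE_{\rho_B}(\rho_{UB})$ inside the anti-monotone power $x^{-\alpha}$, which is exactly how the exponent $\alpha$ lands on $v_B$; you instead realize the pinching as a uniform average of $v_B$ phase unitaries and invoke the concave-trace subadditivity $\tra\left[\left(\sum_k A_k\right)^{1-\alpha}\right]\le\sum_k\tra\left[A_k^{1-\alpha}\right]$ (Rotfel'd/McCarthy), picking up $v_B^{-(1-\alpha)}\cdot v_B=v_B^{\alpha}$. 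Both routes are sound; yours keeps the monotonicity arguments entirely classical, at the price of importing the Rotfel'd-type inequality, which the paper's route avoids. Two minor caveats: (i) your justification that $\calE_{\rho_B}(\rho_{UB})$ and $\rho_U\otimes\rho_B$ commute is too quick — commuting with the common family $\{\bbI_U\otimes P_i^B\}$ does not by itself give mutual commutation; you also need the classical structure $\rho_{UB}=\sum_u p_U(u)\den{u}{u}\otimes\rho_{B\lvert u}$, which does hold in the lemma's setting and is precisely what the paper's pinching construction is designed to exploit; (ii) your closing remark undersells the pinching inequality: applied inside the $-\alpha$ power, as the paper does, it also yields the sharp factor $v_B^{\alpha}$, not the looser $v_B$.
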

\begin{proof}
   See  Appendix~\ref{proof:lemma:Decodability}.
\end{proof}
\subsubsection*{Error Analysis}
\label{sec:error_analysis}
\sloppy Now let $\Theta_B(1)$ be the state the legitimate receiver receives when the transmitter transmits the message $m=1$. Therefore, 
\begin{align}
    &\Theta_B(1)=\tra_E\calN_{AS\to BE}\nonumber\\
    &\tra_{YJK}\left(W^{(\calC_1)}_{\bar{S}\to YAJK}\otimes\bbI_S\left(\rho_{\bar{S}S}\right)W^{\dagger(\calC_1)}_{\bar{S}\to YAJK}\otimes\bbI_S\right).\nonumber
\end{align}
Also, let
\begin{align}
    \hat{\Theta}_B(1)&\triangleq\frac{1}{2^{R_J+R_K}}\sum_{j\in\sbra{1}{2^{R_J}}}\sum_{k\in\sbra{1}{2^{R_K}}}\nonumber\\
    &\quad\tra_E\calN_{AS\to BE}\tra_Y\left(\rho_{YAS\lvert U(j,k,1)}\right).\nonumber
\end{align}To bound the probability of error average over the random choice of the codebook, from the symmetry of the codebook construction, it is sufficient to bound $\bbE_C\bbP\left\{\hat{M}\ne1\lvert M=1\right\}$ as follows,
\begin{align}
    &\bbE_C\bbP\left\{\hat{M}\ne1\lvert M=1\right\}\nonumber\\
    &\quad=\bbE_C\left[\tra\left[\left(\sum_{j,k,m'\ne1}\Upsilon(j,k,m')\right)\Theta_B(1)\right]\right]\nonumber\\
    &\quad\mathop\le\limits^{(a)}2\bbE_C\left[\tra\left[\left(\sum_{j,k,m'\ne1}\Upsilon(j,k,m')\right)\hat{\Theta}_B(1)\right]\right]\nonumber\\
    &\quad\qquad+2\bbE_C\left|\sqrt{\tra\left[\left(\sum_{j,k,m'\ne1}\Upsilon(j,k,m')\right)\Theta_B(1)\right]}\right.\nonumber\\
    &\quad\qquad\left.-\sqrt{\tra\left[\left(\sum_{j,k,m'\ne1}\Upsilon(j,k,m')\right)\hat{\Theta}_B(1)\right]}\right|^2\nonumber\\
    &\quad\mathop\le\limits^{(b)}2\bbE_C\left[\tra\left[\left(\sum_{j,k,m'\ne1}\Upsilon(j,k,m')\right)\hat{\Theta}_B(1)\right]\right]\nonumber\\
    &\quad\qquad+2\bbE_C\left[\Pu^2\left(\hat{\Theta}_B(1),\Theta_B(1)\right)\right]\nonumber\\
    &\quad\mathop=\limits^{(c)}2\sum_{j,k,m'\ne1}\bbE_C\left[\tra\left[\Upsilon(j,k,m')\hat{\Theta}_B(1)\right]\right]\nonumber\\
    &\quad\qquad+2\bbE_C\left[\Pu^2\left(\hat{\Theta}_B(1),\Theta_B(1)\right)\right],\label{eq:Error_Prob_m1}
\end{align}
\begin{itemize}
    \item[$(a)$] follows since $(x+y)^2\le2(x^2+y^2)$;
    \item[$(b)$] follows since $\sum_{j,k,m'\ne1}\Upsilon(j,k,m')\preceq\bbI$ and from  \cite{Anurag19} for two states $\rho_A\in\calD(\calH)$ and $\sigma_A\in\calD(\calH)$ and for every operator $0\preceq\Lambda\preceq\bbI$ we have
    \begin{align}
        \abs{\sqrt{\tra[\Lambda\rho]}-\sqrt{\tra[\Lambda\sigma]}}\le\Pu(\rho,\sigma);\nonumber
    \end{align}
    \item[$(c)$] follows since the trace operation and the expectation are linear. 
\end{itemize}We first bound the second term on the \ac{RHS} of \eqref{eq:Error_Prob_m1} as follows,
\begin{align}
    &2\bbE_C\left[\Pu^2\left(\hat{\Theta}_B(1),\Theta_B(1)\right)\right]\nonumber\\
    &=2\bbE_C\Big[\Pu^2\Big(\frac{1}{2^{R_J+R_K}}\sum_{j\in\sbra{1}{2^{R_J}}}\sum_{k\in\sbra{1}{2^{R_K}}}\tra_{YE}\nonumber\\
    &\quad\calN_{AS\to BE}\left(\rho_{YAS\lvert U(j,k,1)}\right),\tra_{YEJK}\nonumber\\
    &\quad\calN_{AS\to BE}\left(W^{(\calC_1)}_{\bar{S}\to YAJK}\otimes\bbI_S\left(\rho_{\bar{S}S}\right)W^{\dagger(\calC_1)}_{\bar{S}\to YAJK}\otimes\bbI_S\right)\Big)\Big]\nonumber\\
    &\mathop\le\limits^{(a)}2\bbE_C\left[\Pu^2\left(\tau_{YASJK\lvert C}\right.\right.\nonumber\\
    &\left.\left.,W^{(\calC_1)}_{\bar{S}\to YAJK}\otimes\bbI_S\left(\rho_{\bar{S}S}\right)W^{\dagger(\calC_1)}_{\bar{S}\to YAJK}\otimes\bbI_S\right)\right]\nonumber\\
    &\mathop=\limits^{(b)}2\bbE_C\left[\Pu^2\left(\tau_{S\lvert C},\rho_S\right)\right]\nonumber\\
    &\mathop\le\limits^{(c)}2\bbE_C\left[1-2^{-\ubar{\D}_{1+\alpha}\left(\tau_{S\lvert C}\lVert\rho_S\right)}\right]\nonumber\\
    &\mathop\le\limits^{(d)}2\ln2\bbE_C\left[\ubar{\D}_{1+\alpha}\left(\tau_{S\lvert C}\lVert\rho_S\right)\right]\nonumber\\
    &\mathop\le\limits^{(e)}\frac{2v_S^\alpha}{\alpha}2^{\alpha\left(-(R_J+R_K)+\ubar{\D}_{1+\alpha}\left(\rho_{US}\lVert\rho_U\otimes\rho_S\right)\right)},\label{eq:Distance_Two_dist}
\end{align}where
\begin{itemize}
    \item[$(a)$] follows since from \cite{Branum96,Frank13} for two  states $\rho_A\in\calD(\calH)$ and $\sigma_A\in\calD(\calH)$ and quantum channel $\calN(\cdot):\calL(X)\to\calL(Y)$ we have 
    \begin{align}
        \Pu\big(\calN(\rho),\calN(\sigma)\big)\le\Pu(\rho,\sigma);\nonumber
    \end{align}
    \item[$(b)$] follows from \eqref{eq:Isomery} with $\tau_{S\lvert C}\triangleq\frac{1}{2^{R_J+R_K}}\sum_{(j,k)}\rho_{S\lvert U(j,k,1)}$;
    \item[$(c)$] follows since for two states $\rho\in\calD(\calH)$ and $\sigma\in\calD(\calH)$ we have \cite[Corollary~4.3]{Tomamichel16},
    \begin{align}
        F^2(\rho,\sigma)=2^{-\ubar{\D}_{1/2}\left(\rho\lVert\sigma\right)}\ge2^{-\ubar{\D}_{1+\alpha}\left(\rho\lVert\sigma\right)};\label{eq:purified-sandReny}
    \end{align}
    \item[$(d)$] follows since 
    \begin{align}
        1-2^{-\frac{x}{\ln2}}=1-e^{-x}\le x;\label{eq:Usefull_Inequality}
    \end{align}
    \item[$(e)$] follows Lemma~\ref{lemma:Resolvability_1}.
\end{itemize}
We now bound the first term on the \ac{RHS} of \eqref{eq:Error_Prob_m1} as follows,
\begin{align}
    &2\sum_{j,k,m'\ne1}\bbE_C\left[\tra\left[\Upsilon(j,k,m')\hat{\Theta}_B(1)\right]\right]\nonumber\\
    &=\frac{2}{2^{R_J+R_K}}\sum_{(j,k)}\sum_{(j',k',m'\ne1)}\bbE_C\left[\tra\left[\Upsilon(j',k',m')\rho_{B\lvert U(j,k,1)}\right]\right]\nonumber\\
    &\mathop=\limits^{(a)}2\sum_{(j',k',m'\ne1)}\bbE_C\left[\tra\left[\Upsilon(j',k',m')\rho_{B\lvert U(1,1,1)}\right]\right]\nonumber\\
    &\le2\bbE_C\left[\tra\left[\left(\bbI-\Upsilon(1,1,1)\right)\rho_{B\lvert U(1,1,1)}\right]\right]\nonumber\\
    &\mathop\le\limits^{(b)}4\bbE_C\left[\tra\left[\left(\bbI-\Gamma(1,1,1)\right)\rho_{B\lvert U(1,1,1)}\right]\right]\nonumber\\
    &+8\sum_{(j,k,m')\ne(1,1,1)}\bbE_C\left[\tra\left[\Gamma(j,k,m')\rho_{B\lvert U(1,1,1)}\right]\right],\label{eq:Reliability_Second_Term}
\end{align}where $(a)$ follows from the symmetry of the codebook construction \ac{wrt} $(j,k)$, and $(b)$ follows from Hayashi-Nagaoka inequality \cite{Hayashi03}. We now bound the first term on the \ac{RHS} of \eqref{eq:Reliability_Second_Term} as 
\begin{align}
    &4\bbE_C\left[\tra\left[\left(\bbI-\Gamma(1,1,1)\right)\rho_{B\lvert U(1,1,1)}\right]\right]\nonumber\\
    &\mathop=\limits^{(a)}4\bbE_C\left[\tra\left[\left(\bbI-\tra_U\left[\Pi_{UB}\left(\den{U(1,1,1)}{U(1,1,1)}\otimes\bbI_B\right)\right]\right)\right.\right.\nonumber\\
    &\qquad\left.\left.\times\rho_{B\lvert U(1,1,1)}\right]\right]\nonumber\\
    &\mathop=\limits^{(b)}4\tra\left[\left(\bbI-\Pi_{UB}\right)\rho_{UB}\right]\nonumber\\
    &\mathop\le\limits^{(c)}4v_B^\alpha2^{\alpha(R_J+R_K+R)}2^{-\alpha\ubar{\D}_{1-\alpha}\big(\rho_{UB}\lVert\rho_U\otimes\rho_B\big)},\label{eq:First_Term_Pe}
\end{align}where
\begin{itemize}
    \item[$(a)$] follows from the definition of $\Gamma(1,1,1)$ in \eqref{eq:Operator};
    \item[$(b)$] follows from the linearity of the trace operation and the expectation, and taking the expectation \ac{wrt} the random codebook $C$;
    \item[$(c)$] follows from Lemma~\ref{lemma:Decodability}.
\end{itemize}
\begin{figure*}[b!]
\hrulefill
\setcounter{equation}{23}
\begin{align}
    &\bbE_C\Pu^2\left(\frac{1}{2^{R_J+R_K+R}}\sum_{(j,k,m)}\den{k}{k}\otimes\rho_{E\lvert U(j,k,m)},\frac{1}{2^{R_K}}\sum_{k\in\left[1:2^{R_K}\right]}\den{k}{k}\otimes\rho_E\right)\nonumber\\
    &=1-\bbE_CF^2\left(\frac{1}{2^{R_J+R_K+R}}\sum_{(j,k,m)}\den{k}{k}\otimes\rho_{E\lvert U(j,k,m)},\frac{1}{2^{R_K}}\sum_{k\in\left[1:2^{R_K}\right]}\den{k}{k}\otimes\rho_E\right)\nonumber\\
    &\mathop\le\limits^{(a)}2-2\bbE_CF\left(\frac{1}{2^{R_J+R_K+R}}\sum_{(j,k,m)}\den{k}{k}\otimes\rho_{E\lvert U(j,k,m)},\frac{1}{2^{R_K}}\sum_{k\in\left[1:2^{R_K}\right]}\den{k}{k}\otimes\rho_E\right)\nonumber\\
    &\mathop=\limits^{(b)}2-2\bbE_C\left(\frac{1}{2^{R_K}}\sum_{k\in\left[1:2^{R_K}\right]}F\left(\frac{1}{2^{R_J+R}}\sum_{(j,m)}\rho_{E\lvert U(j,k,m)},\rho_E\right)\right)\nonumber\\
    &\mathop\le\limits^{(c)}\frac{2}{2^{R_K}}\sum_{k\in\left[1:2^{R_K}\right]}\bbE_C\Pu^2\left(\frac{1}{2^{R_J+R}}\sum_{(j,m)}\rho_{E\lvert U(j,k,m)},\rho_E\right)\nonumber\\
    &\mathop\le\limits^{(d)}\frac{2}{2^{R_K}}\sum_{k\in\left[1:2^{R_K}\right]}\bbE_C\left[1-2^{-\ubar{\D}_{1+\alpha}\left(\frac{1}{2^{R_J+R}}\sum_{(j,m)}\rho_{E\lvert U(j,k,m)}\big\lVert\rho_E\right)}\right]\nonumber\\
    &\mathop\le\limits^{(e)}\frac{2\ln2}{2^{R_K}}\sum_{k\in\left[1:2^{R_K}\right]}\bbE_C\left[\ubar{\D}_{1+\alpha}\left(\frac{1}{2^{R_J+R}}\sum_{(j,m)}\rho_{E\lvert U(j,k,m)}\Big\lVert\rho_E\right)\right]\nonumber\\
    &\mathop\le\limits^{(f)}\frac{2v_E^\alpha}{\alpha}2^{\alpha\left(-(R_J+R)+\ubar{\D}_{1+\alpha}\left(\rho_{UE}\lVert\rho_U\otimes\rho_E\right)\right)},\label{eq:Final_Covertness}
\end{align}
\setcounter{equation}{21}
\end{figure*}
We now bound the second term on the \ac{RHS} of \eqref{eq:Reliability_Second_Term} as
\begin{align}
    &8\sum_{(j,k,m')\ne(1,1,1)}\bbE_C\left[\tra\left[\Gamma(j,k,m')\rho_{B\lvert U(1,1,1)}\right]\right]\nonumber\\
    &\quad\mathop=\limits^{(a)}8\sum_{(j,k,m')\ne(1,1,1)}\bbE_C\left[\tra\left[\tra_U\left[\Pi_{UB}\right.\right.\right.\nonumber\\
    &\qquad\left.\left.\left.\times\left(\den{U(j,k,m')}{U(j,k,m')}\otimes\bbI_B\right)\right]\rho_{B\lvert U(1,1,1)}\right]\right]\nonumber\\
    &\quad=8\sum_{(j,k,m')\ne(1,1,1)}\tra\left[\bbE_C\left[\Pi_{UB}\right.\right.\nonumber\\
    &\qquad\,\left.\left.\times\left(\den{U(j,k,m')}{U(j,k,m')}\otimes\rho_{B\lvert U(1,1,1)}\right)\right]\right]\nonumber\\
    &\quad\mathop=\limits^{(b)}8\sum_{(j,k,m')\ne(1,1,1)}\tra\left[\Pi_{UB}\left(\rho_U\otimes\rho_B\right)\right]\nonumber\\
    &\quad\le8\times2^{(R_J+R_K+R)}\tra\left[\Pi_{UB}\left(\rho_U\otimes\rho_B\right)\right]\nonumber\\
    &\quad\mathop\le\limits^{(c)}8v_B^\alpha2^{\alpha(R_J+R_K+R)}2^{-\alpha\ubar{\D}_{1-\alpha}\big(\rho_{UB}\lVert\rho_U\otimes\rho_B\big)},\label{eq:Second_Term_Pe}
\end{align}where
\begin{itemize}
    \item[$(a)$] follows from the definition of $\Gamma(j,k,m')$ in \eqref{eq:Operator};
    \item[$(b)$] follows from the linearity of the trace operation and the expectation and taking the expectation \ac{wrt} the random codebook $C$ and the fact that the symbols $U(i,j,k)$ is independent of $U(i',j',k')$, for $(i,j,k)\ne (i',j',k')$; 
    \item[$(c)$] follows from Lemma~\ref{lemma:Decodability}.
\end{itemize}
Then, substituting \eqref{eq:First_Term_Pe} and \eqref{eq:Second_Term_Pe} into \eqref{eq:Reliability_Second_Term} and substituting \eqref{eq:Distance_Two_dist} and \eqref{eq:Reliability_Second_Term} back to \eqref{eq:Error_Prob_m1} leads to,
\begin{align}
    &\bbE_C\bbP\left\{\hat{M}\ne1\lvert M=1\right\}\nonumber\\
    &\le\frac{2v_S^\alpha}{\alpha}2^{\alpha\left(-(R_J+R_K)+\ubar{\D}_{1+\alpha}\left(\rho_{US}\lVert\rho_U\otimes\rho_S\right)\right)}\nonumber\\
    &\qquad+12v_B^\alpha2^{\alpha\left(R_J+R_K+R-\ubar{\D}_{1-\alpha}\big(\rho_{UB}\lVert\rho_U\otimes\rho_B\big)\right)}.\label{eq:Final_Pe}
\end{align}
\subsubsection*{Covertness and Security Analysis}
 To prove the covertness of our code design, we bound $\bbE_C\Pu^2\left(\hat{\rho}_{KE},\sigma_K\otimes\rho_E\right)$, where $\hat{\rho}_{KE}\triangleq\frac{1}{2^{R_J+R_K+R}}\sum_{(j,k,m)}\den{k}{k}\otimes\rho_{E\lvert U(j,k,m)}$ is the state induced by our code design, and then choose $\rho_{UAS}$ such that $\rho_E=\rho_0$. We have, \eqref{eq:Final_Covertness} at the bottom of the page, where
\begin{itemize}
\setcounter{equation}{24}
    \item[$(a)$] follows since
    \begin{align}
        1-F^2(\rho,\sigma)&=\big(1+F(\rho,\sigma)\big)\big(1-F(\rho,\sigma)\big)\nonumber\\
        &\le2\big(1-F(\rho,\sigma)\big),\label{eq:Purified_Bure}
    \end{align}where the  inequality follows since $0\le\F(\rho,\sigma)\le1$;
    \item[$(b)$] follows from the direct-sum property of the fidelity distance \cite[Eq.~(3.8)]{Fidelity24},
    \begin{align}
        &\F\left(\sum_{x\in\calX}p(x)\den{x}{x}\otimes\rho_x,\sum_{x\in\calX}p(x)\den{x}{x}\otimes\sigma_x\right)\nonumber\\
        &=\sum_{x\in\calX}p(x)\F\left(\rho_x,\sigma_x\right);\label{eq:Direct_Sum_Fidelity}
    \end{align} 
    \item[$(c)$] follows since $0\le\F(\rho,\sigma)\le1$, we have $1-F(\rho,\sigma)\le\Pu^2(\rho,\sigma)$;
    \item[$(d)$] follows from \eqref{eq:purified-sandReny} and the definition of the purified distance;
    \item[$(e)$] follows from \eqref{eq:Usefull_Inequality};
    \item[$(f)$] follows from Lemma~\ref{lemma:Resolvability_1}.
\end{itemize}Then, for two arbitrary states $\rho\in\calD(\calH)$ and $\sigma\in\calD(\calH)$ we have \cite[Theorem~9.3.1]{Wilde_Book}
\begin{align}
    \lVert\rho-\sigma\rVert_1\le2\Pu(\rho,\sigma).\label{eq:Tr_Dist_Pu_Dist}
\end{align}Therefore, from  \eqref{eq:Tr_Dist_Pu_Dist} and \eqref{eq:Final_Covertness}, we obtain \eqref{eq:Covertnes_Lemma}.
\end{proof}

\subsection{Asymptotic Results}The following result can be deduced from Theorem~\ref{thm:One_Shot}, which establishes an achievable rate region for the asymptotic regime defined in Definition~\ref{defi:Asymp_CC}.
\begin{theorem}
\label{thm:Asymp}
Define the rate region $\calR_{\textup{CC-CSK}}$ as follows,
\begin{align}%
&\calR_{\textup{CC-CSK}}\nonumber\\
&\triangleq
\bigcup_{\rho_{UAS}} %
\left\{ \begin{array}{rl}
  (R,R_K) \,:\;
	R &\leq I(U;B)- I(U;S)\\
        R_K &\leq I(U;B)- I(U;E)\\
	R+R_K &\leq I(U;B)
	\end{array}
\right\},
\label{eq:inRnone}
\end{align}where $\rho_{UAS}=\sum\limits_{u}p_{U}(u)\den{u}{u}_U\otimes\rho_{AS\lvert u}$ such that $\tra_{UA}[\rho_{UAS}]=\tra_{\bar{S}}[\rho_{\bar{S}S}]$ and $\rho_E=\rho_0$. 
An inner bound for the covert capacity region of a quantum state-dependent channel $\calN_{AS\to BE}$, depicted in Fig.~\ref{fig:System_Model}, is
    \begin{align}
        \calC_{\textup{CC-CSK}}\supseteq\calR_{\textup{CC-CSK}}.\nonumber
    \end{align}
\end{theorem}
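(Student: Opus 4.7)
The plan is to specialize Theorem~\ref{thm:One_Shot} to the $n$-fold \ac{iid} setting and then take $\alpha\downarrow 0$ with care. First, I would invoke Theorem~\ref{thm:One_Shot} for the product channel $\calN_{AS\to BE}^{\otimes n}$ and the tensor-product classical-quantum state $\rho_{UAS}^{\otimes n}$, with message, key, and auxiliary-bin rates $nR$, $nR_K$, and $nR_J$ respectively. Since $U$ is classical by construction, the marginals $\rho_{US}$, $\rho_{UB}$, $\rho_{UE}$ are classical-quantum states. The tensor-product structure of the underlying state yields additivity of the sandwiched R\'enyi relative entropy, so $\ubar{\D}_{1\pm\alpha}\bigl(\rho_{UX}^{\otimes n}\big\lVert\rho_U^{\otimes n}\otimes\rho_X^{\otimes n}\bigr)=n\,\ubar{\D}_{1\pm\alpha}(\rho_{UX}\lVert\rho_U\otimes\rho_X)$ for every $X\in\{S,B,E\}$.

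Next, I would exploit the polynomial growth of the pinching constants: the numbers $v_{S^n}$, $v_{B^n}$, and $v_{E^n}$ of distinct eigenvalues of $\rho_S^{\otimes n}$, $\rho_B^{\otimes n}$, and $\rho_E^{\otimes n}$ are bounded by the standard type-counting expression $\binom{n+d-1}{d-1}=\textup{poly}(n)$ in the underlying Hilbert-space dimension $d$. Consequently, the prefactors $v_{S^n}^\alpha$, $v_{B^n}^\alpha$, and $v_{E^n}^{\alpha/2}$ appearing in \eqref{eq:Pe_Lemma} and \eqref{eq:Covertnes_Lemma} are sub-exponential in $n$. Hence, for any fixed $\alpha\in(0,1/2)$, choosing $(R,R_K,R_J)$ such that
\begin{align*}
R_J+R_K &> \ubar{\D}_{1+\alpha}(\rho_{US}\lVert\rho_U\otimes\rho_S),\\
R_J+R_K+R &< \ubar{\D}_{1-\alpha}(\rho_{UB}\lVert\rho_U\otimes\rho_B),\\
R_J+R &> \ubar{\D}_{1+\alpha}(\rho_{UE}\lVert\rho_U\otimes\rho_E),
\end{align*}
makes the right-hand sides of \eqref{eq:Pe_Lemma} and \eqref{eq:Covertnes_Lemma} decay exponentially in $n$, establishing \eqref{eq:Pe_Asymp} and \eqref{eq:CC_CSK_Metric_Asymp}.

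Finally, I would let $\alpha\downarrow 0$ and invoke the continuity relation $\lim_{\alpha\downarrow 0}\ubar{\D}_{1\pm\alpha}(\rho\lVert\sigma)=\bbD(\rho\lVert\sigma)$, together with the identity $\bbD(\rho_{UX}\lVert\rho_U\otimes\rho_X)=I(U;X)$ for a classical-quantum state $\rho_{UX}$. The three R\'enyi conditions then relax to $R_J+R_K> I(U;S)$, $R_J+R_K+R< I(U;B)$, and $R_J+R>I(U;E)$. A Fourier--Motzkin elimination of the nonnegative auxiliary rate $R_J$ from this system produces exactly the three bounds $R\le I(U;B)-I(U;S)$, $R_K\le I(U;B)-I(U;E)$, and $R+R_K\le I(U;B)$ that define $\calR_{\textup{CC-CSK}}$ in \eqref{eq:inRnone}. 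The marginal constraint $\tra_{UA}[\rho_{UAS}]=\tra_{\bar S}[\rho_{\bar S S}]$ and the covertness-matching condition $\rho_E=\rho_0$ are inherited directly from Theorem~\ref{thm:One_Shot}, so every interior point of $\calR_{\textup{CC-CSK}}$ is achievable; closing under limits yields the full region.

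The main subtlety will be the joint handling of $\alpha$ and $n$: the prefactors carry $1/\alpha$ and $1/\sqrt{\alpha}$ and the $v^\alpha$ terms, while the gap $\ubar{\D}_{1\pm\alpha}-\bbD$ vanishes only in the limit $\alpha\downarrow 0$. The cleanest route is the two-stage limit above, fixing $\alpha>0$ to drive $n\to\infty$ and certify exponential decay of both error and covertness, and only then taking $\alpha\downarrow 0$ to relax the R\'enyi rate conditions to their mutual-information form. This decoupling avoids having to tune $\alpha=\alpha_n$ jointly with the block length, which would otherwise require tracking competing polynomial and exponential dependencies on $n$.
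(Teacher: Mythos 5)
Your proposal is correct and follows essentially the same route as the paper: apply the one-shot bounds of Theorem~\ref{thm:One_Shot} to the $n$-fold product state, bound the pinching constants polynomially via type counting, use additivity of the sandwiched R\'enyi divergence together with its convergence to $\bbD(\cdot\lVert\cdot)=I(U;\cdot)$ as $\alpha\downarrow 0$, and eliminate $R_J$ by Fourier--Motzkin to obtain \eqref{eq:inRnone}. Your explicit two-stage limit (fix $\alpha$ small enough for a given interior point, then $n\to\infty$) is simply a more careful rendering of the paper's terse ``$n\to\infty$ and $\alpha\to 0$'' step.
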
The proof of Theorem~\ref{thm:Asymp} is based on Theorem~\ref{thm:One_Shot} assuming that the transmitter is allowed to use the channel $n$ times independently, when $n$ goes to infinity. The penalty term $I(U;S)$ represents the reduction in transmission rate required for the redundancy we need to add to correlate the codeword with the channel state, while the penalty term $I(U;E)$ accounts for the rate loss required to generate a covert \textit{secure} key.
\begin{proof}
We first bound the number of distinct components of the pinching maps $\calE_{\rho_S}$ and $\calE_{\rho_E}$ in the asymptotic and \ac{iid} case.   
Assuming that the dimension of the Hilbert space of $\rho_B$ is $d_B$, $\rho_B$ at most has $d_B$ eigenvalues. From \cite[Lemma~2.2]{Csiszar_Korner_Book} the number of different types of sequences $X^n\in\calX^n$ is less than $(n+1)^{\card{\calX}}$, therefore, $\rho_B^{\otimes n}$ has at most $(n+1)^{d_B}$ eigenvalues. Similarly, assuming that the dimension of the Hilbert space of $\rho_E$ is $d_E$, we can bound the number of eigenvalues of $\rho_E$, and therefore we have
\begin{align}
    v_S\le(n+1)^{d_S},\quad v_E\le(n+1)^{d_E}.\label{eq:vS_vB}
\end{align} 
Then, from \eqref{eq:Final_Pe} and \eqref{eq:Final_Covertness}, assuming that the $U^n$ codewords are generated in an \ac{iid} manner, the corresponding classical-quantum state is a product state, i.e.,
\begin{align}
    \rho_U^{\otimes n}&=\sum_{(u_1,u_2,\cdots,u_n)}\prod_{t=1}^nP_U(u_t)\den{u_1}{u_1}\otimes\den{u_2}{u_2}\otimes\cdots\nonumber\\
    &\qquad\otimes\den{u_n}{u_n}.\nonumber
\end{align}When the channel $\calN_{AS \to BE}$ is used $n$ times independently, there exists a code such that
\begin{subequations}\label{eq:Final_PeCov_Asymp}
\begin{align}
    &\bbP\left\{M\ne\hat{M}\right\}\le\frac{2v_S^\alpha}{\alpha}2^{\alpha\left(-n(R_J+R_K)+\ubar{\D}_{1+\alpha}\left(\rho_{US}^{\otimes n}\lVert\rho_U^{\otimes n}\otimes\rho_S^{\otimes n}\right)\right)}\nonumber\\
    &\quad+12v_B^\alpha2^{\alpha\left(n(R_J+R_K+R)-\ubar{\D}_{1-\alpha}\big(\rho_{UB}^{\otimes n}\lVert\rho_U^{\otimes n}\otimes\rho_B^{\otimes n}\big)\right)}\\
    &\left\lVert\frac{1}{2^{R_J+R_K+R}}\sum_{(j,k,m)}\den{k}{k}\otimes\rho_{E\lvert U(j,k,m)}\right.\nonumber\\
    &\left.,\frac{1}{2^{R_K}}\sum_k\den{k}{k}\otimes\rho_0^{\otimes n}\right\rVert_1\nonumber\\
    &\le\frac{2(v_E)^{\frac{\alpha}{2}}}{\sqrt{\alpha}}2^{\frac{\alpha}{2}\left(-n(R_J+R)+\ubar{\D}_{1+\alpha}\big(\rho_{UE}^{\otimes n}\lVert\rho_U^{\otimes n}\otimes\rho_E^{\otimes n}\big)\right)}.
    \end{align}
\end{subequations}
Therefore, when $n\to\infty$ and $\alpha\to0$, from \eqref{eq:Final_PeCov_Asymp} there exists a sequence of codes such that \eqref{eq:Pe_Asymp} and \eqref{eq:CC_CSK_Metric_Asymp} hold if
\begin{subequations}\label{eq:Final_PeCov_Asymp_3}
\begin{align}
    R_J+R_K&>I(U;S),\label{eq:Encoding_Err}\\
    R_J+R_K+R&<I(U;B),\label{eq:Decoding_Err}\\
        R_J+R&>I(U;E).\label{eq:Covert_Cons}
\end{align}Applying Fourier-Motzkin elimination procedure \cite{ElGamalKim}, to eliminate $R_J$, in \eqref{eq:Final_PeCov_Asymp_3}, leads to the region in Corollary~\ref{lemma:Resolvability_1}.
\end{subequations}
\end{proof}
A consequence of Theorem~\ref{thm:Asymp} is the following two achievable rates for covert communication and covert secret key generation over a quantum state-dependent channel. In Corollary~\ref{Cor:Acivable_Covert_NC}, we project $\calR_{\textup{CC-CSK}}$ on the $R$ axis, i.e., setting $R_K=0$, and in Corollary~\ref{Cor:Acivable_SK_NC} we project $\calR_{\textup{CC-CSK}}$ on the $R_K$ axis, i.e., setting  $R=0$.
\begin{corollary}
\label{Cor:Acivable_Covert_NC}
The covert capacity of a quantum state-dependent channel $\calN_{AS\to BE}$, depicted in Fig.~\ref{fig:System_Model}, is lower bounded by
\begin{align}%
\calR_{\textup{CC}}\triangleq
\textup{sup}_{\rho_{UAS}} I(U;B)- I(U;S),
\label{eq:inRnone_R}
\end{align}where $\rho_{UAS}=\sum\limits_{u}p_{U}(u)\den{u}{u}_U\otimes\rho_{AS\lvert u}$ such that $\tra_{UA}[\rho_{UAS}]=\tra_{\bar{S}}[\rho_{\bar{S}S}]$, $I(U;B)\ge I(U;E)$, and $\rho_E=\rho_0$. 
    \end{corollary}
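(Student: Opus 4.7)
The plan is to obtain Corollary~\ref{Cor:Acivable_Covert_NC} as a direct projection of the region $\calR_{\textup{CC-CSK}}$ of Theorem~\ref{thm:Asymp} onto the $R$-axis, by setting $R_K = 0$. Evaluating the three defining inequalities of $\calR_{\textup{CC-CSK}}$ at $R_K = 0$ yields $R \le I(U;B) - I(U;S)$, $0 \le I(U;B) - I(U;E)$, and $R \le I(U;B)$. The middle inequality is a feasibility condition on the auxiliary state and becomes the side constraint $I(U;B) \ge I(U;E)$ listed in the corollary; the third is implied by the first because $I(U;S) \ge 0$ and is therefore redundant. Taking the supremum of $I(U;B) - I(U;S)$ over admissible $\rho_{UAS}$ that satisfy this side constraint then produces the claimed lower bound on $\C_{\textup{CC}}$.

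As a sanity check, I would also revisit the one-shot construction of Theorem~\ref{thm:One_Shot} directly. Removing the key, i.e., taking $\calK = \{1\}$ so that $R_K = 0$, the random codebook reduces to $\{U(j,m)\}_{(j,m)\in\calJ\times\calM}$, and the Gel'fand--Pinsker isometry $W^{(\calC_m)}_{\bar{S}\to YAJ}$ purifies the transmitter's share of the \ac{CSI} into the register $Y$, the bin index $J$, and the channel input $A$. The error analysis and the covertness bound \eqref{eq:Covertnes_Lemma} still apply verbatim, and in the \ac{iid} asymptotic regime as in the proof of Theorem~\ref{thm:Asymp}, the three effective constraints specialize to $R_J > I(U;S)$ for CSI alignment, $R_J + R < I(U;B)$ for reliable decoding at the legitimate receiver, and $R_J + R > I(U;E)$ for covertness at the warden via channel resolvability.

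Eliminating the auxiliary rate $R_J$ from these three inequalities by Fourier--Motzkin elimination gives exactly $R < I(U;B) - I(U;S)$ together with the compatibility condition $I(U;B) > I(U;E)$, recovering the region $\calR_{\textup{CC}}$ in \eqref{eq:inRnone_R}. There is no genuine obstacle here, as the statement is a direct projection of an already established achievable region. The only mild subtlety worth verifying is that the covertness requirement $R_J + R > I(U;E)$ remains active (and is not trivially absorbed) when $R_K$ is set to zero, which is immediate from \eqref{eq:Covertnes_Lemma} since $R_K$ does not appear in the exponent controlling the covertness metric.
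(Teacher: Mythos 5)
Your proposal is correct and matches the paper's own argument: the corollary is obtained precisely by projecting $\calR_{\textup{CC-CSK}}$ of Theorem~\ref{thm:Asymp} onto the $R$-axis by setting $R_K=0$, with $0\le I(U;B)-I(U;E)$ becoming the side constraint and $R\le I(U;B)$ being redundant. Your secondary one-shot check (constraints $R_J>I(U;S)$, $R_J+R<I(U;B)$, $R_J+R>I(U;E)$, then Fourier--Motzkin elimination of $R_J$) is just the proof of Theorem~\ref{thm:Asymp} specialized to $R_K=0$, so it is consistent with, not different from, the paper's route.
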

        \begin{corollary}
    \label{Cor:Acivable_SK_NC}
    The covert secret key capacity of a quantum state-dependent channel $\calN_{AS\to BE}$, depicted in Fig.~\ref{fig:System_Model}, is lower bounded by
    \begin{align}%
\calR_{\textup{CSK}}\triangleq
\textup{sup}_{\rho_{UAS}} I(U;B)- I(U;E),
\label{eq:inRnone_RK}
\end{align}where $\rho_{UAS}=\sum\limits_{u}p_{U}(u)\den{u}{u}_U\otimes\rho_{AS\lvert u}$ such that $\tra_{UA}[\rho_{UAS}]=\tra_{\bar{S}}[\rho_{\bar{S}S}]$, $I(U;B)\ge I(U;S)$, and $\rho_E=\rho_0$. 
\end{corollary}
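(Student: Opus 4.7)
The plan is to obtain Corollary~\ref{Cor:Acivable_SK_NC} as a direct projection of the inner bound $\calR_{\textup{CC-CSK}}$ established in Theorem~\ref{thm:Asymp} onto the $R_K$-axis, i.e., by specializing to the pure key-generation scenario in which no covert message is transmitted. Concretely, I would set $R=0$ in the definition of $\calR_{\textup{CC-CSK}}$ and then inspect which of the three constraints in \eqref{eq:inRnone} remain active.

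With $R=0$, the first constraint $R \leq I(U;B) - I(U;S)$ becomes the feasibility condition $I(U;B) \geq I(U;S)$, which must be imposed on the admissible set of states $\rho_{UAS}$; this is precisely the hypothesis appearing in the statement of the corollary. The third constraint reduces to $R_K \leq I(U;B)$, which is automatically dominated by the second constraint $R_K \leq I(U;B) - I(U;E)$ because $I(U;E) \geq 0$ for any classical-quantum state. Hence the only binding upper bound on $R_K$ is the one inherited from the secret-key constraint of Theorem~\ref{thm:Asymp}, namely $R_K \leq I(U;B) - I(U;E)$, and optimizing the achievable key rate over all admissible $\rho_{UAS}$ satisfying $\tra_{UA}[\rho_{UAS}] = \tra_{\bar{S}}[\rho_{\bar{S}S}]$, $\rho_E = \rho_0$, and $I(U;B) \geq I(U;S)$ yields exactly the expression in \eqref{eq:inRnone_RK}.

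Operationally, the underlying coding scheme is the one used to prove Theorem~\ref{thm:One_Shot}: the encoder still partitions the $U$-codebook into the three layers indexed by $(j,k,m)$, but now the message layer is trivial ($\card{\calM}=1$), so the index $m$ serves only as additional local randomness. The reliability analysis in \eqref{eq:Final_Pe} and the covertness/security analysis in \eqref{eq:Final_Covertness} go through unchanged after replacing $R$ by $0$, confirming that the extraction of a key of rate $R_K < I(U;B) - I(U;E)$ is simultaneously reliable for the receiver and covert with respect to the warden.

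Since the argument is a pure projection of an already-proved region, I do not anticipate any technical obstacle; the only thing to verify carefully is that no constraint of \eqref{eq:inRnone} is silently violated after setting $R=0$, which is why the condition $I(U;B) \geq I(U;S)$ must be stated as part of the hypothesis. This mirrors the projection used to derive Corollary~\ref{Cor:Acivable_Covert_NC} on the $R$-axis, with the roles of $S$ and $E$ interchanged in the feasibility requirement.
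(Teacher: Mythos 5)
Your proposal is correct and follows exactly the paper's route: the corollary is obtained by projecting the region $\calR_{\textup{CC-CSK}}$ of Theorem~\ref{thm:Asymp} onto the $R_K$-axis by setting $R=0$, with the first constraint turning into the feasibility condition $I(U;B)\ge I(U;S)$ and the sum-rate constraint being dominated by $R_K\le I(U;B)-I(U;E)$. Your additional remarks on the underlying code (the $m$-layer becoming pure local randomness) are consistent with the construction in Theorem~\ref{thm:One_Shot} and add nothing that conflicts with the paper.
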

We note that the problem of covert communication and covert secret key generation, studied in this paper, has not been studied for the classical case. However, a special case of this problem in which the transmitter only wants to communicate covertly with a positive rate via \ac{CSI} is studied in \cite{Covert_With_State,Keyless22}. We now show that when the channel is classical, Corollary~\ref{Cor:Acivable_Covert_NC} recovers the achievable rates in \cite{Covert_With_State} and \cite{Keyless22}.
\begin{spc}[Comparing with {\cite[Theorem~2]{Covert_With_State}}\hspace{-1mm} ]
The special case of Corollary~\ref{Cor:Acivable_Covert_NC} for classical channels recovers the achievable rate in \cite[Theorem~2]{Covert_With_State} if we set the rate of the external secret key shared between the transmitter and the receiver in \cite[Theorem~2]{Covert_With_State} equal to zero. 
\end{spc}

\begin{spc}[Comparing with {\cite[Theorem~4]{Keyless22}}\hspace{-1mm} ]    
We now show that the special case of Corollary~\ref{Cor:Acivable_Covert_NC} for classical channels is equivalent to the achievable rate in \cite[Theorem~4]{Keyless22}, which is based on generating a secret key from the \ac{CSI} and using the generated secret key to achieve a higher covert rate. First, note that if we set $V=\emptyset$ in \cite[Theorem~4]{Keyless22}, then we recover the special case of the achievable rate in Corollary~\ref{Cor:Acivable_Covert_NC} for classical channels. If we set $U=(\tilde{U},\tilde{V})$ in Corollary~\ref{Cor:Acivable_Covert_NC}, where $\tilde{U}$ and $\tilde{V}$ are independent but the \ac{CSI} is correlated with both of these auxiliary random variables and remove the tilde $\tilde{}$ to make the notation simpler, we have
    \begin{align}%
\calR_{\textup{CC}}=
\textup{sup}_{\rho_{UAS}} %
 I(U,V;B)- I(U,V;S),
\label{eq:inRnone_R2}
\end{align}where $\rho_{UVAS}=\sum\limits_u\sum\limits_vp_{U}(u)\den{u}{u}_U\otimes p_V(v)\den{v}{v}_V\otimes\rho_{AS\lvert u,v}$ such that $\tra_{UVA}[\rho_{UVAS}]=\tra_{\bar{S}}[\rho_{\bar{S}S}]$, $I(U,V;B)\ge I(U,V;E)$, and $\rho_E=\rho_0$. The achievable rate in \eqref{eq:inRnone_R2} for classical channels contains the achievable covert rate in \cite[Theorem~4]{Keyless22} because it has fewer constraints. Therefore, the  achievable rate in Corollary~\ref{Cor:Acivable_Covert_NC} for  classical channels is equivalent to~\cite[Theorem~4]{Keyless22}.
\end{spc}
\begin{spc}[Comparing with {\cite[Theorem~1]{Keyless22}}\hspace{-1mm} ]
    In Corollary~\ref{Cor:Acivable_Covert_NC}, when the channel is classical and the \ac{CSI} is available at both the transmitter and the receiver, if we set  $B\triangleq(\tilde{B},S)$ and $U=\triangleq(\tilde{U},S)$, where $\tilde{U}$ and $S$ are correlated, and $A=U$, then Corollary~\ref{Cor:Acivable_Covert_NC} recovers the covert capacity of state-dependent classical channels with the \ac{CSI} available at both the transmitter and receiver in \cite[Theorem~1]{Keyless22}.
\end{spc}

\subsection{Optimal Classical Results} 
We now show that the special case of Corollary~\ref{Cor:Acivable_SK_NC} for classical channels when the \ac{CSI} is available at both the transmitter and the receiver is optimal.

\begin{theorem}[Classical Covert Secret Key Capacity with Full \ac{CSI}]
    \label{thm:Capacitty_CSITR_C}
\begin{subequations}\label{eq:Capacitty_CSITR_C_AD}
The covert secret key capacity of the classical state-dependent \ac{DMC} $W_{BE\lvert AS}$, when the \ac{CSI} is available non-causally at both the transmitter and the receiver, is
\begin{align}
\textup{C}_{\textup{CSK}} =\mbox{\rm sup}_{P_{SUABE}\in\calD}\{I(U;B\lvert S)- I(U;E\lvert S)+ H(S\lvert E)\},
\label{eq:Capacitty_CSITR_C}
\end{align}
where
\begin{align}
  \calD \triangleq \left.\begin{cases}P_{SUABE}:\\
P_{SUABE}=Q_SP_{U\lvert S}P_{A\lvert US}W_{BE\lvert AS}\\
P_E=Q_0\\
\end{cases}\right\}.\label{eq:Capacitty_CSITR_C_D}
\end{align}
\end{subequations}
\end{theorem}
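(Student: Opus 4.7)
The plan is to establish matching achievability and converse bounds for the capacity, exploiting the fact that $S^n$ is observed at both the transmitter and the receiver.

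For the achievability, the cleanest path is to apply Corollary~\ref{Cor:Acivable_SK_NC} to the \emph{enhanced} channel obtained by augmenting the receiver's output with its own copy of the state, i.e., by treating $(B,S)$ rather than $B$ as the legitimate channel output. Choosing an auxiliary of the form $U = (\tilde U, S)$ and computing directly,
\begin{align*}
I(\tilde U, S; B, S) - I(\tilde U, S; E) &= H(S) + I(\tilde U; B\lvert S) - I(S;E) - I(\tilde U; E\lvert S)\\
&= I(\tilde U; B\lvert S) - I(\tilde U; E\lvert S) + H(S\lvert E),
\end{align*}
which realizes every admissible rate in the capacity expression through the free choice of $P_{\tilde U\lvert S}P_{A\lvert \tilde U,S}$, subject to the innocence constraint $P_E = Q_0$. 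Intuitively, the rate decomposes into a channel-transmitted key of rate $I(\tilde U; B\lvert S) - I(\tilde U; E\lvert S)$ and a source-extracted key of rate $H(S\lvert E)$ that both terminals obtain by universal hashing of their shared $S^n$ against the warden's $E^n$.

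For the converse, fix any $\epsilon_n$-reliable, $\delta_n$-covert code with $\epsilon_n,\delta_n\to 0$. Key uniformity plus Fano's inequality yields $nR_K \leq I(K;B^n,S^n) + n\epsilon'_n$, and a continuity-of-mutual-information bound applied to the covertness metric shows $I(K;E^n) = o(n)$ since $\log \lvert \calK \rvert = nR_K$ is a linear function of $n$ and $\delta_n \to 0$. Subtracting and using the identity $I(K;S^n) - I(K;E^n) = I(K; S^n\lvert E^n) - I(K; E^n\lvert S^n)$,
\begin{align*}
nR_K - o(n) &\leq I(K;B^n,S^n) - I(K;E^n)\\
&= I(K;B^n\lvert S^n) + I(K;S^n\lvert E^n) - I(K;E^n\lvert S^n).
\end{align*}
The middle term obeys $I(K;S^n\lvert E^n) \leq H(S^n\lvert E^n) = nH(S\lvert E)$ by the i.i.d.\ structure of the source. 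The remaining wiretap-type difference $I(K;B^n\lvert S^n) - I(K;E^n\lvert S^n)$ is single-letterized by the standard Csisz\'ar sum identity with the auxiliary $U_i \triangleq (K, B^{i-1}, E_{i+1}^n, S^{i-1}, S_{i+1}^n)$ and a uniform time-sharing variable absorbed into $U$, producing the per-letter bound $I(U;B\lvert S) - I(U;E\lvert S)$.

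The main obstacle is reconciling the \emph{exact} constraint $P_E = Q_0$ defining $\calD$ with the \emph{approximate} covertness $\lVert P_{E^n} - Q_0^{\otimes n}\rVert_1 \leq \delta_n$ actually enforced by the code. The single-letter marginals $P_{E_i}$ arising from the identification above only satisfy $P_{E_i} \approx Q_0$ in an averaged, vanishing-perturbation sense, so passing to a limiting distribution in $\calD$ demands a uniform continuity argument for the functional $P_{SUAE} \mapsto I(U;B\lvert S) - I(U;E\lvert S) + H(S\lvert E)$ over the feasible region, together with a compactness/Bolzano--Weierstrass step to extract a convergent subsequence; a secondary technical point is verifying that the continuity constant translating $\delta_n \to 0$ into $I(K;E^n) = o(n)$ remains $o(1)$ when $\lvert \calK\rvert$ grows exponentially in $n$, which is where the fact that $R_K$ is a fixed constant plays a role.
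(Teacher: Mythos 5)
Your achievability argument is the same as the paper's: apply Corollary~\ref{Cor:Acivable_SK_NC} with the receiver's output enhanced to $(B,S)$ and the auxiliary chosen as $U=(\tilde U,S)$, which reproduces exactly the computation in \eqref{eq:Calculation_Classic} (and the side condition $I(U;B)\ge I(U;S)$ of Corollary~\ref{Cor:Acivable_SK_NC} holds automatically since $I(\tilde U,S;B,S)\ge H(S)$). Your derivation of $I(K;E^n)=o(n)$ from the trace-distance covertness metric via an Alicki--Fannes-type continuity bound is fine, and is in fact more careful than the paper, which simply assumes $I(K;E^n)\le\eta_n$ in its converse.

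The converse, however, has a genuine gap at the single-letterization step. After the conditional Csisz\'ar sum identity, the difference $I(K;B^n\lvert S^n)-I(K;E^n\lvert S^n)$ equals $\sum_t\big[I(K;B_t\lvert V_t,S_t)-I(K;E_t\lvert V_t,S_t)\big]$ with $V_t\triangleq(B^{t-1},E_{t+1}^n,S^{t-1},S_{t+1}^n)$ appearing in the \emph{conditioning}. Passing to your auxiliary $U_t=(K,V_t)$ changes the sum by $\sum_t\big[I(V_t;B_t\lvert S_t)-I(V_t;E_t\lvert S_t)\big]$, a quantity with no sign guarantee, so the claimed per-letter bound $I(U;B\lvert S)-I(U;E\lvert S)$ does not follow from the ``standard'' identity alone. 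Nor can you instead keep $V_t$ in the conditioning and absorb it by fixing $V_t=v$: because the encoder is non-causal, conditioning on $V_t=v$ perturbs both $P_{S_t}$ (away from $Q_S$) and $P_{E_t}$ (away from $Q_0$), so the resulting per-letter distribution need not lie in $\calD$ or even in an $\epsilon$-relaxation of it, and the averaged covertness constraint gives no per-realization control. This is precisely where the covertness constraint must enter structurally: the paper's converse (Appendix~\ref{App:Capacitty_CSITR_CS}, used verbatim for this theorem) avoids conditioning on $S^n$ altogether, bounds $I(K,S^n;B^n,S^n)-I(K,S^n;E^n)$, and after the Csisz\'ar sum identity its negative terms are conditioned only on $E_{t+1}^n$; it then uses the near-product structure of $P_{E^n}$ (via \cite[Lemma~3]{Keyless22}, exploiting $\sum_t I(E_t;E_{t+1}^n)\le\bbD(P_{E^n}\lVert Q_0^{\otimes n})\le\tilde\epsilon$, and similarly for the time-sharing variable) to remove that conditioning, and only at the end splits $I(U,S;B,S)-I(U,S;E)=I(U;B\lvert S)-I(U;E\lvert S)+H(S\lvert E)$. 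Two smaller points: your ``$H(S^n\lvert E^n)=nH(S\lvert E)$'' should be an inequality $H(S^n\lvert E^n)\le\sum_t H(S_t\lvert E_t)$ (the pair $(S^n,E^n)$ is not i.i.d.\ under coding), and the two pieces $I(U;B\lvert S)-I(U;E\lvert S)$ and $H(S\lvert E)$ must be evaluated under the \emph{same} time-shared single-letter distribution before invoking the continuity-at-zero argument, which is handled as in \cite[Appendix~F]{Keyless22}.
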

The achievability proof of Theorem~\ref{thm:Capacitty_CSITR_C} follows from Corollary~\ref{Cor:Acivable_SK_NC}, when the channel is classical and the \ac{CSI} is available at both the transmitter and the receiver, by setting $B\triangleq(\tilde{B},S)$, this is because the state is known at the receiver, and $U\triangleq(\tilde{U},S)$, where $\tilde{U}$ and $S$ are correlated, and removing the tilde $\tilde{}$ to make the notation simpler. In particular, 
\begin{subequations}\label{eq:Calculation_Classic}
    \begin{align}
        I(U;B)&=I(\tilde{U},S;\tilde{B},S)=I(\tilde{U};\tilde{B}\lvert S) +H(S),\\
        I(U;S)&=I(\tilde{U},S;S)=H(S),\\
        I(U;E)&=I(\tilde{U},S;E)=I(\tilde{U};E\lvert S)+H(S)-H(S\lvert E).
    \end{align}
\end{subequations}
We note that setting $U\triangleq(\tilde{U},S)$ can be interpreted as superpositioning the codebook on top of the \ac{CSI} $S$. The converse proof of Theorem~\ref{thm:Capacitty_CSITR_C} is similar to the converse proof of Theorem~\ref{thm:Capacitty_CSITR_CS} and is omitted for brevity.

The next result shows that the special case of Theorem~\ref{thm:Asymp} for the classical degraded channels when the \ac{CSI} is available non-causally at both the transmitter and the receiver is optimal.
\begin{theorem}[Classical Covert Communication and Covert Secret Key Generation Capacity Region with Full \ac{CSI} for Degraded Channels]
    \label{thm:Capacitty_CSITR_C_Degraded}
    The covert capacity of the classical state-dependent \ac{DMC} $W_{BE\lvert XS}$, when the warden's channel is degraded \ac{wrt} the legitimate receiver's channel and the \ac{CSI} is available non-causally at both the transmitter and the receiver,~is 
\begin{align}
  \calC_{\textup{CC-CSK}} = \left.\begin{cases}R,R_K\geq 0: \exists P_{SUABE}\in\calD:\\
R\le I(U;B\lvert S)\\
R_K\le I\left(U;B\lvert E,S\right)+ H(S\lvert E)\\
R+R_K\le I(U;B\lvert S)+ H(S)
\end{cases}\right\},\label{eq:Capacitty_CSITR_C_A_Degraded}
\end{align}
where $\calD$ is defined in \eqref{eq:Capacitty_CSITR_C_D} in Theorem~\ref{thm:Capacitty_CSITR_C}.
\end{theorem}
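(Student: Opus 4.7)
The plan is to prove the region~\eqref{eq:Capacitty_CSITR_C_A_Degraded} by specializing Theorem~\ref{thm:Asymp} for achievability, and by a Csisz\'{a}r-sum-style single-letterization combining Fano's inequality, the covertness constraint, and the degradation hypothesis for the converse.

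\textbf{Achievability.} Since the state is available at both ends, I would take $U \triangleq (\tilde U, S)$ in $\calR_{\textup{CC-CSK}}$ and absorb $S$ into the receiver's observation $B \triangleq (\tilde B, S)$, exactly as done in the discussion following Corollary~\ref{Cor:Acivable_SK_NC} for Theorem~\ref{thm:Capacitty_CSITR_C}. Direct substitution yields
\begin{align*}
I(U;B) &= I(\tilde U; \tilde B|S) + H(S),\\
I(U;B)-I(U;S) &= I(\tilde U; \tilde B|S),\\
I(U;B)-I(U;E) &= I(\tilde U; \tilde B|S) - I(\tilde U; E|S) + H(S|E).
\end{align*}
The degradation hypothesis $W_{BE|AS} = W_{B|AS}W_{E|B}$ produces the Markov chain $\tilde U - (\tilde B, S) - E$, so $I(\tilde U; \tilde B|S) - I(\tilde U; E|S) = I(\tilde U; \tilde B|E, S)$, which upon dropping the tildes gives~\eqref{eq:Capacitty_CSITR_C_A_Degraded}.

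\textbf{Converse.} I would combine Fano's inequality with the classical counterpart of the covertness constraint~\eqref{eq:CC_CSK_Metric_Asymp}, from which $H(K) = nR_K - o(n)$, $I(K;E^n) = o(n)$, and $P_{E_i} \to Q_0$ marginally. The three rate bounds then reduce to $nR \le I(M; B^n|S^n) + n\epsilon_n$, $nR_K \le I(K; B^n, S^n | E^n) + n\epsilon_n \le nH(S|E) + I(K; B^n | S^n, E^n) + n\epsilon_n$, and $n(R+R_K) \le I(M, K; B^n | S^n) + nH(S) + n\epsilon_n$. I would introduce the auxiliary $U_i \triangleq (M, K, B^{i-1}, E_{i+1}^n, S_{[n] \setminus i})$ together with a time-sharing index $Q$ uniform on $[n]$ and independent of everything, and set $U \triangleq (U_Q, Q)$, $S \triangleq S_Q$, $A \triangleq A_Q$, $B \triangleq B_Q$, $E \triangleq E_Q$. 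The degradation-induced Markov chain $E^{i-1} - B^{i-1} - (K, B_i, E_i, E_{i+1}^n, S^n)$ lets me strip $E^{i-1}$ from the conditioning to obtain $I(K; B_i | B^{i-1}, S^n, E^n) \le I(U_i; B_i | S_i, E_i)$, and the other two sums single-letterize to $\sum_i I(U_i; B_i | S_i)$ simply by enlarging the first argument. Channel memorylessness provides the Markov chain $U_i - (A_i, S_i) - (B_i, E_i)$ so that the time-shared joint law $P_{SUABE}$ factors as in $\calD$, and $P_E = Q_0$ follows from the covertness constraint.

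\textbf{Main obstacle.} The hardest step is the single-letterization of the secret-key bound: the $E^n$ conditioning is forced by the need to subtract off $I(K; E^n)$, and only the degradation lets me reduce this to a single-index $(E_i, E_{i+1}^n)$ while keeping \emph{the same} auxiliary $U_i$ across all three bounds. Bookkeeping the Markov structure $U - (A,S) - (B,E)$ after time-sharing and transferring $P_{E_i} \to Q_0$ into the single-letter constraint $P_E = Q_0$ in $\calD$ is conceptually routine but must be done carefully so that the resulting single-letter distribution lies in $\calD$.
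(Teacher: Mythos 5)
Your proposal is correct, and the achievability half coincides with the paper's: specialize Theorem~\ref{thm:Asymp} with $U\triangleq(\tilde U,S)$, $B\triangleq(\tilde B,S)$, use the computations in \eqref{eq:Calculation_Classic}, and invoke the degradedness Markov chain to rewrite $I(\tilde U;\tilde B\lvert S)-I(\tilde U;E\lvert S)=I(\tilde U;\tilde B\lvert E,S)$. The converse follows the same skeleton as the paper's (Fano, near-uniformity and independence of $K$ from $E^n$, the chain $nR_K\lesssim I(K;B^n\lvert S^n,E^n)+H(S^n\lvert E^n)$, time sharing, and the Jensen/continuity treatment of $\bbD(P_{E}\lVert Q_0)$), but it differs in one genuine respect: your auxiliary $U_i=(M,K,B^{i-1},E_{i+1}^n,S_{[n]\setminus i})$ omits $E^{i-1}$, and you then need the physical-degradation Markov chain $E^{i-1}-B^{i-1}-(K,B_i,E_i,E_{i+1}^n,S^n)$ to strip $E^{i-1}$ from the conditioning in the key bound. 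This works (strip $E^{i-1}$ first, then add $M$ by conditioning-reduces-entropy, so your stated chain suffices), but the paper instead defines $U_t=(K,M,S^{t-1},S_{t+1}^n,B^{t-1},E^{t-1},E_{t+1}^n)$, i.e., it absorbs $E^{t-1}$ into the auxiliary, so the outer bound requires no degradedness at all and holds for arbitrary channels; degradedness is used only on the achievability side to make the inner and outer regions coincide. The paper's choice thus buys a channel-agnostic converse and avoids any implicit reliance on physical (as opposed to merely stochastic) degradation, whereas your choice buys a slightly leaner auxiliary at the cost of invoking the degradation structure inside the converse; both routes verify the needed single-letter Markov chain $U-(A,S)-(B,E)$ and land in $\calD$, and your "routine" handling of $P_E=Q_0$ corresponds to the paper's $\calA^{(\epsilon)}$ regions with a continuity-at-zero argument borrowed from \cite{Keyless22}.
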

The achievability proof of Theorem~\ref{thm:Capacitty_CSITR_C_Degraded} follows from the achievability of Theorem~\ref{thm:Asymp}, when the channel is classical and degraded, and the \ac{CSI} is available at both the transmitter and the receiver, by setting $B\triangleq(\tilde{B},S)$ and $U\triangleq(\tilde{U},S)$, where $\tilde{U}$ and $S$ are correlated, and removing the tilde $\tilde{}$ to make the notation simpler. This follows from the calculations in \eqref{eq:Calculation_Classic}, which remain applicable here, along with the Markov chain $(\tilde{U},S)-\tilde{B}-\tilde{E}$, which implies $I(\tilde{U};\tilde{B}\lvert S)=I(\tilde{U};\tilde{B},\tilde{E}\lvert S)$. The converse proof of Theorem~\ref{thm:Capacitty_CSITR_C_Degraded} is provided in Appendix~\ref{App:Capacitty_CSITR_C_Degraded}. It is worth noting that the main challenge in proving a tight converse for general, non-degraded channels lies in dealing with the asymmetry in the security constraints for the message and the~key.
\section{Covert Secure  Communication, and Covert Secret Key Generation}
In this section, we first provide our one-shot and asymptotic results for the quantum state-dependent channels and then show that our results are optimal for the special case when the channel is classical and the state is known at both the transmitter and the receiver.
\subsection{One-Shot Results}Our one-shot result on the existence of an $\epsilon$-reliable and $\delta$-covert-secure $(2^R, 2^{R_K}, 1)$ code defined in Definition~\ref{defi:CSC_SKG} is as follows.
\begin{theorem}[Covert and Secure Communication and Covert Secret Key Generation]
\label{thm:One_Shot_CS}
Given any quantum state-dependent channel $\calN_{AS\to BE}$,  $\rho_{UAS}=\sum\limits_{u}p_{U}(u)\den{u}{u}_U\otimes\rho_{AS\lvert u}$, such that $\tra_{UA}[\rho_{UAS}]=\tra_{\bar{S}}[\rho_{\bar{S}S}]$, $\rho_E=\rho_0$, and $\rho_{UBE}\triangleq I_U\otimes\calN_{AS\to BE}(\rho_{UAS})$, there exists a $(2^R,2^{R_K},1)$ code such that
\begin{subequations}\label{eq:PeCov_Lemma_CS}
\begin{align}
    &\bbP\left\{(M,K)\ne(\hat{M},\hat{K})\right\}\nonumber\\
    &\le\frac{2v_S^\alpha}{\alpha}2^{\alpha\pr{-(R_J+R_K)+\ubar{\D}_{1+\alpha}\left(\rho_{US}\lVert\rho_U\otimes\rho_S\right)}}\nonumber\\
    &\quad+12v_B^\alpha2^{\alpha\pr{R_J+R_K+R-\ubar{\D}_{1-\alpha}\left(\rho_{UB}\lVert\rho_U\otimes\rho_B\right)}},\label{eq:Pe_Lemma_CS}\\
    &\left\lVert\hat{\rho}_{MKE}-\rho_M\otimes\sigma_K\otimes \rho_0\right\rVert_1\nonumber\\
    &\le\frac{2\pr{v_E}^{\frac{\alpha}{2}}}{\sqrt{\alpha}}2^{\frac{\alpha}{2}\pr{-R_J+\ubar{\D}_{1+\alpha}\left(\rho_{UE}\lVert\rho_U\otimes\rho_E\right)}},\label{eq:Covertnes_Lemma_CS}
    \end{align}where $R_J>0$ is an arbitrary integer, $\alpha\in\left(0,\frac{1}{2}\right)$, and $v_S$, $v_B$, and $v_E$ are the number of distinct eigenvalues of the states $\rho_S$, $\rho_B$ and $\rho_E$, respectively.
\end{subequations}
\end{theorem}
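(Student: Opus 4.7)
The plan is to reuse essentially everything from the proof of Theorem~\ref{thm:One_Shot}: the same random codebook $C\triangleq\{U(j,k,m)\}_{(j,k,m)\in\calJ\times\calK\times\calM}$ generated \ac{iid} from $p_U$, the same Uhlmann-type isometric encoder $W^{(\calC_m)}_{\bar{S}\to YAJK}$, the same pinching maps $\calE_{\rho_B}$ and $\calE_{\rho_S}$, and the same pinching-based decoding \ac{POVM} $\{\Upsilon(j,k,m)\}$. Because the encoder and decoder are identical to those of Theorem~\ref{thm:One_Shot}, the reliability bound \eqref{eq:Pe_Lemma_CS} follows verbatim from the error analysis that culminates in \eqref{eq:Final_Pe}; no additional work is needed there. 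The only new task is to replace the key--warden indistinguishability bound \eqref{eq:Covertnes_Lemma} of Theorem~\ref{thm:One_Shot} by the stronger joint message--key--warden bound \eqref{eq:Covertnes_Lemma_CS}.

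For the new security analysis I would proceed in four short steps. First, I would use the classical nature of the $M$ and $K$ registers to write
\begin{align}
\hat{\rho}_{MKE}-\rho_M\otimes\sigma_K\otimes\rho_E=\frac{1}{2^{R+R_K}}\sum_{(m,k)}\den{m}{m}\otimes\den{k}{k}\otimes\pr{\frac{1}{2^{R_J}}\sum_j\rho_{E\lvert U(j,k,m)}-\rho_E},\nonumber
\end{align}
so that the full trace norm equals the average over $(m,k)$ of the per-block trace norms. Second, the symmetry of the codebook construction makes $\bbE_C$ of each per-block trace norm independent of $(m,k)$, so I can restrict attention to $(m,k)=(1,1)$. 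Third, I would apply $\lVert\rho-\sigma\rVert_1\le 2\Pu(\rho,\sigma)$ together with Jensen's inequality to upper bound the expected per-block trace norm by $2\sqrt{\bbE_C\Pu^2\pr{2^{-R_J}\sum_j\rho_{E\lvert U(j,1,1)},\rho_E}}$, and then use \eqref{eq:purified-sandReny} combined with $1-2^{-x/\ln 2}\le x$ to pass from the purified distance to an $\bbE_C\ubar{\D}_{1+\alpha}(\cdot\lVert\rho_E)$ bound. Fourth, I would invoke Lemma~\ref{lemma:Resolvability_1} on the $R_J$-sized subcodebook $\{U(j,1,1)\}_{j\in\calJ}$ to obtain the exponent $\alpha\pr{-R_J+\ubar{\D}_{1+\alpha}(\rho_{UE}\lVert\rho_U\otimes\rho_E)}$, and set $\rho_E=\rho_0$ to conclude.

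The main difference from Theorem~\ref{thm:One_Shot}, and the step where I expect the subtlety to lie, is the shrinking of the resolvability randomness budget from $R_J+R$ down to $R_J$. In Theorem~\ref{thm:One_Shot} the message $M$ is allowed to be public, so its uniform distribution participates in the averaging that flattens the warden's view and Lemma~\ref{lemma:Resolvability_1} is applied to a subcodebook of size $2^{R_J+R}$ per fixed $k$. Here the message must also be hidden from the warden, which forces the direct-sum decomposition to be taken jointly over $M$ and $K$ and leaves only the Gel'fand--Pinsker bin index $J$ inside the resolvability average; this is precisely what produces the weaker $-R_J$ exponent in \eqref{eq:Covertnes_Lemma_CS} relative to the $-(R_J+R)$ exponent in \eqref{eq:Covertnes_Lemma}. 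Beyond this bookkeeping change, the argument is otherwise a routine repetition of the steps leading to \eqref{eq:Final_Covertness}.
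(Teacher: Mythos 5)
Your proposal is correct and follows essentially the same route as the paper: reliability is inherited verbatim from the error analysis of Theorem~\ref{thm:One_Shot}, and the security bound comes from applying Lemma~\ref{lemma:Resolvability_1} to each size-$2^{R_J}$ subcodebook $\{U(j,k,m)\}_{j\in\calJ}$ with $(m,k)$ fixed, which is exactly how the paper obtains the $-R_J$ exponent in \eqref{eq:Covertnes_Lemma_CS}. The only cosmetic difference is that you decompose the trace norm over the classical $(M,K)$ blocks first and apply Jensen per block, whereas the paper decomposes the fidelity via its direct-sum property (as in \eqref{eq:Final_Covertness_CS}) and converts to trace distance only at the end via \eqref{eq:Tr_Dist_Pu_Dist}; the two orderings are equivalent in substance.
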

\begin{proof}
The codebook generation, encoding, and error analysis are similar to those in the proof of Theorem~\ref{thm:One_Shot} and are omitted for brevity. We now prove the covertness and security analysis. We note that in the problem studied in Section~\ref{sec:CC_CSK}, the message is not required to be secure. Consequently, the message $M$ and the local randomness $J$ in the codeword $U(J,K,M)$ play the role of the randomness required to keep the key secure against the warden. Note that the index $J$ also plays the role of the randomness required to correlate the codeword with the \ac{CSI} $\bar{S}$ as in the Gel'fand-Pinsker encoding. Whereas in the problem studied in this section, the message is both covert and secure, and therefore the local randomness $J$ in the codeword $U(J,K,M)$ plays the role of the part of the local randomness required to correlate the codeword with the \ac{CSI} $\bar{S}$ and it also plays the role of the randomness required to keep both the message and the key secure against the warden. The coding schemes for both Theorem~\ref{thm:One_Shot} and Theorem~\ref{thm:One_Shot_CS} are summarized in Fig.~\ref{fig:Coding_Scheme}. 

To prove the covertness and security of our code design, we bound $\bbE_C\Pu\left(\hat{\rho}_{MKE},\rho_M\otimes\sigma_K\otimes\rho_E\right)$, where $\hat{\rho}_{MKE}=\frac{1}{2^{R_J+R_K+R}}\sum_{(j,k,m)}\den{m}{m}\otimes\den{k}{k}\otimes\rho_{E\lvert U(j,k,m)}$ is the joint quantum state induced by our code design, and then choose $\rho_{UAS}$ such that $\rho_E=\rho_0$. We have \eqref{eq:Final_Covertness_CS} at the bottom of the next page,
\begin{figure*}[b!]
\hrulefill
\begin{align}
    &\bbE_C\Pu^2\left(\frac{1}{2^{R_J+R_K+R}}\sum_{(j,k,m)}\den{m}{m}\otimes\den{k}{k}\otimes\rho_{E\lvert U(j,k,m)},\frac{1}{2^{R+R_K}}\sum_{m\in\left[1:2^R\right]}\sum_{k\in\left[1:2^{R_K}\right]}\den{m}{m}\otimes\den{k}{k}\otimes\rho_E\right)\nonumber\\
    &\mathop\le\limits^{(a)}2-2\bbE_CF\left(\frac{1}{2^{R_J+R_K+R}}\sum_{(j,k,m)}\den{m}{m}\otimes\den{k}{k}\otimes\rho_{E\lvert U(j,k,m)},\frac{1}{2^{R+R_K}}\sum_{m\in\left[1:2^R\right]}\sum_{k\in\left[1:2^{R_K}\right]}\den{m}{m}\otimes\den{k}{k}\otimes\rho_E\right)\nonumber\\
    &\mathop=\limits^{(b)}2-2\bbE_C\left(\frac{1}{2^{R+R_K}}\sum_{m\in\left[1:2^R\right]}\sum_{k\in\left[1:2^{R_K}\right]}F\left(\frac{1}{2^{R_J}}\sum_j\rho_{E\lvert U(j,k,m)},\rho_E\right)\right)\nonumber\\
    &\mathop\le\limits^{(c)}\frac{2v_E^\alpha}{\alpha}2^{\alpha\left(-R_J+\ubar{\D}_{1+\alpha}\left(\rho_{UE}\lVert\rho_U\otimes\rho_E\right)\right)},\label{eq:Final_Covertness_CS}
\end{align}
\end{figure*}where 
 \begin{itemize}
    \item[$(a)$] follows from \eqref{eq:Purified_Bure} similar to \eqref{eq:Final_Covertness};
     \item[$(b)$] follows from the direct-sum property of the fidelity distance in \eqref{eq:Direct_Sum_Fidelity};
     \item[$(c)$] follows from $1-F(\rho,\sigma)\le\Pu^2(\rho,\sigma)$,  \eqref{eq:purified-sandReny}, \eqref{eq:Usefull_Inequality}, and Lemma~\ref{lemma:Resolvability_1}, similar to \eqref{eq:Final_Covertness}.
 \end{itemize}
Therefore, by considering \eqref{eq:Tr_Dist_Pu_Dist}, \eqref{eq:Final_Covertness_CS}, we obtain \eqref{eq:Covertnes_Lemma_CS}.
\end{proof}

\subsection{Asymptotic Results}  
The following result can be deduced from Theorem~\ref{thm:One_Shot_CS}, which establishes an achievable rate region for the asymptotic regime defined in Definition~\ref{defi:Asymp_CSC_CSK}. 
\begin{figure}[t!]
\centering
\includegraphics[width=9.0cm]{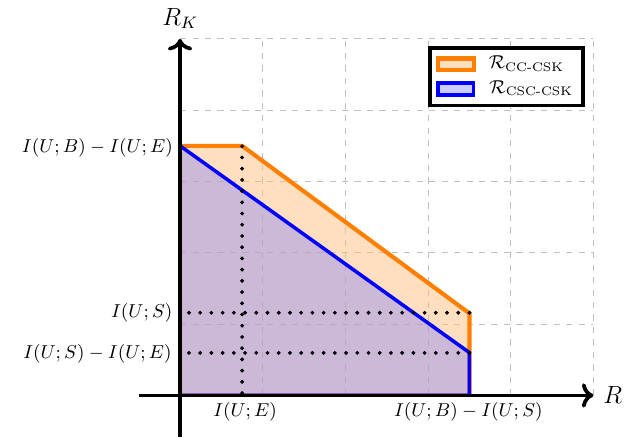}
\caption{The orange region illustrates the achievable rate region $\calR_{\textup{CC-CSK}}$ in Theorem~\ref{thm:Asymp} and the blue region illustrates the achievable rate region $\calR_{\textup{CSC-CSK}}$ in Theorem~\ref{thm:Asymp_CS}. }
\label{fig:Comparison}
\end{figure}
\begin{theorem}
\label{thm:Asymp_CS}
Define a rate region $\calR_{\textup{CSC-CSK}}$ as follows,
\begin{align}%
\calR_{\textup{CSC-CSK}}\triangleq
\bigcup_{\rho_{UAS}} %
\left\{ \begin{array}{rl}
  (R,R_K) \,:\;
	R &\leq I(U;B)- I(U;S)\\
        R+R_K &\leq I(U;B)- I(U;E)\\
	\end{array}
\right\},
\label{eq:inRnone_CS}
\end{align}where $\rho_{UAS}=\sum\limits_{u}p_{U}(u)\den{u}{u}_U\otimes\rho_{AS\lvert u}$ such that $\tra_{UA}[\rho_{UAS}]=\tra_{\bar{S}}[\rho_{\bar{S}S}]$ and $\rho_E=\rho_0$. An inner bound for the covert capacity region of a quantum state-dependent channel $\calN_{AS\to BE}$, depicted in Fig.~\ref{fig:System_Model}, is
    \begin{align}
        \calC_{\textup{CSC-CSK}}\supseteq\calR_{\textup{CSC-CSK}}.\nonumber
    \end{align}
\end{theorem}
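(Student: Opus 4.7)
The plan is to mirror the asymptotic derivation used for Theorem~\ref{thm:Asymp}, but starting from the one-shot bound in Theorem~\ref{thm:One_Shot_CS}. First, I would use the channel $\calN_{AS\to BE}$ independently $n$ times, generating codewords $U^n$ in an i.i.d.\ manner so that $\rho_U^{\otimes n}$, $\rho_S^{\otimes n}$, $\rho_B^{\otimes n}$, and $\rho_E^{\otimes n}$ are tensor-product states. The pinching multiplicities $v_S$ and $v_E$ are polynomially bounded in $n$ as shown in \eqref{eq:vS_vB}, so their $\alpha$-th powers contribute only a subexponential factor; likewise $v_B\le(n+1)^{d_B}$.

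Next, I would substitute these tensor-product states into the one-shot bounds \eqref{eq:Pe_Lemma_CS} and \eqref{eq:Covertnes_Lemma_CS}. Normalizing by $n$ and letting $n\to\infty$ followed by $\alpha\to 0$, the sandwiched Rényi relative entropies converge to the corresponding quantum mutual informations, namely $\tfrac{1}{n}\ubar{\D}_{1+\alpha}(\rho_{US}^{\otimes n}\|\rho_U^{\otimes n}\otimes\rho_S^{\otimes n})\to I(U;S)$, and analogously for $I(U;B)$ and $I(U;E)$. The error probability in \eqref{eq:Pe_Lemma_CS} therefore vanishes provided
\begin{align}
R_J + R_K &> I(U;S),\\
R_J + R_K + R &< I(U;B),
\end{align}
and the security/covertness bound in \eqref{eq:Covertnes_Lemma_CS} vanishes provided
\begin{align}
R_J &> I(U;E).
\end{align}

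Finally, I would apply Fourier--Motzkin elimination to remove the auxiliary rate $R_J$. Combining $R_J>I(U;E)$ with $R_J+R_K+R<I(U;B)$ yields $R+R_K<I(U;B)-I(U;E)$, and combining $R_J+R_K>I(U;S)$ with $R_J+R_K+R<I(U;B)$ yields $R<I(U;B)-I(U;S)$, which are exactly the two constraints defining $\calR_{\textup{CSC-CSK}}$ in \eqref{eq:inRnone_CS}. Choosing $\rho_{UAS}$ with $\rho_E=\rho_0$ ensures that the target state in the security metric \eqref{eq:CSC_CSK_Metric_Asymp} is the innocent state $\rho_0^{\otimes n}$.

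The main conceptual difference from Theorem~\ref{thm:Asymp} — and the only real obstacle — is that here the message $M$ must itself be secure, so it can no longer be used as a source of local randomness in the resolvability-type bound for the warden. This is precisely why the security bound in Theorem~\ref{thm:One_Shot_CS} involves only $R_J$ (instead of $R_J+R$ as in Theorem~\ref{thm:One_Shot}), which in turn forces the tighter joint constraint $R+R_K\le I(U;B)-I(U;E)$ rather than the separate constraints $R_K\le I(U;B)-I(U;E)$ and $R+R_K\le I(U;B)$ that appear in $\calR_{\textup{CC-CSK}}$. Once this asymmetric role of $R_J$ is properly tracked, the Fourier--Motzkin step is routine.
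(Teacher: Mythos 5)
Your proposal is correct and matches the paper's (omitted) argument: the authors state that Theorem~\ref{thm:Asymp_CS} follows exactly as Theorem~\ref{thm:Asymp}, i.e., by applying the one-shot bounds \eqref{eq:Pe_Lemma_CS} and \eqref{eq:Covertnes_Lemma_CS} to $n$ i.i.d.\ channel uses, using the polynomial bounds on $v_S,v_B,v_E$, letting $n\to\infty$ and $\alpha\to0$ to obtain $R_J+R_K>I(U;S)$, $R_J+R_K+R<I(U;B)$, $R_J>I(U;E)$, and eliminating $R_J$ by Fourier--Motzkin. Your observation that the secure-message requirement removes $R$ from the resolvability exponent (leaving only $R_J$) and thereby yields the joint constraint $R+R_K\le I(U;B)-I(U;E)$ is precisely the distinction the paper itself emphasizes in the proof of Theorem~\ref{thm:One_Shot_CS}.
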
The proof of Theorem~\ref{thm:Asymp_CS} is similar to that of Theorem~\ref{thm:Asymp} and is omitted for brevity. The penalty term $I(U;S)$ represents the reduction in transmission rate required for the redundancy we need to add to correlate the codeword with the channel state, while the penalty term $I(U;E)$ reflects the loss in the transmission rate due to the imposed security constraint.

The achievable rate regions in Theorem~\ref{thm:Asymp} and Theorem~\ref{thm:Asymp_CS} are visualized in Fig.~\ref{fig:Comparison}. As seen in this figure, the achievable region $\calR_{\textup{CC-CSK}}$ in Theorem~\ref{thm:Asymp} is larger than the achievable region $\calR_{\textup{CSC-CSK}}$ in  Theorem~\ref{thm:Asymp_CS}. This is because the message in Theorem~\ref{thm:Asymp} is transmitted only covertly, while the message in Theorem~\ref{thm:Asymp_CS} is transmitted both securely and covertly. In other words, the constraint in \eqref{eq:CSC_CSK_Metric_Asymp} is more restrictive than the one in \eqref{eq:CC_CSK_Metric_Asymp}.

A consequence of Theorem~\ref{thm:Asymp_CS} is the following achievable rate for covert secure communication over a quantum state-dependent channel. In Corollary~\ref{Cor:Acivable_Covert_NC_CSC_CSK}, we project $\calR_{\textup{CSC-CSK}}$ on the $R$ axis, i.e., setting $R_K=0$. Note that the projection of $\calR_{\textup{CSC-CSK}}$ on the $R_K$ axis, i.e., setting  $R=0$, is the region presented in Corollary~\ref{Cor:Acivable_SK_NC}.
\begin{corollary}
\label{Cor:Acivable_Covert_NC_CSC_CSK}
The covert secure capacity of a quantum state-dependent channel $\calN_{AS\to BE}$, depicted in Fig.~\ref{fig:System_Model}, is lower bounded by
\begin{align}%
\calR_{\textup{CSC}}\triangleq
\textup{sup}_{\rho_{UAS}}\left[I(U;B) - \max\{I(U;S),I(U;E)\}\right],
\label{eq:inRnone_R_CSC_CSK}
\end{align}where $\rho_{UAS}=\sum\limits_{u}p_{U}(u)\den{u}{u}_U\otimes\rho_{AS\lvert u}$, such that $\tra_{UA}[\rho_{UAS}]=\tra_{\bar{S}}[\rho_{\bar{S}S}]$ and $\rho_E=\rho_0$. 
\end{corollary}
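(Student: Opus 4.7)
The plan is to derive Corollary~\ref{Cor:Acivable_Covert_NC_CSC_CSK} as a direct specialization of Theorem~\ref{thm:Asymp_CS} by projecting the two-dimensional achievable region $\calR_{\textup{CSC-CSK}}$ onto the message-rate axis. Since any rate pair $(R, R_K) \in \calR_{\textup{CSC-CSK}}$ is achievable in the sense of Definition~\ref{defi:Asymp_CSC_CSK}, every $R$ for which $(R, 0)$ lies in the region is, in particular, achievable as a covert secure message rate with no secret key being generated. This is the step I would invoke first.

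Substituting $R_K = 0$ into the two inequalities defining $\calR_{\textup{CSC-CSK}}$ in \eqref{eq:inRnone_CS} yields the pair of constraints
\begin{align*}
R &\le I(U;B) - I(U;S),\\
R &\le I(U;B) - I(U;E),
\end{align*}
which are simultaneously satisfied if and only if
\begin{align*}
R \le I(U;B) - \max\{I(U;S),\, I(U;E)\}.
\end{align*}
Taking the supremum over all admissible classical-quantum states $\rho_{UAS}$ satisfying the constraints $\tra_{UA}[\rho_{UAS}] = \tra_{\bar S}[\rho_{\bar S S}]$ and $\rho_E = \rho_0$ then gives the lower bound $\calR_{\textup{CSC}}$ stated in \eqref{eq:inRnone_R_CSC_CSK}.

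Because every ingredient needed is already supplied by Theorem~\ref{thm:Asymp_CS}, there is no genuine obstacle. The only cosmetic point worth noting is the justification that setting $R_K = 0$ remains compatible with the covert-security metric \eqref{eq:CSC_CSK_Metric_Asymp}: with a trivial key alphabet, the factor $\sigma_K$ in the tensor product reduces to the scalar $1$, and the inequality reduces to $\lVert \hat\rho_{ME^n} - \rho_M \otimes \rho_0^{\otimes n}\rVert_1 \to 0$, which is exactly the covert-secrecy requirement for the message alone. Thus the achievable scheme from Theorem~\ref{thm:Asymp_CS} (with the key index merged into the Gel'fand-Pinsker local randomness) certifies the claim, and the proof amounts to a Fourier-Motzkin-style elimination in one variable, which is why the corollary can be stated without a separate detailed argument.
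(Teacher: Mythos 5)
Your proposal is correct and matches the paper's own argument: the corollary is obtained exactly by projecting $\calR_{\textup{CSC-CSK}}$ from Theorem~\ref{thm:Asymp_CS} onto the $R$ axis, i.e., setting $R_K=0$, combining the two constraints into $R\le I(U;B)-\max\{I(U;S),I(U;E)\}$, and taking the supremum over admissible $\rho_{UAS}$. Your remark that the security metric \eqref{eq:CSC_CSK_Metric_Asymp} with a trivial key reduces to covert secrecy of the message alone is a fine (and harmless) addition; no further argument is needed.
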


\subsection{Optimal Classical Results} 
We now show that the special case of Theorem~\ref{thm:Asymp_CS} for classical channels when the \ac{CSI} is available at both the transmitter and the receiver is optimal.
\begin{theorem}[Classical Covert Secure  Communication and Covert Secret Key Generation Capacity Region with Full \ac{CSI}]
    \label{thm:Capacitty_CSITR_CS}
    The covert secret capacity region of the classical state-dependent \ac{DMC} $W_{BE\lvert AS}$, when the \ac{CSI} is available non-causally at both the transmitter and the receiver, is
\begin{align}
  &\calC_{\textup{CSC-CSK}} =\nonumber\\
  &\left.\begin{cases}R,R_K\geq 0: \exists P_{SUABE}\in\calD:\\
R\le I(U;B\lvert S)\\
R+R_K\le I(U;B\lvert S)- I(U;E\lvert S)+H(S\lvert E)
\end{cases}\right\},\label{eq:Capacitty_CSITR_CS_A}
\end{align}
where $\calD$ is defined in \eqref{eq:Capacitty_CSITR_C_D} in Theorem~\ref{thm:Capacitty_CSITR_C}.
\end{theorem}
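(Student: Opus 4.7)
The plan is to establish achievability by specializing Theorem~\ref{thm:Asymp_CS} and to prove the matching converse via a Fano-plus-Csisz\'ar-sum argument that exploits the covertness and security constraints.

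\emph{Achievability.} Specialize Theorem~\ref{thm:Asymp_CS} to the classical channel $W_{BE\lvert AS}$ with $S$ known at both ends by substituting $U \leftarrow (\tilde U,S)$ and $B \leftarrow (\tilde B,S)$ in \eqref{eq:inRnone_CS}. The identities $I(U;B) = I(\tilde U;\tilde B\mid S)+H(S)$, $I(U;S) = H(S)$, and $I(U;E) = I(\tilde U;E\mid S)+H(S)-H(S\mid E)$, already used in \eqref{eq:Calculation_Classic}, turn the two bounds of \eqref{eq:inRnone_CS} directly into the two bounds of \eqref{eq:Capacitty_CSITR_CS_A} after relabeling $\tilde U \to U$ and $\tilde B \to B$.

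\emph{Converse preparation.} The covertness-security metric $\lVert \hat P_{MKE^n}-P_M P_K Q_0^{\otimes n}\rVert_1 \le \delta_n$ with $\delta_n\to 0$, combined with continuity of Shannon entropy on finite alphabets, yields $H(M)=nR-o(n)$, $H(K)=nR_K-o(n)$, $I(M;K)=o(n)$, and $I(M,K;E^n)=o(n)$. Reliability plus Fano gives $H(M,K\mid B^n,S^n)\le n\epsilon_n$, and $M$ is independent of $S^n$ by construction. The individual bound $R\le I(U;B\mid S)$ then follows from the standard Gel'fand--Pinsker-with-full-\ac{CSI} argument: $nR\le I(M;B^n\mid S^n)+n\epsilon_n = \sum_i I(M;B_i\mid B^{i-1},S^n)+n\epsilon_n\le\sum_i I(U_i;B_i\mid S_i)+n\epsilon_n$ with $U_i\triangleq(M,B^{i-1},S^{n\setminus i})$, closed by a uniform time-sharing index.

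\emph{Sum-rate bound.} The main chain is
\begin{align}
n(R+R_K)
&\le H(M,K\mid E^n)+o(n) \nonumber\\
&\le I(M,K;B^n,S^n\mid E^n)+n\epsilon_n+o(n) \nonumber\\
&\le H(S^n\mid E^n)+I(M,K;B^n\mid S^n,E^n)+n\epsilon_n+o(n).
\end{align}
Apply the standard wiretap converse technique (Csisz\'ar sum identity), in combination with the Fano bound $I(M,K;E^n\mid B^n,S^n)\le n\epsilon_n$, to single-letterize the second term as $\sum_i \bigl[I(U_i;B_i\mid S_i)-I(U_i;E_i\mid S_i)\bigr]$ with $U_i\triangleq(M,K,B^{i-1},E_{i+1}^n,S^{n\setminus i})$, modulo vanishing corrections absorbed by a uniform time-sharing index. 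The first term is handled by $H(S^n\mid E^n)\le\sum_i H(S_i\mid E_i)\le n H(\bar S\mid \bar E)$, where $(\bar S,\bar E)\sim\tfrac1n\sum_i \hat P_{S_i E_i}$; marginalization of the covertness bound gives $\lVert \bar P_{\bar E}-Q_0\rVert_1\to 0$, so the induced single-letter $P_{SE}$ satisfies $P_E=Q_0$ and fits into $\calD$ of \eqref{eq:Capacitty_CSITR_C_D}.

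\emph{Main obstacle.} The delicate step is the Csisz\'ar-sum bookkeeping for the sum-rate bound, which requires carefully handling the conditioning on $S^n$ versus $S_i$ to extract the clean single-letter form $I(U;B\mid S)-I(U;E\mid S)$ and simultaneously pair it with the $H(S\mid E)$ contribution coming from $H(S^n\mid E^n)$. The asymmetry between the security constraints for $R$ and for $R+R_K$, flagged after Theorem~\ref{thm:Capacitty_CSITR_C_Degraded} as the obstacle for non-degraded general channels, is sidestepped here by the full-\ac{CSI} hypothesis, which effectively reduces the problem to a wiretap-type channel with shared state $S$ augmented by a secret source of rate $H(S\mid E)$; the standard Csisz\'ar-sum bookkeeping then closes the argument.
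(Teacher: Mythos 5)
Your achievability argument is exactly the paper's (specialize Theorem~\ref{thm:Asymp_CS} with $U\leftarrow(\tilde U,S)$, $B\leftarrow(\tilde B,S)$ and \eqref{eq:Calculation_Classic}), and the converse skeleton (Fano, the security constraint, a Csisz\'ar-sum step, time sharing, covertness used to place the single-letter law in $\calD$) is the right one. But the step you wave through as ``standard Csisz\'ar-sum bookkeeping'' is precisely where the argument is not standard, and your sketch does not close it. After your (correct) reduction $I(M,K;B^n\mid S^n,E^n)\le I(M,K;B^n\mid S^n)-I(M,K;E^n\mid S^n)+n\epsilon_n$ and the Csisz\'ar sum identity conditioned on $S^n$, you obtain $\sum_i\bigl[I(M,K;B_i\mid V_i,S_i)-I(M,K;E_i\mid V_i,S_i)\bigr]$ with $V_i=(B^{i-1},E_{i+1}^n,S^{n\setminus i})$. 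To reach $\sum_i\bigl[I(U_i;B_i\mid S_i)-I(U_i;E_i\mid S_i)\bigr]$ with $U_i=(M,K,V_i)$ you must either (i) drop $V_i$, which leaves the residual $\sum_i I(V_i;E_i\mid S_i)$ (equivalently $-\sum_i[I(V_i;B_i\mid S_i)-I(V_i;E_i\mid S_i)]$), and this is not controlled: the covertness constraint only forces $P_{E^n}$ to be close to $Q_0^{\otimes n}$, it says nothing about $I(B^{i-1};E_i\mid S_i)$ or $I(S^{n\setminus i};E_i\mid S_i)$; or (ii) keep $V_i$ as an extra conditioning variable, but then the usual wiretap trick of fixing a good realization $V=v$ is unavailable here, because conditioning on $v$ destroys exactly the structure the region requires -- $P_E=Q_0$, $P_S=Q_S$, the pairing with the $H(S\mid E)$ term, and the fact that the \emph{same} $P_{SUABE}$ must satisfy both inequalities of \eqref{eq:Capacitty_CSITR_CS_A}.

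The paper's converse (Appendix~\ref{App:Capacitty_CSITR_CS}) avoids this by a different arrangement: it bounds the unconditioned difference $I(M,K,S^n;B^n,S^n)-I(M,K,S^n;E^n)$, applies the Csisz\'ar sum identity so that the cross terms cancel exactly, and then the only remaining mismatch is the conditioning $I(\cdot\,;E_t\mid E_{t+1}^n)$ versus $I(\cdot\,,E_{t+1}^n;E_t)$; this gap is closed by invoking the covertness constraint itself (steps $(d)$ and $(f)$ of \eqref{eq:SumRate2_CS}, via \cite[Lemma~3]{Keyless22}), using that $\bbD\bigl(P_{E^n}\lVert Q_0^{\otimes n}\bigr)\le\tilde\epsilon$ forces $\sum_t I(E_t;E_{t+1}^n)\le\tilde\epsilon$, i.e.\ the near-product structure of $P_{E^n}$. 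Your proposal never uses covertness inside the sum-rate chain (only at the end to get $P_E=Q_0$), so the missing ingredient is exactly this lemma-type step; without it the claimed single-letter form does not follow. A secondary, fixable issue: you use different auxiliaries for the two bounds ($U_i=(M,B^{i-1},S^{n\setminus i})$ for $R$, $U_i=(M,K,B^{i-1},E_{i+1}^n,S^{n\setminus i})$ for $R+R_K$), whereas the region demands one common $P_{SUABE}$; you should enlarge the auxiliary in the $R$-bound to the same $U_i$ (the inequality $I(M;B_i\mid B^{i-1},S^n)\le I(U_i;B_i\mid S_i)$ still holds), as the paper does.
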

Similar to the achievability proofs of Theorem~\ref{thm:Capacitty_CSITR_C} and Theorem~\ref{thm:Capacitty_CSITR_C_Degraded}, the achievability proof of Theorem~\ref{thm:Capacitty_CSITR_CS} follows from Theorem~\ref{thm:Asymp_CS} by setting $B\triangleq(\tilde{B},S)$ and $U\triangleq(\tilde{U},S)$, where $\tilde{U}$ and $S$ are correlated, and removing the tilde\, $\tilde{}$\, to make the notation simpler. The converse proof of Theorem~\ref{thm:Capacitty_CSITR_CS} is provided in Appendix~\ref{App:Capacitty_CSITR_CS}.

\begin{remark}
   A consequence of Theorem~\ref{thm:Capacitty_CSITR_CS} is the characterization of the optimal covert secure communication rate over a state-dependent channel with state available at both the transmitter and the receiver, which corresponds to the projection of $\calC_{\textup{CSC-CSK}}$ on the $R$ axis, i.e., setting $R_K=0$; and the optimal covert secret key generation rate over a state-dependent channel with state available at both the transmitter and the receiver, which corresponds to the projection of $\calC_{\textup{CSC-CSK}}$ on the $R_K$ axis. Note that the optimal covert secret key generation rate is equivalent to that in Theorem~\ref{thm:Capacitty_CSITR_C}.
\end{remark}
\begin{remark}[Stealth Communications and Key Generation over Quantum State-Dependent  Channel]
The notion of stealth communication in classical channels was introduced in \cite{effectivesecrec}. For the quantum state-dependent channel in Fig.~\ref{fig:System_Model}, if we modify the security metric in Definitions~\ref{defi:CC_SKG},  Definition~\ref{defi:CSC_SKG}, Definition~\ref{defi:Asymp_CC}, and Definition~\ref{defi:Asymp_CSC_CSK}, respectively, as
    \begin{align*}
    \left\lVert\hat{\rho}_{KE}-\sigma_K\otimes \rho_E\right\rVert_1&\le\delta,\nonumber\\
    \left\lVert\hat{\rho}_{MKE}-\rho_M\otimes\sigma_K\otimes \rho_E\right\rVert_1&\le\delta,\\
    \lim_{n\to\infty}\left\lVert\hat{\rho}_{KE^n}-\sigma_K\otimes\rho_E^{\otimes n}\right\rVert_1&=0,\\
        \lim_{n\to\infty}\left\lVert\hat{\rho}_{MKE^n}-\rho_M\otimes\sigma_K\otimes\rho_E^{\otimes n}\right\rVert_1&=0,
    \end{align*}where $\rho_E=\tra_{B}[\bbI\otimes\calN_{AS\to BE}(\rho_{USA})]$. All our results in this paper still hold if we remove the covertness constraint $\rho_E=\rho_0$ (or $P_E=Q_0$ for the classical results) from all of our results. This follows from our achievability proofs, where we first induce the state $\rho_E$ at the channel output of the warden, and then we choose the state $\rho_{UAS}$ such that $\rho_E=\rho_0$. Hence, our results also provide fundamental limits of stealth communication and stealth secret key generation over quantum state-dependent channels.  
\end{remark}
\section{Conclusions and Open Problems}
This paper studies (i) covert communication and covert secret key generation, and  (ii) covert secure communication and covert secret key generation over quantum state-dependent channels. We provide one-shot achievability regions for both of these problems and show that these achievability regions are tight when the channel is classical and the state is available at both the transmitter and the receiver.

The problem of covert communication and secret key generation over quantum state-dependent channels, when the \ac{CSI} is available causally at the transmitter, is left as future work. A possible achievability scheme for both covert communication and covert secret key generation could leverage a block Markov encoding scheme to utilize the \ac{CSI}, however, studying the covertness constraint in such a block Markov encoding scheme remains an open problem. Note that if we do not consider covert secret key generation, one can derive the following achievable covert communication rate without using block Markov encoding when \ac{CSI} is available at the transmitter causally, as an entangled state with the channel.
\begin{theorem}
\label{thm:Asymp_Causal}
    The covert capacity of a state-dependent quantum channel $\calN_{AS\to BE}^{\otimes n}$, depicted in Fig.~\ref{fig:System_Model}, when the state is known causally at the transmitter, is lower bounded by
    \begin{align}
        \sup_{p_U,W_{US\to A}}I(U;B),\nonumber
    \end{align}such that $I(U;B)>I(U;E)$, and $\rho_E=\rho_0$. 
\end{theorem}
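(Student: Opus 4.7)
The plan is to adapt the coding scheme of Theorem~\ref{thm:One_Shot} by replacing the Gel'fand--Pinsker-type encoding (which needs non-causal CSI) with a Shannon-strategy encoding (which is compatible with causal CSI). Because no secret key is being generated, the code requires only two layers of randomization: the message index $m$ and the local randomness $j$ used to perform channel resolvability at the warden. Correspondingly, the penalty term $I(U;S)$ from Theorem~\ref{thm:Asymp} disappears, and only the decodability constraint $R+R_J<I(U;B)$ and the covertness constraint $R+R_J>I(U;E)$ remain.

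First, I would set up the quantum Shannon strategy. Given a classical alphabet $\calU$, a distribution $p_U$, and a family of channels $\{W^{(u)}_{S\to A}\}_{u\in\calU}$, the induced classical-quantum state is $\rho_{UBE}=\sum_{u}p_{U}(u)\den{u}{u}_U\otimes\calN_{AS\to BE}\pr{W^{(u)}_{S\to A}(\rho_S)}$, where $\rho_S=\tra_{\bar{S}}[\rho_{\bar{S}S}]$. Since the channel is used $n$ times independently and the encoder applies $W^{(u_t)}_{S\to A}$ at time $t$ to its share of $\rho_{\bar{S}_tS_t}$, the effective channel from $U^n$ to $(B^n,E^n)$ becomes memoryless with tensor-product state $\rho_{UBE}^{\otimes n}$. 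This reduces the problem to classical-quantum channel coding with a resolvability constraint at the warden.

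Next, I would generate a random codebook $C\triangleq\{U^n(j,m)\}_{(j,m)\in\calJ\times\calM}$, with $\card{\calM}=\lfloor 2^{nR}\rfloor$ and $\card{\calJ}=\lfloor 2^{nR_J}\rfloor$, each codeword drawn i.i.d.\ from $p_U^{\otimes n}$. Decoding uses a pinching-based POVM analogous to \eqref{eq:Main_Projection}--\eqref{eq:POVM_Decoder}, built from the projector $\Pi_{UB^n}\triangleq\{\calE_{\rho_B^{\otimes n}}(\rho_{UB}^{\otimes n})\ge 2^{n(R_J+R)}\rho_U^{\otimes n}\otimes\rho_B^{\otimes n}\}$. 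The error analysis collapses to the second term in the bound \eqref{eq:Pe_Lemma}; the first term, which came from an Uhlmann-type approximation of $\rho_S$, is no longer needed since the encoder does not correlate its codeword with the state in a Gel'fand--Pinsker sense. This yields vanishing error whenever $R+R_J<I(U;B)$ asymptotically. For covertness, the warden's state is $\hat{\rho}_{E^n}=\frac{1}{2^{n(R+R_J)}}\sum_{(j,m)}\rho_{E\lvert U^n(j,m)}^{\otimes n}$, and applying Lemma~\ref{lemma:Resolvability_1} in its tensor-product form, together with \eqref{eq:purified-sandReny}, \eqref{eq:Usefull_Inequality}, and \eqref{eq:Tr_Dist_Pu_Dist}, gives $\bbE_C\lVert\hat{\rho}_{E^n}-\rho_E^{\otimes n}\rVert_1\to 0$ provided $R+R_J>I(U;E)$. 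Choosing $\rho_{UAS}$ so that $\rho_E=\rho_0$ promotes resolvability to covertness, and eliminating $R_J$ yields the claimed bound $R<I(U;B)$ under the constraint $I(U;B)>I(U;E)$.

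The main obstacle is conceptual rather than computational: justifying that a quantum Shannon strategy is well-defined for a fully quantum state-dependent channel in which the transmitter shares an entangled state $\rho_{\bar{S}S}$ with the channel rather than a classical description of $S$. Specifically, one must argue that the encoder can, at each time $t$, apply a channel depending only on $U_t$ to its local share $\bar{S}_t$ without disturbing future shares, and that the resulting $(A_t,B_t,E_t)$ triples are conditionally i.i.d.\ given $U^n$, so that the effective classical-quantum channel $U\to(B,E)$ is genuinely memoryless. Once this reduction is established, the reliability and covertness analyses of Theorem~\ref{thm:One_Shot} transfer to the causal setting almost verbatim, with the absence of Gel'fand--Pinsker binning explaining both the disappearance of the $I(U;S)$ penalty and why the construction does not readily extend to covert secret key generation without a block-Markov overlay.
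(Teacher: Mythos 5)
Your proposal is correct and follows the same overall architecture as the paper's proof in Appendix~\ref{Achi:Scheme_Causal}: a causal Shannon-strategy encoder that applies, at each time $t$, a map depending only on $U_t$ to its local share $\bar{S}_t$ of the entangled state, an i.i.d.\ codebook over $U$, and channel resolvability at the warden followed by a choice of $\rho_{UAS}$ with $\rho_E=\rho_0$. The differences are in the technical instantiation. First, the paper uses no auxiliary index $R_J$ at all: since the message is only required to be covert (not secret), the message itself supplies the resolvability randomness, and the two conditions are simply $R<I(U;B)$ for decoding and $R>I(U;E)$ for covertness, which are compatible near the supremum precisely because $I(U;B)>I(U;E)$; your extra layer $\{U^n(j,m)\}$ with Fourier--Motzkin elimination of $R_J$ yields the same region (and additionally covers rates below $I(U;E)$, which is not needed for the capacity lower bound). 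Second, for reliability the paper invokes the asymptotic Packing Lemma \cite[Lemma~16.3.1]{Wilde_Book} rather than transplanting the pinching-based decoder of Theorem~\ref{thm:One_Shot}, and for covertness it proves a separate relative-entropy resolvability result (Lemma~\ref{lemma:Resolvability_C}) and converts to trace distance via quantum Pinsker \cite[Theorem~11.9.1]{Wilde_Book}, whereas you reuse Lemma~\ref{lemma:Resolvability_1} in tensor-product form with the fidelity/purified-distance chain \eqref{eq:purified-sandReny}, \eqref{eq:Usefull_Inequality}, \eqref{eq:Tr_Dist_Pu_Dist}; both routes are valid and give vanishing trace distance under the same rate condition. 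Finally, the ``conceptual obstacle'' you flag is handled in the paper essentially for free: because the CSI is the product state $\rho_{\bar{S}S}^{\otimes n}$ and the per-letter isometry $W_{U_t\bar{S}_t\to A_t}$ acts only on the $t$-th share, the induced classical-quantum channel from $U$ to $(B,E)$ is memoryless by construction, so no additional argument is required.
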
Theorem~\ref{thm:Asymp_Causal} is proved in Appendix~\ref{Achi:Scheme_Causal}. 
\begin{remark}[Comparing with  Classical Results]
      Theorem~\ref{thm:Asymp_Causal} is a quantum generalization of the classical achievable rates in \cite[Theorem~8]{Keyless22} and  in \cite[Theorem~2]{Covert_With_State}, with $R_K=0$.
\end{remark}

\begin{appendices}

\section{Proof of Lemma~\ref{lemma:Resolvability_1}}
\label{proof:lemma:Resolvability_1}
The proof follows the approach of \cite[Lemma~6]{AshnuHayashi2020}. From the concavity of the $\log$ function and Jensen's inequality, we have
\begin{align}
    \bbE_C\left[\ubar{\D}_{1+\alpha}\big(\tau_{E\lvert C}\lVert\rho_E\big)\right]&\le\frac{1}{\alpha}\log_2\left(\bbE_C\left[2^{\alpha\ubar{\D}_{1+\alpha}\big(\tau_{E\lvert C}\lVert\rho_E\big)}\right]\right).\label{eq:lemma1_first}
\end{align}
Then, we bound the argument of the $\log$ function in \eqref{eq:lemma1_first} as in \eqref{eq:Lemma1_Final} at the bottom of the next page, 
\begin{figure*}[bh!]
\hrulefill
\begin{align}
    &\bbE_C\left[2^{\alpha\ubar{\D}_{1+\alpha}\big(\tau_{E\lvert C}\lVert\rho_E\big)}\right]\nonumber\\
    &=\bbE_C\tra\left[\left(\rho_E^{-\frac{\alpha}{2(1+\alpha)}}\tau_{E\lvert C}\rho_E^{-\frac{\alpha}{2(1+\alpha)}}\right)^{1+\alpha}\right]\nonumber\\
    &=\bbE_C\tra\left[\left(\rho_E^{-\frac{\alpha}{2(1+\alpha)}}\frac{1}{2^R}\sum_{i}\rho_{E\lvert U(i)}\rho_E^{-\frac{\alpha}{2(1+\alpha)}}\right)^{1+\alpha}\right]\nonumber\\
    &=\bbE_C\tra\left[\left(\rho_E^{-\frac{\alpha}{2(1+\alpha)}}\frac{1}{2^R}\sum_{i}\rho_{E\lvert U(i)}\rho_E^{-\frac{\alpha}{2(1+\alpha)}}\right)\left(\rho_E^{-\frac{\alpha}{2(1+\alpha)}}\frac{1}{2^R}\sum_{i'}\rho_{E\lvert U(i')}\rho_E^{-\frac{\alpha}{2(1+\alpha)}}\right)^{\alpha}\right]\nonumber\\
    &=\frac{1}{2^R}\bbE_C\tra\left[\sum_{i}\left(\rho_E^{-\frac{\alpha}{2(1+\alpha)}}\rho_{E\lvert U(i)}\rho_E^{-\frac{\alpha}{2(1+\alpha)}}\right)\left(\rho_E^{-\frac{\alpha}{2(1+\alpha)}}\frac{1}{2^R}\sum_{i'}\rho_{E\lvert U(i')}\rho_E^{-\frac{\alpha}{2(1+\alpha)}}\right)^{\alpha}\right]\nonumber\\
    &\mathop=\limits^{(a)}\frac{1}{2^R}\sum_{i}\bbE_C\tra\left[\left(\rho_E^{-\frac{\alpha}{2(1+\alpha)}}\rho_{E\lvert U(i)}\rho_E^{-\frac{\alpha}{2(1+\alpha)}}\right)\left(\rho_E^{-\frac{\alpha}{2(1+\alpha)}}\frac{1}{2^R}\sum_{i'}\rho_{E\lvert U(i')}\rho_E^{-\frac{\alpha}{2(1+\alpha)}}\right)^{\alpha}\right]\nonumber\\
    &=\frac{1}{2^R}\sum_{i}\bbE_C\tra\left[\left(\rho_E^{-\frac{\alpha}{2(1+\alpha)}}\rho_{E\lvert U(i)}\rho_E^{-\frac{\alpha}{2(1+\alpha)}}\right)\left(\rho_E^{-\frac{\alpha}{2(1+\alpha)}}\frac{1}{2^R}\left(\rho_{E\lvert U(i)}+\sum_{i'\ne i}\rho_{E\lvert U(i')}\right)\rho_E^{-\frac{\alpha}{2(1+\alpha)}}\right)^{\alpha}\right]\nonumber\\
    &\mathop\le\limits^{(b)}\frac{1}{2^R}\sum_{i}\bbE_{U(i)}\tra\left[\left(\rho_E^{-\frac{\alpha}{2(1+\alpha)}}\rho_{E\lvert U(i)}\rho_E^{-\frac{\alpha}{2(1+\alpha)}}\right)\left(\rho_E^{-\frac{\alpha}{2(1+\alpha)}}\frac{1}{2^R}\left(\rho_{E\lvert U(i)}+\sum_{i'\ne i}\bbE_{U(i')}\rho_{E\lvert U(i')}\right)\rho_E^{-\frac{\alpha}{2(1+\alpha)}}\right)^{\alpha}\right]\nonumber\\
    &=\frac{1}{2^R}\sum_{i}\bbE_{U(i)}\tra\left[\left(\rho_E^{-\frac{\alpha}{2(1+\alpha)}}\rho_{E\lvert U(i)}\rho_E^{-\frac{\alpha}{2(1+\alpha)}}\right)\left(\rho_E^{-\frac{\alpha}{2(1+\alpha)}}\frac{1}{2^R}\left(\rho_{E\lvert U(i)}+\left(2^R-1\right)\rho_E\right)\rho_E^{-\frac{\alpha}{2(1+\alpha)}}\right)^{\alpha}\right]\nonumber\\
    &\mathop\le\limits^{(c)}\frac{1}{2^R}\sum_{i}\bbE_{U(i)}\tra\left[\left(\rho_E^{-\frac{\alpha}{2(1+\alpha)}}\rho_{E\lvert U(i)}\rho_E^{-\frac{\alpha}{2(1+\alpha)}}\right)\left(\rho_E^{-\frac{\alpha}{2(1+\alpha)}}\frac{1}{2^R}\left(v_E\calE_{\rho_E}\left(\rho_{E\lvert U(i)}\right)+\left(2^R-1\right)\rho_E\right)\rho_E^{-\frac{\alpha}{2(1+\alpha)}}\right)^{\alpha}\right]\nonumber\\
    &=\frac{1}{2^R}\sum_{i}\bbE_{U(i)}\tra\!\left[\left(\rho_E^{-\frac{\alpha}{2(1+\alpha)}}\rho_{E\lvert U(i)}\rho_E^{-\frac{\alpha}{2(1+\alpha)}}\right)\!
 \left(\rho_E^{-\frac{\alpha}{2(1+\alpha)}}\frac{1}{2^R}v_E\calE_{\rho_E}\left(\rho_{E\lvert U(i)}\right)\rho_E^{-\frac{\alpha}{2(1+\alpha)}}\!
 +\frac{\left(2^R-1\right)}{2^R}\rho_E^{-\frac{\alpha}{2(1+\alpha)}}\rho_E\rho_E^{-\frac{\alpha}{2(1+\alpha)}}\right)^{\alpha}\right]\nonumber\\
    &\mathop\le\limits^{(d)}\frac{1}{2^R}\sum_{i}\bbE_{U(i)}\tra\left[\left(\rho_E^{-\frac{\alpha}{2(1+\alpha)}}\rho_{E\lvert U(i)}\rho_E^{-\frac{\alpha}{2(1+\alpha)}}\right)\left(\rho_E^{-\frac{\alpha^2}{2(1+\alpha)}}\frac{1}{2^{\alpha R}}\left(v_E^\alpha\left(\calE_{\rho_E}\left(\rho_{E\lvert U(i)}\right)\right)^\alpha+\left(2^R-1\right)^\alpha\rho_E^\alpha\right)\rho_E^{-\frac{\alpha^2}{2(1+\alpha)}}\right)\right]\nonumber\\
    &\mathop\le\limits^{(e)}1+\bbE_U\tra\left[\frac{v_E^\alpha}{2^{\alpha R}}\rho_{E\lvert U}\left(\calE_{\rho_E}\left(\rho_{E\lvert U}\right)\right)^\alpha\rho_E^{-\alpha}\right]\nonumber\\
    &\mathop=\limits^{(f)}1+\bbE_U\tra\left[\frac{v_E^\alpha}{2^{\alpha R}}\rho_{E\lvert U}\calE_{\rho_E}\left(\left(\calE_{\rho_E}\left(\rho_{E\lvert U}\right)\right)^\alpha\rho_E^{-\alpha}\right)\right]\nonumber\\
    &\mathop=\limits^{(g)}1+\tra\left[\frac{v_E^\alpha}{2^{\alpha R}}\bbE_U\left(\calE_{\rho_E}\left(\rho_{E\lvert U}\right)\right)^{1+\alpha}\rho_E^{-\alpha}\right]\nonumber\\
    &\mathop=\limits^{(h)}1+\frac{v_E^\alpha}{2^{\alpha R}}\tra\left[\left(\calE_{\rho_U\otimes\rho_E}\left(\rho_{UE}\right)\right)^{1+\alpha}\left(\rho_U\otimes\rho_E\right)^{-\alpha}\right]\nonumber\\
    &\mathop\le\limits^{(i)}1+\frac{v_E^\alpha}{2^{\alpha R}}2^{\alpha\ubar{\D}_{1+\alpha}\big(\rho_{UE}\lVert\rho_U\otimes\rho_E\big)},\label{eq:Lemma1_Final}
\end{align}
\end{figure*}where
\begin{itemize}
    \item[$(a)$] follows from the linearity of the trace operation and the expectation;
    \item[$(b)$] follows from the linearity of the expectation, the law of total expectations, Jensen's inequality, and since the symbols $U(i)$ and $U(i')$, for $i\ne i'$, are independent;
    \item[$(c)$] follows since for two quantum states $\sigma\in\calD(\calH)$ and $\rho\in\calD(\calH)$ we have \cite{Hayashi02},
    \begin{align}
        \rho\preceq v\calE_\sigma(\rho),\label{eq:Pinching_Inequlaity}
    \end{align}where $v$ is the distinct number of eigenvalues of $\sigma$;
    \item[$(d)$] follows since the terms inside the trace operation commute and since for $a,b\in\bbR^+$ we have $(a+b)^x\le a^x+b^x$, for $x<1$;
    \item[$(e)$] follows from the linearity of the trace operation, the symmetry of the codebook construction \ac{wrt} the index $i$, and since $\calE_{\rho_E}\left(\rho_{E\lvert U}\right)$ and $\rho_E$ commute;
    \item[$(f)$] follows since $\calE_{\rho_E}\left(\rho_{E\lvert U}\right)$ has the same orthonormal basis as $\rho_E$ and therefore $\left(\calE_{\rho_E}\left(\rho_{E\lvert U}\right)\right)^\alpha\rho_E^{-\alpha}=\calE_{\rho_E}\left(\left(\calE_{\rho_E}\left(\rho_{E\lvert U}\right)\right)^\alpha\rho_E^{-\alpha}\right)$;
    \item[$(g)$] follow since for three arbitrary states $\sigma\in\calD(\calH)$, with spectral decomposition $\sigma=\sum_i\lambda_i\den{x_i}{x_i}$, and $\rho_1\in\calD(\calH)$, and $\rho_2\in\calD(\calH)$ we have
    \begin{align}
    \tra\left[\calE_\sigma(\rho_1)\rho_2\right]&=\sum_i\bra{x_i}\rho_1\ket{x_i}\tra\left[\den{x_i}{x_i}\rho_2\right]\nonumber\\
    &=\sum_i\bra{x_i}\rho_1\ket{x_i}\tra\left[\bra{x_i}\rho_2\ket{x_i}\right]\nonumber\\
    &=\sum_i\tra\left[\bra{x_i}\rho_1\ket{x_i}\right]\bra{x_i}\rho_2\ket{x_i}\nonumber\\
    &=\sum_i\tra\left[\rho_1\den{x_i}{x_i}\right]\bra{x_i}\rho_2\ket{x_i}\nonumber\\
    &=\sum_i\tra\left[\rho_1\bra{x_i}\rho_2\ket{x_i}\den{x_i}{x_i}\right]\nonumber\\
    &=\tra[\rho_1\calE_\sigma(\rho_2)];\label{eq:Pinching_Switching}
    \end{align}
    \item[$(h)$] follows since the involved states are classical-quantum states;
    \item[$(i)$] follows from the definition of $\ubar{\D}_{1+\alpha}\big(\cdot\lVert\cdot\big)$ and the data processing inequality. 
\end{itemize}Now substituting \eqref{eq:Lemma1_Final} in \eqref{eq:lemma1_first} and considering the fact that $\log_2(1+x)\le\frac{x}{\ln2}$ completes the proof of Lemma~\ref{lemma:Resolvability_1}.

\section{Proof of Lemma~\ref{lemma:Decodability}}
\label{proof:lemma:Decodability}
We first prove \eqref{eq:Decoding_Lemma_1}, as provided in \eqref{eq:lemma_reliability_1} at the bottom of the next page,
\begin{figure*}[bh!]
\hrulefill
\begin{align}
    &\tra\left[\left(\bbI-\Pi_{UB}\right)\rho_{UB}\right]\nonumber\\
    &\mathop=\limits^{(a)}\tra\left[\left(\calE_{\rho_B}\left(\bbI-\Pi_{UB}\right)\right)\rho_{UB}\right]\nonumber\\
    &\mathop=\limits^{(b)}\tra\left[\left(\bbI-\Pi_{UB}\right)\calE_{\rho_B}(\rho_{UB})\right]\nonumber\\
    &=\tra\left[\left(\bbI-\Pi_{UB}\right)\left(\calE_{\rho_B}(\rho_{UB})\right)^{1-\alpha}\left(\calE_{\rho_B}(\rho_{UB})\right)^\alpha\right]\nonumber\\
    &\mathop\le\limits^{(c)}2^{\alpha(R_J+R_K+R)}\tra\left[\left(\bbI-\Pi_{UB}\right)\left(\calE_{\rho_B}(\rho_{UB})\right)^{1-\alpha}\left(\rho_U\otimes\rho_B\right)^\alpha\right]\nonumber\\
    &\mathop\le\limits^{(d)}2^{\alpha(R_J+R_K+R)}\tra\left[\left(\calE_{\rho_B}(\rho_{UB})\right)^{1-\alpha}\left(\rho_U\otimes\rho_B\right)^\alpha\right]\nonumber\\
    &=2^{\alpha(R_J+R_K+R)}\tra\left[\left(\left(\rho_U\otimes\rho_B\right)^{\frac{\alpha}{2(1-\alpha)}}\calE_{\rho_B}(\rho_{UB})\left(\rho_U\otimes\rho_B\right)^{\frac{\alpha}{2(1-\alpha)}}\right)^{1-\alpha}\right]\nonumber\\
    &=2^{\alpha(R_J+R_K+R)}\tra\left[\left(\left(\rho_U\otimes\rho_B\right)^{\frac{\alpha}{2(1-\alpha)}}\calE_{\rho_B}(\rho_{UB})\left(\rho_U\otimes\rho_B\right)^{\frac{\alpha}{2(1-\alpha)}}\right)\left(\left(\rho_U\otimes\rho_B\right)^{\frac{\alpha}{2(1-\alpha)}}\calE_{\rho_B}(\rho_{UB})\left(\rho_U\otimes\rho_B\right)^{\frac{\alpha}{2(1-\alpha)}}\right)^{-\alpha}
    \right]\nonumber\\
    &\mathop=\limits^{(e)}2^{\alpha(R_J+R_K+R)}\tra\left[\left(\left(\rho_U\otimes\rho_B\right)^{\frac{\alpha}{2(1-\alpha)}}\rho_{UB}\left(\rho_U\otimes\rho_B\right)^{\frac{\alpha}{2(1-\alpha)}}\right)\left(\left(\rho_U\otimes\rho_B\right)^{\frac{\alpha}{2(1-\alpha)}}\calE_{\rho_B}(\rho_{UB})\left(\rho_U\otimes\rho_B\right)^{\frac{\alpha}{2(1-\alpha)}}\right)^{-\alpha}
    \right]\nonumber\\
    &\mathop\le\limits^{(f)}v_B^\alpha2^{\alpha(R_J+R_K+R)}\tra\left[\left(\left(\rho_U\otimes\rho_B\right)^{\frac{\alpha}{2(1-\alpha)}}\rho_{UB}\left(\rho_U\otimes\rho_B\right)^{\frac{\alpha}{2(1-\alpha)}}\right)\left(\left(\rho_U\otimes\rho_B\right)^{\frac{\alpha}{2(1-\alpha)}}\rho_{UB}\left(\rho_U\otimes\rho_B\right)^{\frac{\alpha}{2(1-\alpha)}}\right)^{-\alpha}
    \right]\nonumber\\
    &=v_B^\alpha2^{\alpha(R_J+R_K+R)}\tra\left[\left(\left(\rho_U\otimes\rho_B\right)^\frac{\alpha}{2(1-\alpha)}\rho_{UB}\left(\rho_U\otimes\rho_B\right)^\frac{\alpha}{2(1-\alpha)}\right)^{1-\alpha}\right]\nonumber\\
    &=v_B^\alpha2^{\alpha(R_J+R_K+R)}2^{-\alpha\ubar{\D}_{1-\alpha}\big(\rho_{UB}\lVert\rho_U\otimes\rho_B\big)},\label{eq:lemma_reliability_1}
    \end{align}
    \hrulefill
    \setcounter{equation}{48}
    \begin{align}
    &\tra\left[\Pi_{UB}\left(\rho_U\otimes\rho_B\right)\right]\nonumber\\
    &=\tra\left[\Pi_{UB}\left(\rho_U\otimes\rho_B\right)^{1-\alpha}\left(\rho_U\otimes\rho_B\right)^\alpha\right]\nonumber\\
    &\mathop\le\limits^{(a)}2^{-(1-\alpha)(R_J+R_K+R)}\tra\left[\Pi_{UB}\left(\calE_{\rho_B}(\rho_{UB})\right)^{1-\alpha}\left(\rho_U\otimes\rho_B\right)^\alpha\right]\nonumber\\
    &\mathop\le\limits^{(b)}2^{-(1-\alpha)(R_J+R_K+R)}\tra\left[\left(\calE_{\rho_B}(\rho_{UB})\right)^{1-\alpha}\left(\rho_U\otimes\rho_B\right)^\alpha\right]\nonumber\\
    &=2^{-(1-\alpha)(R_J+R_K+R)}\tra\left[\left(\left(\rho_U\otimes\rho_B\right)^{\frac{\alpha}{2(1-\alpha)}}\calE_{\rho_B}(\rho_{UB})\left(\rho_U\otimes\rho_B\right)^{\frac{\alpha}{2(1-\alpha)}}\right)^{1-\alpha}\right]\nonumber\\
    &=2^{-(1-\alpha)(R_J+R_K+R)}\tra\!\left[\left(\left(\rho_U\otimes\rho_B\right)^{\frac{\alpha}{2(1-\alpha)}}\calE_{\rho_B}(\rho_{UB})\left(\rho_U\otimes\rho_B\right)^{\frac{\alpha}{2(1-\alpha)}}\right)\!\left(\!\left(\rho_U\otimes\rho_B\right)^{\frac{\alpha}{2(1-\alpha)}}\calE_{\rho_B}(\rho_{UB})\left(\rho_U\otimes\rho_B\right)^{\frac{\alpha}{2(1-\alpha)}}\right)^{-\alpha}
    \right]\nonumber\\
    &\mathop=\limits^{(c)}2^{-(1-\alpha)(R_J+R_K+R)}\tra\left[\left(\left(\rho_U\otimes\rho_B\right)^{\frac{\alpha}{2(1-\alpha)}}\rho_{UB}\left(\rho_U\otimes\rho_B\right)^{\frac{\alpha}{2(1-\alpha)}}\right)\left(\left(\rho_U\otimes\rho_B\right)^{\frac{\alpha}{2(1-\alpha)}}\calE_{\rho_B}(\rho_{UB})\left(\rho_U\otimes\rho_B\right)^{\frac{\alpha}{2(1-\alpha)}}\right)^{-\alpha}
    \right]\nonumber\\
    &\mathop\le\limits^{(d)}v_B^\alpha2^{-(1-\alpha)(R_J+R_K+R)}\tra\left[\left(\left(\rho_U\otimes\rho_B\right)^{\frac{\alpha}{2(1-\alpha)}}\rho_{UB}\left(\rho_U\otimes\rho_B\right)^{\frac{\alpha}{2(1-\alpha)}}\right)\left(\left(\rho_U\otimes\rho_B\right)^{\frac{\alpha}{2(1-\alpha)}}\rho_{UB}\left(\rho_U\otimes\rho_B\right)^{\frac{\alpha}{2(1-\alpha)}}\right)^{-\alpha}
    \right]\nonumber\\
    &=v_B^\alpha2^{-(1-\alpha)(R_J+R_K+R)}\tra\left[\left(\left(\rho_U\otimes\rho_B\right)^\frac{\alpha}{2(1-\alpha)}\rho_{UB}\left(\rho_U\otimes\rho_B\right)^\frac{\alpha}{2(1-\alpha)}\right)^{1-\alpha}\right]\nonumber\\
    &=v_B^\alpha2^{-(1-\alpha)(R_J+R_K+R)}2^{-\alpha\ubar{\D}_{1-\alpha}\big(\rho_{UB}\lVert\rho_U\otimes\rho_B\big)},\label{eq:lemma_reliability_2}
\end{align}
\setcounter{equation}{47}
\end{figure*}
    where
\begin{itemize}
    \item[$(a)$] follows since $\bbI-\Pi_{UB}$ commutes with $\rho_U\otimes\rho_B$ and therefore $\calE_{\rho_B}\left(\left(\bbI-\Pi_{UB}\right)\right)=\bbI-\Pi_{UB}$;
    \item[$(b)$] follows from \eqref{eq:Pinching_Switching};
    \item[$(c)$] follows from the definition of $\Pi_{UB}$ in \eqref{eq:Main_Projection} and since $f(x)=x^{s}$, for $s\in(0\,,1]$, is a matrix monotone function \cite[Section~1.5]{QIT_Hayashi};
    \item[$(d)$] follows since for $A,B,C\in\calP(\calH)$ and $C\le B$ we have $\tra(CA)\le\tra(BA)$, since $\tra((B-C)A)\ge0$ \cite[Lemma~B.5.2]{RennerDissertation};
    \item[$(e)$] follows since for two arbitrary states $\sigma\in\calD(\calH)$, with spectral decomposition $\sigma=\sum_i\lambda_i\den{x_i}{x_i}$, and $\rho\in\calD(\calH)$ we have $\tra\left[\calE_\sigma(\rho)\sigma\right]=\tra\left[\rho\sigma\right]$ since
        \begin{align}
        \tra[\rho\sigma]&=\sum_i\lambda_i\tra\left[\rho\den{x_i}{x_i}\right]\nonumber\\
        &=\sum_i\lambda_i\tra\left[\bra{x_i}\rho\ket{x_i}\right]\nonumber\\
        &=\tra\left[\calE_\sigma(\rho)\sigma\right];\label{eq:Pinching_Property_2}
    \end{align}
    \item[$(f)$] follows from \eqref{eq:Pinching_Inequlaity} and since for $A,B,C\in\calP(\calH)$ and $C\le B$ we have $\tra(CA)\le\tra(BA)$ \cite[Lemma~B.5.2]{RennerDissertation}, since for a hermitian matrix $S$ and $\rho\in\calP(\calH)$, $S\rho S$ is non-negative \cite[Lemma~B.5.1]{RennerDissertation}, and since $f(x)=x^{-s}$, for $s\in(0\,,1]$, is a matrix anti-monotone function \cite[Section~1.5]{QIT_Hayashi}.
\end{itemize}
    We now prove \eqref{eq:Decoding_Lemma_2}, as provided in \eqref{eq:lemma_reliability_2} at the bottom of the previous page, where
    \setcounter{equation}{49}
\begin{itemize}
    \item[$(a)$] follows from the definition of $\Pi_{UB}$ in \eqref{eq:Main_Projection} and since $f(x)=x^{s}$, for $s\in(0\,,1]$, is a matrix monotone function \cite[Section~1.5]{QIT_Hayashi};
    \item[$(b)$] follows since $\Pi_{UB}\le\bbI$ and for $A,B,C\in\calP(\calH)$ and $C\le B$ we have $\tra(CA)\le\tra(BA)$, since $\tra((B-C)A)\ge0$ \cite[Lemma~B.5.2]{RennerDissertation};
    \item[$(c)$] follows from \eqref{eq:Pinching_Property_2};
    \item[$(d)$] follows from \eqref{eq:Pinching_Inequlaity} and since for $A,B,C\in\calP(\calH)$ and $C\le B$ we have $\tra(CA)\le\tra(BA)$ \cite[Lemma~B.5.2]{RennerDissertation}, since for a hermitian matrix $S$ and $\rho\in\calP(\calH)$, $S\rho S$ is non-negative \cite[Lemma~B.5.1]{RennerDissertation}, and since $f(x)=x^{-s}$, for $s\in(0\,,1]$, is a matrix anti-monotone function \cite[Section~1.5]{QIT_Hayashi}.
\end{itemize}

\section{Proof of Theorem~\ref{thm:Capacitty_CSITR_C_Degraded}}
\label{App:Capacitty_CSITR_C_Degraded}
Consider any sequence of codes with length $n$ for state-dependent channels when the \ac{CSI} is available non-causally at both the transmitter and the receiver such that $P_e^{(n)}\le\epsilon_n$, where $\epsilon_n\xrightarrow[]{n\to\infty}0$, $ I(K;E^n)\le\eta_n$, where $\eta_n\xrightarrow[]{n\to\infty}0$, and $\bbD\left(P_{E^n}\lVert Q_0^{\otimes n}\right)\le\tilde{\epsilon}$. 
We first define $\calA^{(\epsilon)}$ for $\epsilon>0$ which expands the rate region defined in \eqref{eq:Capacitty_CSITR_C_A_Degraded} as
\begin{subequations}\label{eq:Converse_AD_epsilon_C_Degraded}
\begin{align}
  \calA^{(\epsilon)}= \left.\begin{cases}R,R_K\geq 0: \exists P_{SUABE}\in\calD^{(\epsilon)}:\\
  R\le I(U;B\lvert S)+\epsilon\\
  R_K\le I(U;B\lvert E,S)+H(S\lvert E)+\epsilon\\
  R+R_K\le I(U;B\lvert S)+H(S)+\epsilon\\
\end{cases}\right\},\label{eq:A_epsilon_C_CSITR_Degraded} \displaybreak[0]
\end{align}
where
\begin{align}
  \calD^{(\epsilon)}\triangleq \left.\begin{cases}P_{SUABE}:\\
P_{SUABE}=Q_SP_{U\lvert S}P_{X\lvert US}W_{BE\lvert AS}\\
\bbD\big(P_E\lVert Q_0\big)\le\epsilon\\
\end{cases}\right\}.\label{eq:D_epsilon_C_CSITR_Degraded}
\end{align}
\end{subequations}We now show that any achievable rate pair $(R,R_K)$ belongs to $\calA^{(\epsilon)}$. 
For any $\lambda,\mu>0$, and $\epsilon_n>0$ we have,
\begin{align}
    &nR_K\le H(K\lvert E^n)+\eta_n\nonumber\\
    &\mathop\le\limits^{(a)}H(K\lvert E^n)-H(K\lvert B^n,S^n)+n\epsilon_n+\eta_n\nonumber\\
    &\le H(K\lvert E^n)-H(K\lvert B^n,S^n,E^n)+n\epsilon_n+\eta_n\nonumber\\
    &= I(K;B^n,S^n\lvert E^n)+n\epsilon_n+\eta_n\nonumber\\
    &\le I(K;B^n\lvert E^n,S^n)+H(S^n\lvert E^n)+n\epsilon_n+\eta_n\nonumber\\ 
    &=\sum_{t=1}^n\left[ I(K;B_t\lvert E^n,S^n,B^{t-1})+H(S_t\lvert E^n,S^{t-1})\right]\nonumber\\
    &\qquad+n\epsilon_n+\eta_n\nonumber\\
    &\le\sum_{t=1}^n\left[\bbI\left(K,M,S^{t-1},S_{t+1}^n,B^{t-1},E^{t-1},E_{t+1}^n;B_t\lvert E_t,S_t\right)\right.\nonumber\\
    &\left.\qquad+H(S_t\lvert E_t)\right]+n\epsilon_n+\eta_n\nonumber\\
    &\mathop\le\limits^{(b)}\sum_{t=1}^n\left[\bbI\left(U_t;B_t\lvert E_t,S_t\right)+H(S_t\lvert E_t)\right]+n\epsilon_n+\eta_n\nonumber\\
    &=n\left[\bbI\left(U_T;B_T\lvert E_T,S_T,T\right)+H(S_T\lvert E_T,T)\right]+n\epsilon_n+\eta_n\nonumber\\
    &\le n\left[\bbI\left(U_T,T;B_T\lvert E_T,S_T\right)+H(S_T\lvert E_T)\right]+n\epsilon_n+\eta_n\nonumber \\
    &\mathop=\limits^{(c)} n\left[I(U;B\lvert E,S)+H(S\lvert E)+\epsilon_n+\frac{\eta_n}{n}\right]\nonumber\\
    &\mathop\le\limits^{(d)} n\left[I(U;B\lvert E,S)+H(S\lvert E)+2\epsilon\right],\label{eq:RK_Converse_Degr2}
\end{align}where
\begin{itemize}
    \item[$(a)$] follows from Fano's inequality;
    \item[$(b)$] follows by defining  \begin{align}
        U_i&\triangleq\left(K,M,S^{t-1},S_{t+1}^n,B^{t-1},E^{t-1},E_{t+1}^n\right);\label{eq:Ut_Degraded}
    \end{align}
    \item[$(c)$] follows by defining
    \begin{align}
        U\triangleq(U_T,T),\,S\triangleq S_T,\,B\triangleq B_T,\,E\triangleq E_T;\label{eq:TS_RVs_Degraded}
    \end{align}
     \item[$(d)$] follows by defining $\epsilon\triangleq\max\{\epsilon_n,\lambda,\mu\}$, in which we choose $n$ large enough such that $\lambda\ge\frac{\eta_n}{n}$ and $\mu\ge\frac{2\tilde{\epsilon}}{n}$.
\end{itemize}

We also have
\begin{align}
    &nR=H(M)\nonumber\\
    &\mathop\le\limits^{(a)} I(M;B^n,S^n)+n\epsilon_n\nonumber\\
    &\mathop=\limits^{(b)} I(M;B^n\lvert S^n)+n\epsilon_n\nonumber\\
    &=\sum_{t=1}^n I(M;B_t\lvert S^n,B^{t-1})+n\epsilon_n\label{eqc52}\\
    &\le\sum_{t=1}^n I(K,M,S^{t-1},S_{t+1}^n,B^{t-1},E^{t-1},E_{t+1}^n;B_t\lvert S_t)+n\epsilon_n\nonumber\\
    &\mathop=\limits^{(c)} I(U_t;B_t\lvert S_t)+n\epsilon_n\nonumber\\
    &\mathop\le\limits^{(d)}n I(U;B\lvert S)+n\epsilon,\label{eq:Upper_R_Degraded}
\end{align}where
\begin{itemize}
    \item[$(a)$] follows from Fano's inequality;
    \item[$(b)$] follows since the message and \ac{CSI} are independent;
    \item[$(c)$] follows from \eqref{eq:Ut_Degraded};
    \item[$(d)$] follows from \eqref{eq:TS_RVs_Degraded} and by defining $\epsilon$ as in \eqref{eq:RK_Converse_Degr2}. 
\end{itemize}

We also have
\begin{align}
    &n(R+R_K)=H(M,K)\nonumber\\
    &\mathop\le\limits^{(a)} I(M,K;B^n,S^n)+n\epsilon_n\nonumber\\
    &\le I(M,K;B^n\lvert S^n)+H(S^n)+n\epsilon_n\nonumber\\
    &\mathop=\limits^{(b)}\sum_{t=1}^n \left[I(M,K;B_t\lvert S^n,B^{t-1})+H(S_t)\right]+n\epsilon_n\nonumber \\
    &\le\sum_{t=1}^n \left[I(K,M,S^{t-1},S_{t+1}^n,B^{t-1},E^{t-1},E_{t+1}^n;B_t\lvert S_t)\right.\nonumber\\
    &\left.\qquad+H(S_t)\right]+n\epsilon_n\nonumber\\
    &\mathop=\limits^{(c)}\sum_{t=1}^n \left[I(U_t;B_t\lvert S_t)+H(S_t)\right]+n\epsilon_n\nonumber\\
    &\mathop\le\limits^{(d)}n\left[ I(U;B\lvert S)+H(S)\right]+n\epsilon,\label{eq:Converse_Sum_Degraded}
\end{align}where
\begin{itemize}
    \item[$(a)$] follows from Fano's inequality;
    \item[$(b)$] follows since the \ac{CSI} is \ac{iid};
    \item[$(c)$] follows from \eqref{eq:Ut_Degraded}; 
    \item[$(d)$] follows from \eqref{eq:TS_RVs_Degraded} and the definition of $\epsilon$ in \eqref{eq:RK_Converse_Degr2}. 
\end{itemize}Then, we have 
\begin{align}
\bbD(P_E\lVert Q_0)&=\bbD(P_{E_T}\lVert Q_0)\nonumber\\
&=\bbD\Bigg(\frac{1}{n}\sum\limits_{t=1}^nP_{E_t}\bigg\lVert Q_0\Bigg)\nonumber\\
&\leq\frac{1}{n}\sum\limits_{t=1}^n\bbD(P_{E_t}\lVert Q_0)\nonumber\\
&\leq\frac{1}{n}\bbD(P_{E^n}\lVert Q_0^{\otimes n})\nonumber\\
&\leq\frac{\tilde{\epsilon}}{n}\nonumber\\
&\leq\epsilon,\label{eq:Covertness_C_Degraded}
\end{align}where the first inequality follows from Jensen's Inequality. 
Combining \eqref{eq:RK_Converse_Degr2}, \eqref{eq:Upper_R_Degraded}, \eqref{eq:Converse_Sum_Degraded}, and \eqref{eq:Covertness_C_Degraded} shows that 
$\calC_{\textup{CC-CSK}} \subseteq\bigcap_{\epsilon>0}\mathcal{A}_{\epsilon}.$ 
The proof of continuity at zero of $\calA_\epsilon$ is similar to that of \cite[Appendix~F]{Keyless22} and is omitted.

\section{Converse Proof of Theorem~\ref{thm:Capacitty_CSITR_CS}}
\label{App:Capacitty_CSITR_CS}
Consider any sequence of codes with length $n$ for state-dependent channels when the \ac{CSI} is available non-causally at both the transmitter and the receiver such that $P_e^{(n)}\le\epsilon_n$, where $\epsilon_n\xrightarrow[]{n\to\infty}0$, $ I(M,K;E^n)\le\eta_n$, where $\eta_n\xrightarrow[]{n\to\infty}0$, and $\bbD\left(P_{E^n}\lVert Q_0^{\otimes n}\right)\le\tilde{\epsilon}$. 
 We first define $\calA^{(\epsilon)}$ for $\epsilon>0$ which expands the rate region defined in \eqref{eq:Capacitty_CSITR_CS_A} as
\begin{subequations}\label{eq:Converse_AD_epsilon_C}
\begin{align}
  &\calA^{(\epsilon)}=\nonumber\\
  &\left.\begin{cases}R,R_K\geq 0: \exists P_{SUABE}\in\calD^{(\epsilon)}:\\
  R\le I(U;B\lvert S)+\epsilon\\
  R+R_K\le I(U;B\lvert S)- I(U;E\lvert S)+H(S\lvert E)+\epsilon\\
\end{cases}\right\},\label{eq:A_epsilon_CS_CSITR}
\end{align}
where
\begin{align}
  \calD^{(\epsilon)}\triangleq \left.\begin{cases}P_{SUABE}:\\
P_{SUABE}=Q_SP_{U\lvert S}P_{X\lvert US}W_{BE\lvert AS}\\
\bbD\big(P_E\lVert Q_0\big)\le\epsilon\\
\end{cases}\right\}.\label{eq:D_epsilon_CS_CSITR}
\end{align}
\end{subequations}We now show that any achievable rate pair $(R,R_K)$ belongs to $\calA^{(\epsilon)}$. 
For any $\lambda,\mu>0$, $\epsilon_n>0$, we~have
\begin{align}
    nR&=H(M)\nonumber\\
    &\mathop\le\limits^{(a)}
    \sum_{t=1}^n I(M;B_t\lvert S^n,B^{t-1})+n\epsilon_n\nonumber\\
    &\leq\sum_{t=1}^n I(K,M,S^{t-1},S_{t+1}^n,B^{t-1},E_{t+1}^n;B_t\lvert S_t)+n\epsilon_n\nonumber\\
    &\mathop\le\limits^{(b)}n I(U;B\lvert S)+n\epsilon,\label{eq:Upper_R_CS}
\end{align}where
\begin{itemize}
    \item[$(a)$] follows similar to \eqref{eqc52};
    \item[$(b)$] follows with 
    \begin{subequations}
    \begin{align}
        U_i&\triangleq\left(K,M,S^{t-1},S_{t+1}^n,B^{t-1},E_{t+1}^n\right),\label{eq:Ut_CS}\\
        U&\triangleq(U_T,T),\,S\triangleq S_T,\,B\triangleq B_T,\,E\triangleq E_T,\label{eq:TS_RVs_CS}
        \end{align}
        and $\epsilon\triangleq\max\{\epsilon_n,\lambda,\mu\}$, where we choose $n$ large enough such that $\lambda\ge\frac{\eta_n}{n}$ and $\mu\ge\frac{2\tilde{\epsilon}}{n}$.
        \end{subequations}
\end{itemize}
\begin{figure*}[bh!]
\hrulefill
\setcounter{equation}{63}
\begin{align}
    &\bbE_{C_n}\bbD\left(\frac{1}{2^{nR}}\sum_{m}\rho_{E^n\lvert U^n(m)}\Big\lVert\rho_E^{\otimes n}\right)\nonumber \displaybreak[0]\\
    &=\bbE_{C_n}\tra\left(\frac{1}{2^{nR}}\sum_m\rho_{E^n\lvert U^n(m)}\left[\log\left(\frac{1}{2^{nR}}\sum_{m'}\rho_{E^n\lvert U^n(m')}\right)-\log\left(\rho_E^{\otimes n}\right)\right]\right)\nonumber\\
    &=\frac{1}{2^{nR}}\tra\left(\sum_m\bbE_{C_n}\rho_{E^n\lvert U^n(m)}\left[\log\left(\frac{1}{2^{nR}}\rho_{E^n\lvert U^n(m)}+\frac{1}{2^{nR}}\sum_{m'\ne m}\rho_{E^n\lvert U^n(m')}\right)-\log\left(\rho_E^{\otimes n}\right)\right]\right)\nonumber\\
    &\mathop\le\limits^{(a)}\frac{1}{2^{nR}}\tra\left(\sum_m\bbE_{U^n(m)}\rho_{E^n\lvert U^n(m)}\left[\log\left(\frac{1}{2^{nR}}\rho_{E^n\lvert U^n(m)}+\frac{1}{2^{nR}}\sum_{m'\ne m}\bbE_{U^n(m')}\rho_{E^n\lvert U^n(m')}\right)-\log\left(\rho_E^{\otimes n}\right)\right]\right)\nonumber\\
    &=\frac{1}{2^{nR}}\tra\left(\sum_m\bbE_{U^n(m)}\rho_{E^n\lvert U^n(m)}\left[\log\left(\frac{1}{2^{nR}}\rho_{E^n\lvert U^n(m)}+\frac{\left(2^{nR}-1\right)}{2^{nR}}\rho_E^{\otimes n}\right)-\log\left(\rho_E^{\otimes n}\right)\right]\right)\nonumber\\
    &\mathop\le\limits^{(b)}\frac{1}{2^{nR}}\tra\left(\sum_m\bbE_{U^n(m)}\rho_{E^n\lvert U^n(m)}\log\left(I+\frac{1}{2^{nR}}\rho_{E^n\lvert U^n(m)}\left(\rho_E^{\otimes n}\right)^{-1}\right)\right)\nonumber\\
    &\mathop\le\limits^{(c)}\frac{1}{2^{nR}}\tra\left(\sum_m\bbE_{U^n(m)}\rho_{E^n\lvert U^n(m)}\log\left(I+\frac{v_E}{2^{nR}}\calE_{\rho_E^{\otimes n}}\pr{\rho_{E^n\lvert U^n(m)}}\left(\rho_E^{\otimes n}\right)^{-1}\right)\right)\nonumber\\
    &=\frac{1}{\alpha2^{nR}}\tra\left(\sum_m\bbE_{U^n(m)}\rho_{E^n\lvert U^n(m)}\log\left(I+\frac{v_E}{2^{nR}}\calE_{\rho_E^{\otimes n}}\pr{\rho_{E^n\lvert U^n(m)}}\left(\rho_E^{\otimes n}\right)^{-1}\right)^\alpha\right)\nonumber\\
    &\mathop\le\limits^{(d)}\frac{1}{\alpha2^{nR}}\tra\left(\sum_m\bbE_{U^n(m)}\rho_{E^n\lvert U^n(m)}\frac{v_E^\alpha}{2^{\alpha nR}}\calE_{\rho_E^{\otimes n}}\pr{\rho_{E^n\lvert U^n(m)}}^\alpha\left(\rho_E^{\otimes n}\right)^{-\alpha}\right)\nonumber\\
    &\mathop=\limits^{(e)}\frac{v_E^\alpha}{\alpha2^{n\alpha R}}\tra\left(\bbE_{U^n}\rho_{E^n\lvert U^n}\calE_{\rho_E^{\otimes n}}\pr{\rho_{E^n\lvert U^n}}^{\alpha}\left(\rho_E^{\otimes n}\right)^{-\alpha}\right)\nonumber\\
    &=\frac{v_E^\alpha}{\alpha2^{n\alpha R}}\tra\left(\bbE_{U^n}\rho_{E^n\lvert U^n}\calE_{\rho_E^{\otimes n}}\pr{\calE_{\rho_E^{\otimes n}}\pr{\rho_{E^n\lvert U^n}}^{\alpha}\left(\rho_E^{\otimes n}\right)^{-\alpha}}\right)\nonumber\\
    &\mathop=\limits^{(f)}\frac{v_E^\alpha}{\alpha2^{n\alpha R}}\tra\left(\bbE_{U^n}\calE_{\rho_E^{\otimes n}}\pr{\rho_{E^n\lvert U^n}}^{1+\alpha}\left(\rho_E^{\otimes n}\right)^{-\alpha}\right)\nonumber\\
    &\mathop=\limits^{(g)}\frac{v_E^\alpha}{\alpha2^{n\alpha R}}\tra\left(\calE_{\rho_{U^n}\otimes\rho_E^{\otimes n}}\pr{\rho_{U^nE^n}}^{1+\alpha}\left(\rho_{U^n}\otimes\rho_E^{\otimes n}\right)^{-\alpha}\right)\nonumber\\
    &\mathop\le\limits^{(h)}\frac{v_E^\alpha}{\alpha}2^{\alpha(-nR+\ubar{\D}_{1+\alpha}(\rho_{U^nE^n};\rho_{U^n}\otimes \rho_E^{\otimes n}))}\nonumber\\
    &\mathop\le\limits^{(i)}\frac{v_E^\alpha}{\alpha}2^{\alpha n(-R+\ubar{\D}_{1+\alpha}(\rho_{UE};\rho_{U}\otimes \rho_{E}))},\label{eq:Resolv_Analysis}
\end{align}
\setcounter{equation}{60}
\end{figure*}
We also have,
\begin{align}
    &n(R+R_K)=H(M,K)\nonumber\\
    &\mathop\le\limits^{(a)}H(M,K)-H(M,K\lvert B^n,S^n)+n\epsilon_n\nonumber\\
    &\mathop\le\limits^{(b)} I(M,K;B^n,S^n)- I(M,K;E^n)+n\epsilon_n+\eta_n\nonumber\\
    &= I(M,K,S^n;B^n,S^n)- I(S^n;B^n,S^n\lvert M,K)\nonumber\\
    &\qquad - I(M,K,S^n;E^n)+I(S^n;E^n\lvert M,K)+n\epsilon_n+\eta_n\nonumber\\
    &= I(M,K,S^n;B^n,S^n)-H(S^n\lvert M,K)- I(M,K,S^n;E^n)\nonumber\\
    &\qquad+ I(S^n;E^n\lvert M,K)+n\epsilon_n+\eta_n\nonumber\\
    &\le I(M,K,S^n;B^n,S^n)- I(M,K,S^n;E^n)+n\epsilon_n+\eta_n\nonumber\\
    &=\sum_{t=1}^n\left[ I(M,K,S^n;B_t,S_t\lvert B^{t-1},S^{t-1})\right.\nonumber\\
    &\left.\qquad- I(M,K,S^n;E_t\lvert E_{t+1}^n)\right]+n\epsilon_n+\eta_n\nonumber\\
    &\mathop=\limits^{(c)}\sum_{t=1}^n\left[ I(M,K,S^n,E_{t+1}^n;B_t,S_t\lvert B^{t-1},S^{t-1})\right.\nonumber\\
    &\left.\qquad- I(M,K,S^n,B^{t-1},S^{t-1};E_t\lvert E_{t+1}^n)\right]+n\epsilon_n+\eta_n\nonumber\\
    &\le\sum_{t=1}^n\left[ I(M,K,S^n,E_{t+1}^n,B^{t-1},S^{t-1};B_t,S_t)\right.\nonumber\\
    &\left.\qquad- I(M,K,S^n,B^{t-1},S^{t-1};E_t\lvert E_{t+1}^n)\right]+n\epsilon_n+\eta_n\nonumber\\
    &\mathop\le\limits^{(d)}\sum_{t=1}^n\left[ I(M,K,S^n,E_{t+1}^n,B^{t-1},S^{t-1};B_t,S_t)\right.\nonumber\\
    &\left.\qquad- I(M,K,S^n,B^{t-1},S^{t-1},E_{t+1}^n;E_t)\right]+n\epsilon_n+\eta_n+\tilde{\epsilon}\nonumber\\
    &\mathop=\limits^{(e)}\sum_{t=1}^n\left[ I(U_t,S_t;B_t,S_t)- I(U_t,S_t;E_t)\right]\nonumber\\
    &\qquad+n\epsilon_n+\eta_n+\tilde{\epsilon}\nonumber\\
    &=n\left[ I(U_T,S_T;B_T,S_T\lvert T)- I(U_T,S_T;E_T\lvert T)\right]\nonumber\\
    &\qquad+n\epsilon_n+\eta_n+\tilde{\epsilon}\nonumber\\
    &\le n\left[ I(U_T,S_T,T;B_T,S_T)- I(U_T,S_T;E_T\lvert T)\right]\nonumber\\
    &\qquad+n\epsilon_n+\eta_n+\tilde{\epsilon}\nonumber\\
    &\mathop\le\limits^{(f)}n\left[ I(U_T,S_T,T;B_T,S_T)- I(U_T,S_T,T;E_T)\right]\nonumber\\
    &\qquad+n\epsilon_n+\eta_n+2\tilde{\epsilon}\nonumber\\
    &\mathop=\limits^{(g)}n\left[ I(U,S;B,S)- I(U,S;E)\right]+n\epsilon_n+\eta_n+2\tilde{\epsilon}\nonumber\\
    &=n\left[ I(U;B\lvert S)- I(U;E\lvert S)+H(S\lvert E)\right]+n\epsilon_n+\eta_n+2\tilde{\epsilon},\nonumber\\
    &= n\left[I(U;B\lvert S)- I(U;E\lvert S)+H(S\lvert E)+\epsilon_n+\frac{\eta_n}{n}+2\frac{\tilde{\epsilon}}{n}\right]\nonumber\\
    &\mathop\le\limits^{(h)} n\left[I(U;B\lvert S)- I(U;E\lvert S)+ H(S\lvert E)+3\epsilon\right],\label{eq:SumRate2_CS}
\end{align}where
\begin{itemize}
    \item[$(a)$] follows from Fano's inequality;
    \item[$(b)$] follows from the security constraint;
    \item[$(c)$] follows from  \cite[Lemma~7]{BCC:IT78};
    \item[$(d)$] and $(f)$ follow from \cite[Lemma~3]{Keyless22};
    \item[$(e)$] follows from \eqref{eq:Ut_CS};
    \item[$(g)$] follows from \eqref{eq:TS_RVs_CS};
    \item[$(h)$] follows by defining $\epsilon$ and in \eqref{eq:Upper_R_CS}.
\end{itemize}
Similar to \eqref{eq:Covertness_C_Degraded}, one can show that $\bbD\big(P_E\lVert Q_0\big)\le\epsilon$, 
which combined with \eqref{eq:Upper_R_CS} and \eqref{eq:SumRate2_CS}, shows that 
$\calC_{\textup{CSC-CSK}} \subseteq \bigcap_{\epsilon>0}\mathcal{A}_{\epsilon}.$  The proof of continuity at zero of $\calA_\epsilon$ is similar to that of \cite[Appendix~F]{Keyless22} and is omitted.

\section{Proof of Theorem~\ref{thm:Asymp_Causal}}
\label{Achi:Scheme_Causal}
Fix the distribution $p_U$, and the isometry $W_{US\to A}$, which takes states $U$ and  $S$ as inputs and outputs the state $A$.
\subsection{Codebook Generation}Let $C_n\triangleq\big(U^n(m)\big)_{m\in\calM}$, where $\calM\triangleq\sbra{1}{2^{nR}}$, be a random codebook generated \ac{iid} according~to $p_U$ 
and $\calC_n\triangleq\big(u^n(m)\big)_{m\in\calM}$ be a realization of the codebook $C_n$.

\subsection{Encoding and Decoding}
Given the message $m$ and the state $\bar{S}^i$, the encoder computes the codeword $U^n(m)$ first, and then applies the isometry $W_{U_i(m)\bar{S}_i\to A_i}$ on $U_i(m)$ and $\bar{S}_i$ and transmits $A_i$ over~$\calN_{AS\to BE}$, for $i\in\sbra{1}{n}$. 
Then, for any $\epsilon > 0$, by the Packing~Lemma~{\cite[Lemma~16.3.1]{Wilde_Book}}, there exists a set of \ac{POVM} operators $\{\Lambda_m\}_{m \in \calM_n}$ and a constant $c > 0$ such that, averaged over the random codebook ensemble, the probability of error is upper bounded by 
\begin{align}
 &1-\bbE_{C_n}\left[\frac{1}{\abs{\calM_n}}\sum_{m\in\calM_n}\tra\left(\Lambda_m\rho_{B^n}^{U^n(m)}\right)\right]\nonumber\\
 &\le2(\epsilon+2\sqrt{\epsilon})+4\times2^{n(R-I(U;B)+2c\epsilon)},\label{eq:PError}
\end{align}where the inequality holds by \cite[Lemma~16.3.1]{Wilde_Book} and the \ac{RHS} of \eqref{eq:PError} vanishes as $n$ grows~when
\begin{align}
    R<I(U;B). \label{eq:Decodability}
\end{align}

\subsection{Covertness Analysis}
The following lemma is essential in the covertness analysis.
\begin{lemma}
    \label{lemma:Resolvability_C}
    Let $\rho_{UE}\triangleq\sum_{u\in\calU}p_U(u)\den{u}{u}\otimes\rho_{E\lvert u}$ be a classical-quantum state. Also, let $C_n\triangleq\{U^n(m)\}_{m\in\calM_n}$, where $\calM_n\triangleq\left[2^{nR}\right]$, be a set of random variables in which $U^n(m)$ is generated \ac{iid} according to $p_U$. Then, for $R>I(U;E)$,
    \begin{align}
        \bbE_{C_n}\bbD\left(\frac{1}{2^{nR}}\sum_{m}\rho_{E^n\lvert U^n(m)}\Big\lVert\rho_E^{\otimes n}\right)\xrightarrow[n\to\infty]{}0.\nonumber
    \end{align}
\end{lemma}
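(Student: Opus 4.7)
The plan is to bound the expected Umegaki relative entropy $\bbE_{C_n}\bbD(\tau_{E^n\lvert C_n}\lVert\rho_E^{\otimes n})$, with $\tau_{E^n\lvert C_n}\triangleq\frac{1}{2^{nR}}\sum_m\rho_{E^n\lvert U^n(m)}$, by deriving a single-letter exponential bound in terms of the sandwiched R\'enyi divergence and then invoking its continuity at $\alpha=0$ so that the limiting quantity becomes $I(U;E)=\bbD(\rho_{UE}\lVert\rho_U\otimes\rho_E)$. The argument parallels the proof of Lemma~\ref{lemma:Resolvability_1}, but now directly targets $\bbD$ rather than $\ubar{\D}_{1+\alpha}$, so the final continuity step can be carried out on the Umegaki quantity without any further conversion.

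I would carry out the following steps. First, expand $\bbD(\tau_{E^n\lvert C_n}\lVert\rho_E^{\otimes n})=\tra\bigl[\tau_{E^n\lvert C_n}(\log\tau_{E^n\lvert C_n}-\log\rho_E^{\otimes n})\bigr]$ and, using the independence of the codewords $\{U^n(m)\}_m$ together with operator concavity of the logarithm via Jensen's inequality, replace the off-diagonal sum $\sum_{m'\ne m}\rho_{E^n\lvert U^n(m')}$ by its expectation $(2^{nR}-1)\rho_E^{\otimes n}$. Second, rewrite the resulting expression as $\bbE_{U^n}\tra\bigl[\rho_{E^n\lvert U^n}\log\bigl(\bbI+2^{-nR}\rho_{E^n\lvert U^n}(\rho_E^{\otimes n})^{-1}\bigr)\bigr]$, and apply the pinching inequality $\rho_{E^n\lvert U^n}\preceq v_E\,\calE_{\rho_E^{\otimes n}}(\rho_{E^n\lvert U^n})$ in combination with the elementary bound $\log(1+x)\le x^{\alpha}/\alpha$ for $\alpha\in(0,1)$ and the identity $\tra[A\,\calE_\sigma(B)]=\tra[\calE_\sigma(A)\,B]$, so as to pull the exponent $1+\alpha$ onto the pinched state. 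Third, interpreting the resulting trace through the data processing inequality for $\ubar{\D}_{1+\alpha}$ and its additivity on product states, this chain (which is exactly the one carried out in \eqref{eq:Resolv_Analysis}) yields
\begin{align*}
\bbE_{C_n}\bbD\bigl(\tau_{E^n\lvert C_n}\lVert\rho_E^{\otimes n}\bigr)\le\frac{v_E^{\alpha}}{\alpha}\,2^{\alpha n(-R+\ubar{\D}_{1+\alpha}(\rho_{UE}\lVert\rho_U\otimes\rho_E))}.
\end{align*}

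The remaining, and main, obstacle is a careful asymptotic calibration of the joint $(n,\alpha)$-limit. Since $\rho_E^{\otimes n}$ has at most $(n+1)^{d_E}$ distinct eigenvalues, the prefactor obeys $v_E^{\alpha}\le(n+1)^{\alpha d_E}$, and by continuity of the sandwiched R\'enyi divergence at $\alpha=0$ one has $\ubar{\D}_{1+\alpha}(\rho_{UE}\lVert\rho_U\otimes\rho_E)\to I(U;E)$ as $\alpha\to 0^+$. We must therefore choose a vanishing sequence $\alpha_n\to 0$ that simultaneously (i) keeps $(n+1)^{\alpha_n d_E}/\alpha_n$ sub-exponential in $n$ and (ii) preserves the strict inequality $\ubar{\D}_{1+\alpha_n}(\rho_{UE}\lVert\rho_U\otimes\rho_E)<R$ for all sufficiently large $n$. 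Taking, for instance, $\alpha_n=1/\log(n+1)$ makes the prefactor only logarithmic in $n$, while the exponent $\alpha_n n\bigl(-R+\ubar{\D}_{1+\alpha_n}(\rho_{UE}\lVert\rho_U\otimes\rho_E)\bigr)\sim -(R-I(U;E))\,n/\log(n+1)$ diverges to $-\infty$, so the bound vanishes whenever $R>I(U;E)$. Balancing these two competing rates is the only delicate point; everything else is a direct repetition of the one-shot calculation specialized to the product channel state.
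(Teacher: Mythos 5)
Your proposal takes essentially the same route as the paper's proof: the identical chain of codeword independence plus Jensen, the pinching inequality, the bound $\log(1+x)\le x^{\alpha}/\alpha$, the pinching-swap identity, and data processing for $\ubar{\D}_{1+\alpha}$, arriving at the same single-letter bound $\frac{v_E^{\alpha}}{\alpha}2^{\alpha n\left(-R+\ubar{\D}_{1+\alpha}\left(\rho_{UE}\lVert\rho_U\otimes\rho_E\right)\right)}$ as in \eqref{eq:Resolv_Analysis}. Your explicit calibration $\alpha_n=1/\log(n+1)$, using $v_E\le(n+1)^{d_E}$ and continuity of $\ubar{\D}_{1+\alpha}$ at $\alpha=0$, is correct and simply makes rigorous the paper's terse concluding step that the right-hand side vanishes as $n\to\infty$, $\alpha\to0$ whenever $R>I(U;E)$.
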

\begin{proof}
    See Appendix~\ref{proof:lemma:Resolvability_C}.
\end{proof}
To prove the covertness of our code design, we bound $\bbE_{C_n}\bbD\left(\hat{\rho}_{E^n}\lVert\rho_E^{\otimes n}\right)$, where $\hat{\rho}_{E^n}$ is the distribution induced by our code design and is $\hat{\rho}_{E^n}=\frac{1}{2^{nR}}\sum_m\rho_{E^n\lvert U^n(m)}$, and then choose $p_U$ and $W_{US\to A}$ such that $\rho_E=\rho_0$. Now from Lemma~\ref{lemma:Resolvability_C} and \cite[Theorem~11.9.1]{Wilde_Book}, we have
\begin{align} \bbE_{C_n}\left\lVert\hat{\rho}_{E^n}-\rho_E^{\otimes n}\right\rVert_1\xrightarrow[n\to\infty]{}0,\nonumber
\end{align}if $R>I(U;Z)$, which with \eqref{eq:Decodability} completes the proof of Theorem~\ref{thm:Asymp_Causal}.

\section{Proof of Lemma~\ref{lemma:Resolvability_C}}
\label{proof:lemma:Resolvability_C}
We have, \eqref{eq:Resolv_Analysis} provided at the bottom of the previous page, 
where
\begin{itemize}
    \item[$(a)$] follows from the linearity of the expectation, the law of total expectations, Jensen's inequality, and since the symbols $U(i)$ and $U(i')$, for $i\ne i'$, are independent;
    \item[$(b)$] follows since $\log$ is a matrix monotone function;
    \item[$(c)$] follows from \eqref{eq:Pinching_Inequlaity} and $v_E$ is upper bounded as in \eqref{eq:vS_vB};
    \item[$(d)$] follows since $\calE_{\rho_E^{\otimes n}}\pr{\rho_{E^n\lvert U^n(m)}}$ commutes with $\rho_E^{\otimes n}$ and for $a,b\in\bbR^+$, $0<\alpha\le1$, we have $\frac{1}{\alpha}\log(1+x)^\alpha\le\frac{1}{\alpha}\log(1^\alpha+x^\alpha)\le\frac{x^\alpha}{\alpha}$;
    \item[$(e)$] follows from the symmetry of the codebook construction \ac{wrt} the message;
    \item[$(f)$] follows from \eqref{eq:Pinching_Switching};
    \item[$(g)$] follows since the states are classical-quantum;
    \item[$(h)$] follows from the definition of the sandwiched R\'{e}nyi relative entropy and since for two  states $\rho,\sigma\in\calD(\calH)$ and $\calE(\cdot):\calL(A)\to\calL(B)$, we have $\ubar{\D}_{1+\alpha}\pr{\calE(\rho);\calE(\sigma)}\le\ubar{\D}_{1+\alpha}\pr{\rho;\sigma}$;
    \item[$(i)$] follows since from the codebook construction $U^n$ is an \ac{iid} sequence.
\end{itemize}Therefore, when $n\to\infty$ and $\alpha\to0$ the \ac{RHS} of \eqref{eq:Resolv_Analysis} vanishes if $ R>I(U;E)$.

\end{appendices}
\bibliographystyle{IEEEtran}
\bibliography{IEEEabrv,bibfile}

\end{document}